\newcounter{dummy} 
\numberwithin{dummy}{section}
\theoremstyle{plain}
\newtheorem{proposition}[dummy]{Proposition}
\theoremstyle{plain}
\newtheorem{lemma}[dummy]{Lemma}
\theoremstyle{plain}
\newtheorem{remark}[dummy]{Remark}
\theoremstyle{plain}
\theoremstyle{plain}
\newtheorem{theorem}[dummy]{Theorem}
\newtheorem{corollary}[dummy]{Corollary}
\theoremstyle{nonumberplain}
\newtheorem{proof}{Proof}
\definecolor{darkgreen}{rgb}{0.0,0.5,0.0}
\newcommand{\F}[1][R]{\mathbb{#1}} 
\newcommand{\setsep}{\ \middle|\ } 
\newcommand{\Eucprod}[2]{\left\langle #1,#2 \right\rangle} 
\newcommand{\Ltwonorm}[1]{\left\Vert #1 \right\Vert} 
\newcommand{\SE}{\mathrm{SE}}
\newcommand{\SO}{\mathrm{SO}}
\newcommand{\so}{\mathfrak{so}}
\newcommand{\Sp}{\mathop{\mathrm{Span}}}
\newcommand{\LtwoStwo}{L^{2} \left(S^{2} \right)}
\newcommand{\re}{\mathfrak{Re}}
\newcommand{\im}{\mathfrak{Im}}
\newcommand{\SubmerSE}{\eta_{\SE(2)}}
\newcommand{\SubmerSO}{\eta_{\SO(3)}}
\newcommand{\GrTr}{f}
\newcommand{\MaxTrans}{T_{\mathrm{max}}}
\newcommand{\ImageSpace}{L^{2}_{c} \left( \F^{2} \right)}
\title{Compactification of the Rigid Motions Group\\in Image Processing}
\author{Tamir Bendory, Ido Hadi, Nir Sharon}
\begin{document}
	
	\maketitle
	
	\begin{abstract}
		Image processing problems in general, and in particular in the field of single-particle cryo-electron microscopy, often require considering images up to their rotations and translations. Such problems were tackled successfully when considering images up to rotations only, using quantities which are invariant to the action of rotations on images. Extending these methods to cases where translations are involved is more complicated.  Here we present a computationally feasible and theoretically sound approximate invariant to the action of rotations and translations on images. It allows one to approximately reduce image processing problems to similar problems over the sphere, a compact domain acted on by the group of 3D rotations, a compact group. We show that this invariant is induced by a family of mappings deforming, and thereby compactifying, the group structure of rotations and translations of the plane, i.e., the group of rigid motions, into the group of 3D rotations. Furthermore, we demonstrate its viability in two image processing tasks: multi-reference alignment and classification. To our knowledge, this is the first instance of a quantity that is either exactly or approximately invariant to rotations and translations of images that both rests on a sound theoretical foundation and also applicable in practice.
	\end{abstract}

	\section{Introduction}
	In image processing applications, there is often a need to consider images up to rotation, up to translation or up to both. 
	Two examples of this are found in multi-reference alignment (MRA) and image classification problems. In one of its more well-studied forms, the aim in MRA is to estimate an image up to rotation from a dataset of noisy rotated copies of it \cite{Bandeira2017,Bandeira2020,Ma2020,Perry2019}. 
	In classification problems of the type we consider in this paper, the aim is to estimate the similarity up to rotation between every pair of images in a noisy dataset. 
	In both cases, rather than using rotations, the problems can be considered over translations or both rotations and translations.
	
	Rotations and translations of the plane form a group, $ \SE(2) $, usually referred to as rigid motions or the special Euclidean group. This group is non-compact with an algebraic structure of a semi-direct product of the translation and rotation groups. The group structure enables one to formalize the notion of considering an image up to rotation, translation, or both in group-theoretic terms. In particular, an orbit of an image consists of all shifted and rotated copies of the image. Thus, the group action induces an equivalence relation on any set of images where the orbits act as the equivalence classes. Therefore, we are often required to consider only the orbits' representatives. 
	
	Using this group-theoretic formalism, an MRA problem amounts to estimating an orbit, or more accurately a representative of an orbit, from noisy samples of members of that orbit. Classification problems amount to estimating the minimal distance between orbits from noisy samples of members of these orbits.
	Formalizing these problems in this way enables well-developed concepts from group theory to be brought to bear on them. 
	In particular, it leads to a wider applicability of what we refer to here as the invariants approach to orbit-related problems. 
	Central to this approach are quantities that are invariant to the action of a group on an image.
	Formally, these invariants are operators on images that are constant on every orbit.
	It is trivial to find such invariant operators. For example, every operator that is constant on images is also invariant to any action by any group on it.
	However, combining insights from group theory and harmonic analysis on groups, a wide variety of non-trivial invariants were derived which are not only constant on every orbit, but also uniquely determine an orbit \cite{Kakarala1992,Bendory2018,Kakarala2010,Ma2020,Bandeira2017,Bandeira2020,Perry2019}. These invariants can be thought of as injective functions on the set of orbits of a group action and therefore essentially characterize the set of orbits of the action of a group on an image.
	
	Using these invariants yielded a computationally inexpensive solution to the MRA \cite{Ma2020} and classification \cite{Zhao2014} problems with respect to the rotation group. 
	In both cases, an $ \SO(2) $-invariant was evaluated on every image in the dataset, resulting in an $ \SO(2) $-invariant representation of every image in the dataset. 
	In \cite{Ma2020}, the MRA problem  was handled by using these representations to estimate the $ \SO(2) $-invariant representation of the underlying image, and then inverted to yield an estimate of the image itself, up to rotation.
	In \cite{Zhao2014}, the classification problem was tackled by taking the distance between pairs of images to be the distance between their $ \SO(2) $-invariant representations. 
	
	In both \cite{Ma2020} and \cite{Zhao2014}, as well as in our work, the motivating real-world application is single-particle reconstruction in cryogenic electron microscopy (cryo-EM). In cryo-EM, the goal is to reconstruct the three-dimensional structure of biological macromolecules from images generated by an electron microscope.
	While  \cite{Ma2020} and \cite{Zhao2014} handled MRA and classification, respectively, up to rotation only, in cryo-EM, as in other applications, there is often a need to consider images up to both rotations and translations \cite{Bendory2020}.
	In particular, the images used to estimate the three-dimensional structure can be considered up to rotations and translations.
	One would therefore like to generalize the approach of  \cite{Ma2020} and \cite{Zhao2014} by using an orbit-characterizing $ \SE(2) $-invariant.
	
	Unfortunately, applying the invariant approach to MRA and classification up to both rotation and translations is considerably harder. 
	Deriving computationally feasible orbit-characterizing invariants for the rigid motions group $ \SE(2) $ is hard, since this group is hard to analyze algebraically for two reasons. 
	First, it is non-compact and harmonic analysis over non-compact groups is considerably more complicated than over compact groups, such as $ \SO(n) $, the rotation group of $ \F^{n} $. 
	Second, as a semi-direct product it has a rather complicated algebraic structure.
	Thus, so far as we know, no computationally feasible orbit-characterizing invariants with sound theory behind them are known for $ \SE(2) $.
	
	Since working over compact group is simpler, both theoretically and in practice, we seek to compactify orbit-related problems over the rigid motions group. Specifically, we wished to approximately reduce orbit-related image processing problems over $ \SE(2) $ to similar problems over compact groups. 
	This approach has worked in the past. A heuristically derived approximate $ \SE(2) $-invariant was applied by Kondor \cite{Kondor2007} to an image classification problem. The invariant was computed by projecting images from the plane onto the sphere and computing an $ \SO(3) $-invariant on the sphere.
	Similarly, in \cite{Sharon2018} a synchronization problem over $ \SE(2) $ itself, a problem where the objective was to estimate elements of $ \SE(2) $, was approximately reduced to a problem over the compact group $ \SO(3) $ (for more information on synchronization, see \cite{Carlone2015,Rosen2019,Tron2016}).
	
	Here we extend the theoretical approach of \cite{Sharon2018} to derive and theoretically justify the use of the approximate invariant used by \cite{Kondor2007}.
	We show the projection used by \cite{Kondor2007} is induced by contraction maps, a family of mappings from $ \SE(2) $ to $ \SO(3) $ that rigorously  formalize the notion of continuously deforming one group into another. 
	We then prove this projection maps an orbit of the action of $ \SE(2) $ on images to approximately the same orbit of the action of $ \SO(3) $ on the space of functions on the sphere.
	We then apply our approximate $ \SE(2) $-invariant to MRA and classification problems over $ \SE(2) $ and show they provide viable solutions on simulated data, thus demonstrating its potential.
	Our MATLAB code to reproduce all numerical experiments is available at \url{https://github.com/idohadi/CompactificationImageProcessing}.
	To our knowedge, we provide here the first instance of a quantity that is either exactly or approximately $ \SE(2) $-invariant on images that both rests on a sound theoretical foundation and also computable in practice.

	\subsection{Structure of the paper}
	We begin in \Cref{CompactifiactionTheorySection} by constructing a projection of compactly supported functions on the plane onto functions on the sphere
	and use it to define an approximate $ \SE(2) $ invariant.
	After deriving the theory, we discuss in \Cref{CompactifiactionComputationSection} several computational aspects of projecting discrete images onto the sphere and of calculating their spherical bispectrum. 
	In particular, we demonstrate numerically that our approximate $ \SE(2) $ invariant is deserved to be named so.
	In the rest of the paper, we apply our approximate $ \SE(2) $ invariant to two problems of interest. 
	First, in \Cref{MRASection}, we flesh out a modified invariants approach to MRA using our approximate $ \SE(2) $ invariant.
	We perform several numerical experiments demonstrating our approach is viable.
	Second, in \Cref{ClassificationSection} we discuss image classification up to rotation and translation. 
	We use our approximate $ \SE(2) $ invariant to derive an approximately $ \SE(2) $-invariant measure of the distance between images. We then demonstrate by numerical experiments that our approach is able to identify similar images up to rotations and translations and can do so even for large translation sizes.
	For both MRA and classification, we show that our approach outperforms methods that take only rotations into account.

	\section{Compactifiaction of functions on the plane: theoretical aspects}
	\label{CompactifiactionTheorySection}
	
	Given a function on the plane, we wish to define a function on the sphere which preserves, in some sense, two key properties of the original function. First, the original function needs to be recoverable from its spherical counterpart. Second, the group action of $ \SE(2) $ on the original function needs to be reflected, in some sense, in the group action of $ \SO(3) $ on its spherical counterpart. 
	
	In this section, we construct a mapping between functions on the plane and functions on the sphere which has these two properties. Center to this construction is a special family of maps, called contraction maps, between the Lie algebras of $ \SE(2) $ and $ \SO(3) $. 
	Because the subgroup structure of a Lie group is recapitulated in its Lie algebra, contraction maps provide a sense in which one can continuously deform one Lie group into another \cite{Dooley1984}. We use these maps to construct and study the properties of the mapping described above.
	
	\Cref{NotationSection} introduces the notation used throughout the paper. In particular, we introduce explicitly the group-theoretic formalization of considering images up to rotations, translations or both.
	\Cref{ContractionMapsSection} introduces contraction maps in general and the family of contraction maps between $ \SE(2) $ and $ \SO(3) $ we use in particular. In \Cref{ProjectionToSphereSection}, we use this family of contraction maps to induce a map between functions on the plane and functions on the sphere. 
	We also provide an explicit formula for this map and prove that the original function on the plane can be recovered from its spherical counterpart. 
	In \Cref{ApproximationTheoremSection}, we study the manner in which the group action on the sphere is related to the original group action on the plane. In particular, we show that a member of the orbit of the original image is mapped to a neighborhood of the orbit of its spherical counterpart. This neighborhood gets at most exponentially larger with the size of the translational part of the element of $ \SE(2) $ acting on the original function. 
	In the final section, \Cref{SphericalBispectrumSection}, we introduce the spherical bispectrum of a function on the sphere, an invariant to the action of $ \SO(3) $ that uniquely determines functions on the sphere, up to an action of $ \SO(3) $. Taken together, we show that when it is evaluated on images projected onto the sphere, the spherical bispectrum is an approximate $ \SE(2) $ invariant.

	\subsection{Notation}
	\label{NotationSection}
	
	Throughout the paper, lowercase letters are scalars, lowercase boldface letters are column vectors and uppercase boldface letters are matrices. 
	The standard Euclidean norm is $ \Ltwonorm{\cdot} = \Ltwonorm{\cdot}_{2} $, where $ \Ltwonorm{\mathbf{x}}_{2} = \sqrt{\sum_{j=1}^{n} \left|x_{j}\right|^{2}} $ for $ \mathbf{x} = \left( x_{1}, x_{2}, \dots, x_{n} \right)^{\top} $ in $ \F^{n} $ or in $ \F[C]^{n} $. 
	The open $ r $-ball in $ \F^{2} $ is $ B_{r} \coloneqq \left\{ \mathbf{x} \in \F^{2} \setsep \Ltwonorm{\mathbf{x}} < r \right\} $. The closed $ r $-ball is denoted by $ \overline{B}_{r} \coloneqq \left\{ \mathbf{x} \in \F^{2} \setsep \Ltwonorm{\mathbf{x}} \le r \right\} $.
	Finally, $ \partial B_{r} \coloneqq \left\{ \Ltwonorm{\mathbf{x}} \in \F^{2} \setsep \Ltwonorm{\mathbf{x}} = r \right\} $.
	
	The elementwise complex conjugate of a $ n \times m $ matrix $ \mathbf{M} $ is denote by $ \mathbf{M}^{*}  $. Thus, for example $ \mathbf{x}^{*} = \left( x_{1}^{*}, x_{2}^{*}, \dots,x_{n}^{*}\right)^{\top} $ for every $ \mathbf{x} \in \F[C]^{n} $.
	The transpose of an $ n \times m $ matrix $ \mathbf{M} $ is denoted by $ \mathbf{M}^{\top} $ and its adjoint is denoted by $ \mathbf{M}^{\dag} $. Note that $ \mathbf{M}^{\dag} = \left(\mathbf{M}^{*}\right)^{\top} $.
	
	The sphere is the unit $ 2 $-sphere, $ S^{2} \coloneqq \left\{\mathbf{x} \in \F^{3} \setsep \Ltwonorm{\mathbf{x}} = 1 \right\} $.
	The north pole of the sphere is denoted by $\mathbf{n} \coloneqq (0, 0, 1)^{\top} $.
	The $ d_{S^{2}} \left(\mathbf{x}_{1}, \mathbf{x}_{2}\right) = \arccos\left(\mathbf{s}_{2}^{\top} \mathbf{s}_{1}\right) $ is the great-circle distance metric on the sphere.
	$ \SO(3) $ acts on a function on the sphere $ f :S^{2} \to \F[C] $ by $ R \bullet f (\mathbf{s}) = f \left( R^{\top} \mathbf{s}\right) $.
	
	$ \SO(2) $ is the group of rotations of $ \F^{2} $ and $ \SO(3) $ is the group of rotations of $ \F^{3} $ or equivalently of $ S^{2} $. 
	The group of rigid motions, also referred to here as the group of rotations and translations of the plane, is denoted by $ \SE(2) = \F^{2} \rtimes \SO(2) $. 
	If $ g = \left(\mathbf{x}, R\right) \in \SE(2) $, we refer to $ \mathbf{x} $ as the translational part of $ g $ and to $ R $ as the rotational part.
	
	As we mentioned in the introduction, rotations and translations of the plane form a group, enabling one to formalize the notion of considering an image up to rotation, translation or both in group-theoretic terms.
	If $ G $ is a (multiplicative) group, we say it acts on a set $ X $ if every $ g \in G $ induces a map $ X \to X $ denoted by $ x \mapsto g \bullet x $ such that $  g_{1} \bullet \left( g_{2} \bullet x \right) = \left( g_{1} g_{2} \right) \bullet x $ for all $ g_{1}, g_{2} \in G $ and $ x \in X $.
	An orbit of the action of $ G $ on $ X $ is the set $ G \bullet x = \left\{ g \bullet x \setsep g \in G \right\} $ of all elements of $ X $ that can be reached by applying $ G $ to $ x $.
	It is easy to prove that $ G \bullet x_{1} = G \bullet x_{2} $ for $ x_{1},x_{2} \in X $ if and only if $ x_{1} = g \bullet x_{2} $ for some $ g \in G $. Therefore, the action of a group on a set induces an equivalence relation on the set with the orbits  functioning as the equivalence classes.
	Considering an image up to rotation, translation or both amounts to considering it as a representative of its orbit with respect to the action of the rotation group, the translation group or the translation and rotation group, respectively, on the space of images.
	
	In order to formalize the latter statement, we model images, as is conventional, as compactly supported, square-integrable real-valued functions on the plane and denote by $ \ImageSpace $ the space of such functions.
	The rotation group of the plane $ \SO(2) $ acts on $ \ImageSpace $ by $ f \mapsto R \bullet f $, 
	where $ R \bullet f (\mathbf{x}) = f \left( R^{\top} \mathbf{x} \right) $. 
	The (additive) translation group of the plane $ \F^{2} $ acts on $ \ImageSpace $ by $ f \mapsto \mathbf{b} \bullet f $, where $ \mathbf{b} \bullet f (\mathbf{x}) = f \left( \mathbf{x} - \mathbf{b} \right) $. 
	The group of translations and rotations of the plane $ \SE(2) $, usually referred to as the group of rigid motions of the plane or the special Euclidean group of the plane, has a more complicated algebraic structure. It is the semi-direct product of the translation group and the rotation group, $ \SE(2) = \F^{2} \rtimes \SO(2) $, and so  its group operation is $ (\mathbf{b}_{1}, R_{1})  (\mathbf{b}_{2}, R_{2}) = (\mathbf{b}_{1} + R_{1} \mathbf{b}_{2}, R_{1} R_{2}) $.
	This group acts on $ \F^{2} $ by $ (\mathbf{b}, R) \bullet \mathbf{x} =  R \mathbf{x} + \mathbf{b} $
	and on on 
	$ C_{c} \left( \F^{2} \right) $ by 
	$ f \mapsto \left(\mathbf{b}, R\right) \bullet f $, 
	where $ \left(\mathbf{b}, R\right) \bullet f (\mathbf{x}) = f \left( (\mathbf{b}, R)^{-1} \mathbf{x} \right) $.
	The inverse of $ (\mathbf{b}, R) \in \SE(2) $ can readily be shown to be $ \left(\mathbf{b}, R\right)^{-1} = \left( -R^{\top} \mathbf{b}, R^{\top} \right) $.

	\subsection{Contraction maps}
	\label{ContractionMapsSection}
	A popular construction in physics obtains a Lie algebra as the limit of a sequence of Lie algebras. This construction was first made rigorous in a landmark paper \cite{Inonu1953}. Dooley \cite{Dooley1984} gave a coordinate-free formulation of this construction using contraction maps. This section follows his approach.
	
	Consider a semisimple compact Lie group $ G $ and its Lie algebra $ \mathfrak{g} $ with Lie brackets $ \left[ \cdot, \cdot \right] $. Let $ G_{0} \le G $ be a continuous subgroup of $ G $ and denote its Lie algebra by $ \mathfrak{g}_{0} $. Since $ G_{0} \le G $, Lie correspondence ensures that $ \mathfrak{g}_{0} $ is a Lie subalgebra of $ \mathfrak{g} $. 
	If $ \mathfrak{g} = \mathfrak{g}_{0} \oplus V $, where
	\begin{equation*}
		\left[\mathfrak{g}_{0}, V\right] 
		= \left\{ \left[g_{0}, v \right]  \setsep g_{0} \in \mathfrak{g}_{0}, v \in V \right\}
		\subset V ,
	\end{equation*}
	define for every $ \lambda > 0 $ the contraction map of $ G $ with respect to $ G_{0} $:
	\begin{equation*}
		\label{AbstractContractionMap}
		\Psi_{\lambda} : V \rtimes G_{0} \to G \mbox{ such that }
		\left( v, g_{0} \right) \mapsto \exp_{G} \left( \frac{v}{\lambda}\right) g_{0}.
	\end{equation*}
	As noted in \cite{Dooley1984}, treated as distinct groups, both $ V \rtimes G_{0} $ and $ G $ have $ \mathfrak{g} $ as the underlying vector space of their Lie algebra. Yet, one can obtain the Lie brackets of $ V \rtimes G_{0} $ by transforming the Lie brackets of $ G $ using $ \Psi_{\lambda} $ and taking the limit $ \lambda \to \infty $. In this sense, the family of the contraction maps $ \left\{ \Psi_{\lambda} \setsep \lambda>0\right\} $ enables one to continuously deform one Lie algebra to produce another in the limit.
	
	In our case, $ G = \SO(3) $ and $ G_{0} = \SO(2) $, where $ \so (3) $ is the space of skew-symmetric $ 3 \times 3 $ real matrices equipped with the commutator as the Lie bracket; that is, the space of real $ 3  \times 3 $ matrices satisfying $ S^{\top} = - S^{\top} $ with the Lie bracket $ \left[ S_{1}, S_{2} \right] = S_{1} S_{2} - S_{2} S_{1} $.  Therefore, as a vector space $ \so (3) = \Sp \left\{ S_{i,j} \setsep 1 \le i < j \le 3 \right\} $, where $ S_{i,j} $ is a real skew-symmetric $ 3 \times 3 $ matrix with $ 1 $ and $ -1 $ at the $ (i,j) $ and $ (j,i) $ elements, respectively, and zero otherwise. Denoting $ S_{1} = S_{1,2} $, $ S_{2} = S_{1,3} $ and $ S_{3} = S_{2,3} $, one can easily show that 
	\begin{equation*}
		\left[ S_{1}, S_{2} \right] = - S_{3}, \ 
		\left[ S_{2}, S_{3} \right] = - S_{1} 
		\mbox{ and }
		\left[ S_{1}, S_{3} \right] = S_{2}.
	\end{equation*}
	Thus, $ \mathfrak{g}_{0} = \so(2) \cong \Sp \left\{ S_{1}\right\} $ and $ V \cong \Sp \left\{ S_{2}, S_{3} \right\} \cong \F^{2} $. Therefore, the contraction map of $ \SO(3) $ with respect to $ \SO(2) $ is 
	\begin{equation}
		\label{ContractionMap}
		\Psi_{\lambda} : \SE(2) \to \SO(3) 
		\mbox{ such that } 
		\left( \mathbf{x}, R \right) \mapsto \exp_{\SO(3)} \left( \frac{\mathbf{x}}{\lambda}\right) R.
	\end{equation}
	Crucially for what follows, this contraction map is smooth, because the exponential map and the group operation map $ \SO(3) \times \SO(3) \to \SO(3) $ are all smooth.
	
	In \cite{Sharon2018}, the contraction map \eqref{ContractionMap} was used to reduce a synchronization problem over $ \SE(n) $ to a synchronization problem over $ \SO(n) $. In particular, \cite[Prop. 3.4]{Sharon2018} shows that \eqref{ContractionMap} can be treated as an approximate group homomorphism. While their result holds for a general category of groups, we state it here for $ \SO(3) $:
	\begin{theorem}
		\label{NirsTheorem}
		Let $ g_{j} = (\mathbf{x}_{j}, R_{j}) \in \SE(2) $ ($ j=1,2 $). 
		Then $ \Psi_{\lambda} (g^{-1} ) = \Psi_{\lambda} (g)^{-1} $ for all $ g \in \SE(2) $ 
		and if also $ \Ltwonorm{\mathbf{x}_{1}} + \Ltwonorm{\mathbf{x}_{2}} \le 0.59 \lambda $, then for all $ g_{1}, g_{2} \in \SE(2) $
		\begin{equation}
			\label{NirsTheoremInequality}
			\Ltwonorm{\Psi_{\lambda} (g_{1} g_{2}) - \Psi_{\lambda} (g_{1}) \Psi_{\lambda} (g_{2}) }_{F}
			\le \frac{C}{\lambda^{2}}, \mbox{ as } \lambda \to \infty,
		\end{equation}
		while $ C $ is a constant independent of $ \lambda $, but dependent on $ g_{1} $ and $ g_{2} $.
		Here $ \Ltwonorm{\cdot}_{F} $ is the Frobenius norm and the codomain of $ \Psi_{\lambda} $ is explicitly the $ 3 \times 3 $ real orthogonal matrices with determinant~$ 1 $.
	\end{theorem}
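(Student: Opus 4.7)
My plan is to prove the two assertions separately, using the explicit formula \eqref{ContractionMap} together with two facts from Lie theory: the intertwining identity $R\exp_{\SO(3)}(X)R^{\top} = \exp_{\SO(3)}(\mathrm{Ad}(R)\,X)$, and the Baker--Campbell--Hausdorff (BCH) formula. A preliminary step, to be made explicit, is to record that under the identification $V \cong \F^{2}$ established above, the conjugation action of $R \in \SO(2) \hookrightarrow \SO(3)$ on $V$ coincides with the usual action of $R$ on $\F^{2}$; that is, $\mathrm{Ad}(R)\,\mathbf{x} = R\mathbf{x}$ after identifying $V$ with $\F^{2}$. This is routine from the bracket relations $[S_{1},S_{2}] = -S_{3}$ and $[S_{1},S_{3}] = S_{2}$ listed in \Cref{ContractionMapsSection}.

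For the inverse identity, I would compute directly: for $g = (\mathbf{x}, R)$, we have $g^{-1} = (-R^{\top}\mathbf{x}, R^{\top})$, so
\begin{equation*}
\Psi_{\lambda}(g^{-1})
= \exp_{\SO(3)}\!\Bigl(-\tfrac{R^{\top}\mathbf{x}}{\lambda}\Bigr) R^{\top}
= R^{\top} \exp_{\SO(3)}\!\Bigl(-\tfrac{\mathbf{x}}{\lambda}\Bigr) R R^{\top}
= R^{\top} \exp_{\SO(3)}\!\Bigl(-\tfrac{\mathbf{x}}{\lambda}\Bigr)
= \Psi_{\lambda}(g)^{-1},
\end{equation*}
where the second equality uses the intertwining identity. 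This part is essentially a one-line verification.

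For the quantitative approximation, the key manipulation is to push the rotation past the exponential: using the same intertwining identity,
\begin{equation*}
\Psi_{\lambda}(g_{1})\Psi_{\lambda}(g_{2})
= \exp_{\SO(3)}\!\Bigl(\tfrac{\mathbf{x}_{1}}{\lambda}\Bigr) R_{1} \exp_{\SO(3)}\!\Bigl(\tfrac{\mathbf{x}_{2}}{\lambda}\Bigr) R_{2}
= \exp_{\SO(3)}\!\Bigl(\tfrac{\mathbf{x}_{1}}{\lambda}\Bigr) \exp_{\SO(3)}\!\Bigl(\tfrac{R_{1}\mathbf{x}_{2}}{\lambda}\Bigr) R_{1}R_{2},
\end{equation*}
while $\Psi_{\lambda}(g_{1}g_{2}) = \exp_{\SO(3)}\!\bigl(\tfrac{\mathbf{x}_{1} + R_{1}\mathbf{x}_{2}}{\lambda}\bigr) R_{1}R_{2}$. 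Since the Frobenius norm is invariant under right multiplication by the orthogonal matrix $R_{1}R_{2}$, it suffices to bound
\begin{equation*}
\Bigl\|\exp_{\SO(3)}\!\Bigl(\tfrac{\mathbf{x}_{1} + R_{1}\mathbf{x}_{2}}{\lambda}\Bigr) - \exp_{\SO(3)}\!\Bigl(\tfrac{\mathbf{x}_{1}}{\lambda}\Bigr)\exp_{\SO(3)}\!\Bigl(\tfrac{R_{1}\mathbf{x}_{2}}{\lambda}\Bigr)\Bigr\|_{F}.
\end{equation*}
Setting $X = \mathbf{x}_{1}/\lambda$ and $Y = R_{1}\mathbf{x}_{2}/\lambda$, the hypothesis $\Ltwonorm{\mathbf{x}_{1}} + \Ltwonorm{\mathbf{x}_{2}} \le 0.59\lambda$ places $\|X\| + \|Y\|$ inside the standard BCH convergence region, so $\exp(X)\exp(Y) = \exp(Z)$ with $Z = X + Y + \tfrac{1}{2}[X,Y] + \cdots$ and $\|Z - X - Y\| \le C_{0}\|X\|\|Y\|$ for an absolute constant $C_{0}$. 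Applying the mean-value estimate for $\exp$ on the bounded set in question then gives $\|\exp(X+Y) - \exp(Z)\|_{F} \le C_{1}\|Z - (X+Y)\| \le C/\lambda^{2}$, where $C$ depends on $\mathbf{x}_{1}, \mathbf{x}_{2}$ through $\|X\|\|Y\|$ and on the Lipschitz constant of $\exp$ on the relevant compact region, but not on $\lambda$.

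The main obstacle is the bookkeeping around BCH: ensuring that the constant $0.59$ genuinely comes out of the convergence domain for the BCH series in the Frobenius-normed matrix algebra, and that the Lipschitz bound on $\exp$ can be taken uniform in $\lambda \to \infty$ once $\|X\|, \|Y\|$ are bounded. Everything else reduces to the intertwining identity and direct expansion; since the statement is cited from \cite{Sharon2018}, I would either import its bound verbatim or reproduce the BCH estimate with explicit constants using the standard Goldberg--Dynkin bound.
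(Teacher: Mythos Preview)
Your approach is correct, and the reduction you perform---conjugating the middle exponential past $R_{1}$ and stripping off the orthogonal factor $R_{1}R_{2}$---is exactly what the paper does in its own analog of this theorem, \Cref{ApproximateCumm2}. Note, however, that the paper does not itself prove \Cref{NirsTheorem}: it is quoted from \cite{Sharon2018}, and your BCH-based argument is presumably close to that reference's original proof, which is where the constraint $0.59\lambda$ (a BCH convergence radius) originates.

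Where the paper diverges from you is in the final estimate on $\bigl\|\exp(X+Y) - \exp(X)\exp(Y)\bigr\|$. Rather than invoking BCH plus a Lipschitz bound on $\exp$, the paper (in \Cref{ApproximateCumm1}) expands both terms as power series directly, observes that the $n=0,1$ contributions cancel, and bounds the tail by $2e^{\|X\|+\|Y\|} - e^{\|X\|} - e^{\|Y\|} - \|X\| - \|Y\|$. This is more elementary than BCH, yields a fully explicit constant, and---crucially---requires no smallness hypothesis on $\|X\|+\|Y\|$. That is exactly why \Cref{ApproximateCumm2} improves on \Cref{NirsTheorem} by dropping the $0.59\lambda$ condition. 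Your BCH route is perfectly adequate for the theorem as stated, but the power-series route buys you both a cleaner constant and a stronger result.
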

	
	In the following sections, we extend this result to functions on homogeneous spaces of $ \SE(2) $ and $ \SO(3) $. In our extension, \Cref{GroupActionApproxTheorem}, the dependence on $ \lambda $ is no longer asymptotic and it is no longer required that $ \Ltwonorm{\mathbf{x}_{1}} + \Ltwonorm{\mathbf{x}_{2}} \le 0.59 \lambda $.
	
	\subsection{A mapping between functions on the plane and functions on the sphere}
	\label{ProjectionToSphereSection}
	In this section, we prove that the contraction map \eqref{ContractionMap} between $ \SE(2) $ and $ \SO(3) $ induces an invertible mapping between function spaces on homogeneous spaces of these groups. 
	A homogeneous space is a space a group acts on transitively, i.e., where every two elements in the space are related by the action of an element of the group.
	Every homogeneous space $ H $ of a some group $ G $ is $ H \cong G / G_{0} $ for some subgroup $ G_{0} \le G $ (see, e.g., \cite[Chp. II(4)]{Helgason1978}). In particular, $ \F^{2} \cong \SE(2) / \SO(2)  $ and $ S^{2} \cong \SO(3) / \SO(2) $. 
	Denote their quotient maps by $ \SubmerSE : \SE(2) \to \F^{2} $ and $ \SubmerSO : \SO(3) \to S^{2} $, respectively. 
	It is important to note that in the latter case, we identify $ \SO(2) $ with the subgroup of $ \SO(3) $ preserving the north pole $\mathbf{n} = (0, 0, 1)^{\top} $.
	Therefore, we write $ \SubmerSE(\mathbf{x}, R) = \mathbf{x} $ and $ \SubmerSO (R) = R \mathbf{n} $.
	
	Using this notation, we prove the existence of a projection of compactly supported functions on the plane onto functions on the sphere, or more informally, a projection of images from the plane onto the sphere. This projection is shown to be induced by the contraction map \eqref{ContractionMap}. Stated formally, we prove the following:
	\begin{theorem}
		\label{ProjectionTheorem}
		Fix $ \lambda > 0  $. 
		If $ f : \F^{2} \to \F $ is a smooth function compactly supported within $ B_{ \lambda \pi } = \left\{ \mathbf{x} \in \F^{2} \setsep \Ltwonorm{\mathbf{x}} < \lambda \pi \right\} $, then:
		\begin{itemize}
			\item \label{ProjectionTheoremExistence} \textsc{Existence.}
			There is a smooth function $ \kappa_{\lambda} f : S^{2} \to \F $ satisfying
			\begin{equation}
				\label{CommutativeDiagram}
				\kappa_{\lambda} f \circ \SubmerSO \circ \Psi_{\lambda} 
				= f \circ \SubmerSE ,
			\end{equation}
			or  equivalently, that the diagram in \autoref{ProofDiagram} commutes.
			
			\item \label{ProjectionTheoremUniqueness} \textsc{Uniqueness.}
			This $ \kappa_{\lambda} f $ is the only smooth function on the sphere satisfying \eqref{CommutativeDiagram}.
		\end{itemize}
	\end{theorem}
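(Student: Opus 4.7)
The plan is to unwind the composition $\SubmerSO \circ \Psi_{\lambda}$ into an explicit map from the plane to the sphere, recognize it as the Riemannian exponential map of $S^{2}$ at $\mathbf{n}$ (rescaled by $\lambda$), and then define $\kappa_{\lambda} f$ by pulling $f$ back through its inverse on the image cap, extending by zero through the antipode $-\mathbf{n}$.

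First, I would use that $\SO(2)$ is embedded in $\SO(3)$ as the stabilizer of $\mathbf{n}$, so $\SubmerSO(\Psi_{\lambda}(\mathbf{x}, R)) = \exp_{\SO(3)}(\mathbf{x}/\lambda)\mathbf{n}$, independent of the rotational part $R$. Writing the skew-symmetric matrix $(x_{1}/\lambda) S_{2} + (x_{2}/\lambda) S_{3}$ out and invoking Rodrigues' formula (which terminates as a finite series because this matrix cubes to $-(\Ltwonorm{\mathbf{x}}/\lambda)^{2}$ times itself), I obtain the closed form
\begin{equation*}
\phi_{\lambda}(\mathbf{x}) \coloneqq \exp_{\SO(3)}(\mathbf{x}/\lambda)\mathbf{n} = \left(\frac{x_{1} \sin(\Ltwonorm{\mathbf{x}}/\lambda)}{\Ltwonorm{\mathbf{x}}},\ \frac{x_{2} \sin(\Ltwonorm{\mathbf{x}}/\lambda)}{\Ltwonorm{\mathbf{x}}},\ \cos(\Ltwonorm{\mathbf{x}}/\lambda)\right)^{\!\top}\!.
\end{equation*}
This is precisely the Riemannian exponential map of $S^{2}$ at $\mathbf{n}$ applied to the tangent vector $\mathbf{x}/\lambda$. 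Consequently, $\phi_{\lambda}$ restricts to a smooth diffeomorphism from $B_{\lambda\pi}$ onto $S^{2} \setminus \{-\mathbf{n}\}$, since the Riemannian exponential of $S^{2}$ is a diffeomorphism from the open tangent-space disk of radius $\pi$ onto the sphere minus the antipode.

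For existence, I set $\kappa_{\lambda} f(\mathbf{s}) \coloneqq f(\phi_{\lambda}^{-1}(\mathbf{s}))$ for $\mathbf{s} \in S^{2} \setminus \{-\mathbf{n}\}$ and $\kappa_{\lambda} f(-\mathbf{n}) \coloneqq 0$. Smoothness on $S^{2} \setminus \{-\mathbf{n}\}$ is immediate as a composition of smooth maps, and \eqref{CommutativeDiagram} then holds tautologically for $\mathbf{x} \in B_{\lambda\pi}$ by the construction $\kappa_{\lambda} f \circ \phi_{\lambda} = f$. For uniqueness, any smooth solution agrees with $\kappa_{\lambda} f$ on $\phi_{\lambda}(B_{\lambda\pi}) = S^{2} \setminus \{-\mathbf{n}\}$, a dense subset of $S^{2}$, so continuity forces agreement at $-\mathbf{n}$ as well.

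The main obstacle is smoothness of $\kappa_{\lambda} f$ at the antipode $-\mathbf{n}$, where $\phi_{\lambda}^{-1}$ itself does not extend. Here the compact support hypothesis is essential: since $\mathrm{supp}(f)$ is a compact subset of the open disk $B_{\lambda\pi}$, there exists $r < \lambda\pi$ with $\mathrm{supp}(f) \subset \overline{B}_{r}$, so $\kappa_{\lambda} f$ vanishes identically on the open spherical cap $S^{2} \setminus \phi_{\lambda}(\overline{B}_{r})$ containing $-\mathbf{n}$. This renders $\kappa_{\lambda} f$ trivially smooth at the antipode. Without the assumption that the support is strictly interior to $B_{\lambda\pi}$, $\phi_{\lambda}$ degenerates along $\partial B_{\lambda\pi}$ and $\kappa_{\lambda} f$ would generally fail to extend smoothly through $-\mathbf{n}$.
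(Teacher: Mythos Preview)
Your proof is correct and more direct than the paper's. The paper proceeds abstractly: it shows $\SubmerSO \circ \Psi_{\lambda}$ is a smooth submersion, characterizes its fibers, lifts $f$ to a fiber-constant $\widetilde{f}$ on all of $\SE(2)$, and then invokes the general descent theorem for smooth submersions (Lee, Thm.~4.30) to produce $\kappa_{\lambda} f$; the explicit formula and uniqueness are derived only afterward as separate results. You instead compute $\phi_{\lambda}$ explicitly at the outset, recognize it as the rescaled Riemannian exponential of $S^{2}$ at $\mathbf{n}$, and read off existence, smoothness, and uniqueness directly from the fact that this map is a diffeomorphism $B_{\lambda\pi}\to S^{2}\setminus\{-\mathbf{n}\}$ together with the compact-support hypothesis near $-\mathbf{n}$. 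Your route is more elementary and self-contained, bypassing the submersion machinery and the auxiliary Extension Lemma entirely; the paper's route is more structural and would transfer more readily to other contraction settings where no clean exponential-map identification is available. One caveat worth noting: you verify \eqref{CommutativeDiagram} only for $\mathbf{x}\in B_{\lambda\pi}$, but this is in fact all that can hold for nonzero $f$, since $\phi_{\lambda}$ wraps the outer annuli back over the support of $\kappa_{\lambda} f$; the paper's own proof shares exactly this limitation, so it reflects a looseness in the theorem statement rather than a defect in your argument.
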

	Throughout the paper, we refer to $ \kappa_{\lambda} f $ as the projection of $ f $ onto the sphere and to $ \lambda $ as the scaling parameter of the projection.
	
	\autoref{ProofDiagram} can be taken as a visual demonstration of \Cref{ProjectionTheorem}. In it, we see two equivalent paths from $ \SE(2) $ to $ \F $. The blue path passes through the quotient map and a given function $ f : \F^{2} \to \F $. In \Cref{ProjectionTheorem}, we prove the existence and uniqueness of the green path, the one passing through $ \SO(3) $ and $ S^{2} $. Most of this latter path is already given to us. We use the properties of the contraction map $ \Psi_{\lambda} $ and the quotient map $ \SO(2) \to S^{2} $ to complete the green path.
	
	\begin{figure}
		\begin{center}
			\begin{tikzcd}
				\F^{2} \rtimes \SO(2) \arrow[d, "\SubmerSE" ,blue] \arrow[rr, "\Psi_{\lambda}",darkgreen] & & \SO(3) \arrow[d, "\SubmerSO" ,darkgreen] \\
				\F^{2}\arrow[dr, "f ",blue] & & S^{2} \arrow[dl, dashed, darkgreen, "\kappa_{\lambda} f "] \\
				& \F &  
			\end{tikzcd}
			\caption{\small\textbf{\textit{Diagram of \Cref{ProjectionTheorem}.}} There are two equivalent paths from $ \SE(2) $ to $ \F $. The blue path passes through a function on the plane. We complete the green path by proving the existence of $ \kappa_{\lambda}f $ (dashed).}
			\label{ProofDiagram}
		\end{center}
	\end{figure}
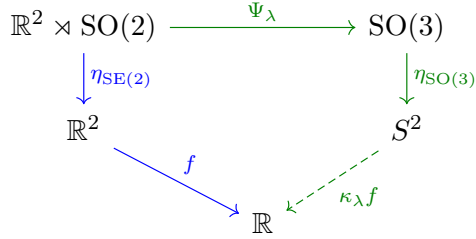
	
	\begin{remark}
		In this paper, smooth functions are $ C^{\infty} $ functions and all smooth manifolds are also $ C^{\infty} $. \Cref{ProjectionTheorem} and several other results in this paper are proved only for smooth functions. While this limits their generality, it is in fact consistent with the our primary motivation. In cryo-EM, the potential density maps are all smooth. Therefore, restricting ourselves to smooth functions,  allows us to leverage the well-developed and well-known theory of smooth manifolds, while still ensuring the results retain their potential applicability.
	\end{remark}
	
	Our proof of \Cref{ProjectionTheorem} relies on several lemmas. We state the lemmas here and leave their proofs to \Cref{ProofsAppendix}. The first lemma provides an important property of the quotient map $ \SubmerSE $ and of $ \SubmerSO \circ \Psi_{\lambda}  $. Recall that a function $ f : M \to N $ between smooth manifolds $ M $ and $ N $ is a smooth submersion if it is a smooth function with surjective differential at every point in its domain; that is, $ d_{p} f : T_{p} M \to T_{f(P)} N $ is surjective for every $ p \in M $, where $ T_{p} M $ and $ T_{f(p)} N $ are the tangent spaces of $ M $ and $ N $ at $ p $ and $ f(p) $, respectively.
	\begin{lemma}
		\label{SmoothSubmersionLemma}
		The maps
		$ \SubmerSE $ and $ \SubmerSO \circ \Psi_{\lambda} $  are smooth submersions.
	\end{lemma}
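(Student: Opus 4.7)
My plan is to handle the two submersion claims separately: the first is essentially a matter of recognizing a projection, while the second reduces, via a crucial algebraic observation, to a concrete Jacobian computation.

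For $ \SubmerSE $, note that although $ \SE(2) = \F^{2} \rtimes \SO(2) $ carries a twisted (semidirect) group law, as a smooth manifold it is simply the Cartesian product $ \F^{2} \times \SO(2) $. The map $ \SubmerSE(\mathbf{x}, R) = \mathbf{x} $ is therefore the projection onto the first factor. Smoothness is immediate, and the differential at any $ (\mathbf{x}, R) $ is the projection $ T_{\mathbf{x}}\F^{2} \oplus T_{R}\SO(2) \to T_{\mathbf{x}}\F^{2} $, which is visibly surjective.

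For $ \SubmerSO \circ \Psi_{\lambda} $, smoothness is inherited from that of $ \Psi_{\lambda} $ (noted below \eqref{ContractionMap}) and of the Lie-group quotient map $ \SubmerSO : \SO(3) \to S^{2} $. For the differential, I would exploit the identification of $ \SO(2) \subset \SO(3) $ with the stabilizer of $ \mathbf{n} $. Since $ R\mathbf{n} = \mathbf{n} $ for every $ R \in \SO(2) $,
\begin{equation*}
	\bigl( \SubmerSO \circ \Psi_{\lambda} \bigr)(\mathbf{x}, R)
	= \exp_{\SO(3)}(\mathbf{x}/\lambda) R \mathbf{n}
	= \exp_{\SO(3)}(\mathbf{x}/\lambda) \mathbf{n},
\end{equation*}
so the composition is independent of $ R $ and factors as $ \SubmerSO \circ \Psi_{\lambda} = q_{\lambda} \circ \SubmerSE $, where $ q_{\lambda} : \F^{2} \to S^{2} $ is defined by $ q_{\lambda}(\mathbf{x}) = \exp_{\SO(3)}(\mathbf{x}/\lambda)\mathbf{n} $. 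Since $ \SubmerSE $ is already a submersion, the problem reduces to showing $ q_{\lambda} $ is one.

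To analyze $ q_{\lambda} $, I would apply Rodrigues' rotation formula to $ X = x_{1} S_{2} + x_{2} S_{3} \in \so(3) $. Using $ X \mathbf{n} = (x_{1}, x_{2}, 0)^{\top} $ and $ X^{2} \mathbf{n} = -\Ltwonorm{\mathbf{x}}^{2}\mathbf{n} $, a short calculation yields the closed form
\begin{equation*}
	q_{\lambda}(\mathbf{x})
	= \left( \tfrac{\sin(\Ltwonorm{\mathbf{x}}/\lambda)}{\Ltwonorm{\mathbf{x}}} x_{1}, \ \tfrac{\sin(\Ltwonorm{\mathbf{x}}/\lambda)}{\Ltwonorm{\mathbf{x}}} x_{2}, \ \cos(\Ltwonorm{\mathbf{x}}/\lambda) \right)^{\top},
\end{equation*}
which one recognizes as the geographic parametrization of $ S^{2} $ with polar angle $ \Ltwonorm{\mathbf{x}}/\lambda $ and azimuth $ \arg(\mathbf{x}) $; equivalently, $ q_{\lambda} $ is the Riemannian exponential map of the symmetric space $ S^{2} = \SO(3)/\SO(2) $ at $ \mathbf{n} $, precomposed with the scaling $ \mathbf{x} \mapsto \mathbf{x}/\lambda $. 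A direct Jacobian computation (cleanest in polar coordinates) then shows $ d q_{\lambda} $ is surjective throughout the open ball $ B_{\lambda\pi} $, which is the domain on which \Cref{ProjectionTheorem} uses this lemma. The main obstacle is purely technical: at $ \mathbf{x} = 0 $ the polar parametrization degenerates, so one must either verify surjectivity at the origin separately (using the Cartesian expansion $ q_{\lambda}(\mathbf{x}) = \mathbf{n} + (x_{1}/\lambda, x_{2}/\lambda, 0)^{\top} + O(\Ltwonorm{\mathbf{x}}^{2}) $) or invoke the standard fact that the Riemannian exponential of a symmetric space is a local diffeomorphism at the basepoint.
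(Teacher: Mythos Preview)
Your approach and the paper's are essentially the same: both dispatch $\SubmerSE$ as a product projection (the paper phrases it as the orbit map of $\mathbf{0}$ and cites Lee), and both reduce $\SubmerSO\circ\Psi_{\lambda}$ to the explicit closed form for $\exp_{\SO(3)}(\mathbf{x}/\lambda)\mathbf{n}$ and then check rank. The organizational difference is that you factor the second map as $q_{\lambda}\circ\SubmerSE$ and identify $q_{\lambda}$ with the Riemannian exponential of $S^{2}$ at $\mathbf{n}$, whereas the paper introduces a polar chart $P$ on $\SE(2)$ and the spherical chart $S$ on $S^{2}$ and observes that $S^{-1}\circ\SubmerSO\circ\Psi_{\lambda}\circ P(\zeta,\theta,\phi)=(\theta,\zeta/\lambda)$ is linear; the computation underlying both is identical (and is exactly \Cref{ExplicitSubmersionLemma}). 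Your treatment is in one respect more careful: you flag that polar coordinates degenerate at $\mathbf{x}=\mathbf{0}$ and handle it via the Cartesian first-order expansion, and you restrict the submersion claim to $B_{\lambda\pi}$, which is genuinely necessary since $dq_{\lambda}$ drops rank on the circles $\Ltwonorm{\mathbf{x}}=n\lambda\pi$ (an entire circle collapses to a pole). The paper's proof glosses over both points by asserting that $P$ and $S$ are local diffeomorphisms without qualification; your restriction is what is actually used in \Cref{ProjectionTheorem}.
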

	
	Since 
	$ \SubmerSO (R) = R \mathbf{n} $, the mapping $ \SubmerSO \circ \Psi_{\lambda} $ satisfies
	\begin{equation}
		\label{EtaSOPsiLambdaFormula}
		\SubmerSO \circ \Psi_{\lambda} \left( \mathbf{x}, R \right)
		= \exp_{\SO(3)} ( \mathbf{x} ) R \mathbf{n}
		= \exp_{\SO(3)} ( \mathbf{x} ) \mathbf{n},
	\end{equation}
	where the last transition follows from our identification of $ \SO(2) \le \SO(3) $ with the subgroup of rotations preserving $ \mathbf{n} $.
	Overall, it follows that $ \SubmerSO \circ \Psi_{\lambda} (\mathbf{x}, R) $ depends only on $ \mathbf{x} $ and can be treated as a function mapping the plane onto the sphere. \autoref{MapVisualDemonstration} visually demonstrates how this mapping works, whereas \Cref{SmoothSubmersionLemmaFibers} formally characterize its fibers, the pre-image of singletons:
	\begin{lemma}
		\label{SmoothSubmersionLemmaFibers} Fix $ \lambda > 0 $. Let $ \mathbf{s} \in S^{2} \setminus \left\{ \pm \mathbf{n} \right\}$. Denote $ \widetilde{\mathbf{x}} = \left( \phi \cos \theta, \phi \sin \theta \right)^{\top} $, where $ (\theta, \phi) $ are the spherical coordinates of $ \mathbf{s} $. The fibers of $ \SubmerSO \circ \Psi_{\lambda} $ are 
		\begin{align}
			\label{Fiber1}
			\left( \SubmerSO \circ \Psi_{\lambda} \right)^{-1} \left( \mathbf{n} \right) 
			&= \left\{ (\mathbf{x}, R) \in \SE(2) \setsep \Ltwonorm{\mathbf{x}} = 2 n \lambda \pi \mbox{ for } n \in \F[N] \cup \left\{ 0 \right\},\ R \in \SO(2) \right\} \\
			\label{Fiber2}
			\left( \SubmerSO \circ \Psi_{\lambda} \right)^{-1} \left( - \mathbf{n} \right)
			&= \left\{ (\mathbf{x}, R) \in \SE(2) \setsep \Ltwonorm{\mathbf{x}} = (2n + 1) \lambda \pi \mbox{ for } n \in \F[N] \cup \left\{ 0 \right\},\ R \in \SO(2) \right\} \\
			\label{Fiber3}
			\left( \SubmerSO \circ \Psi_{\lambda} \right)^{-1} \left(\mathbf{s}\right)
			&= \left\{ (\mathbf{x}, R) \in \SE(2) \setsep \mathbf{x} = \left( 2n\pi + 1 \right) \lambda \widetilde{\mathbf{x}} \mbox{ for } n \in \F[Z],\ R \in \SO(2) \right\}.
		\end{align}
	\end{lemma}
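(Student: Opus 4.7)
The plan is to solve $\SubmerSO \circ \Psi_{\lambda}(\mathbf{x}, R) = \mathbf{s}$ directly, by computing the action of the exponential $\exp_{\SO(3)}(\mathbf{x}/\lambda)$ on $\mathbf{n}$ in closed form. Because $\SO(2) \le \SO(3)$ has been identified with the stabilizer of the north pole, every $R \in \SO(2)$ satisfies $R\mathbf{n} = \mathbf{n}$, and hence $\SubmerSO \circ \Psi_{\lambda}(\mathbf{x}, R) = \exp_{\SO(3)}(\mathbf{x}/\lambda)\,\mathbf{n}$ depends only on $\mathbf{x}$. This already accounts for the factor $R \in \SO(2)$ appearing in each of the three fibers, so the remaining task is to describe the sets of $\mathbf{x} \in \F^{2}$ mapping to $\mathbf{n}$, to $-\mathbf{n}$, and to a generic $\mathbf{s} \in S^{2} \setminus \{\pm\mathbf{n}\}$ respectively.

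Under the identification $V \cong \Sp\{S_{2}, S_{3}\}$, the vector $\mathbf{x} = (x_{1}, x_{2})^{\top}$ corresponds to the skew-symmetric matrix $A_{\mathbf{x}} = x_{1} S_{1,3} + x_{2} S_{2,3}$, which satisfies $A_{\mathbf{x}} \mathbf{n} = (x_{1}, x_{2}, 0)^{\top}$ and $A_{\mathbf{x}}^{2} \mathbf{n} = -\Ltwonorm{\mathbf{x}}^{2}\,\mathbf{n}$. Applying Rodrigues' formula to $A_{\mathbf{x}}/\lambda$ and evaluating at $\mathbf{n}$ yields
\begin{equation*}
\exp_{\SO(3)}(\mathbf{x}/\lambda)\,\mathbf{n} = \left(\frac{\sin(\Ltwonorm{\mathbf{x}}/\lambda)}{\Ltwonorm{\mathbf{x}}}\,x_{1},\ \frac{\sin(\Ltwonorm{\mathbf{x}}/\lambda)}{\Ltwonorm{\mathbf{x}}}\,x_{2},\ \cos(\Ltwonorm{\mathbf{x}}/\lambda)\right)^{\top}.
\end{equation*}
Setting this equal to $\pm\mathbf{n}$ forces $\cos(\Ltwonorm{\mathbf{x}}/\lambda) = \pm 1$, i.e.\ $\Ltwonorm{\mathbf{x}} = 2n\lambda\pi$ or $\Ltwonorm{\mathbf{x}} = (2n+1)\lambda\pi$ for $n \in \F[N] \cup \{0\}$, while the first two coordinates vanish automatically because $\sin(\Ltwonorm{\mathbf{x}}/\lambda) = 0$, placing no constraint on the direction of $\mathbf{x}$. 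This gives \eqref{Fiber1} and \eqref{Fiber2}.

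For a generic $\mathbf{s}$ with spherical coordinates $(\theta, \phi)$, $\phi \in (0, \pi)$, write $\mathbf{s} = (\sin\phi\cos\theta, \sin\phi\sin\theta, \cos\phi)^{\top}$. The third coordinate of the equation forces $\Ltwonorm{\mathbf{x}}/\lambda \in \{2n\pi + \phi\} \cup \{2n\pi - \phi\}$ (retaining only the non-negative values), and then the first two coordinates pin down the direction of $\mathbf{x}$: parallel to $(\cos\theta, \sin\theta)^{\top}$ when $\sin(\Ltwonorm{\mathbf{x}}/\lambda) = +\sin\phi$, and antiparallel when it equals $-\sin\phi$. The main bookkeeping step is to merge these two sign branches into a single integer parameter $n \in \F[Z]$ so that the solution set matches \eqref{Fiber3}; one should also check that the degenerate case $\mathbf{x} = \mathbf{0}$ is captured by the $n = 0$ case of \eqref{Fiber1} through the limit $\sin(r)/r \to 1$ at $r = 0$. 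Beyond these details the proof is direct trigonometry, so this case-merging is the only real obstacle.
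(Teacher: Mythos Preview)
Your approach is correct and essentially the same as the paper's. The paper first isolates the closed-form expression $\exp_{\SO(3)}(\mathbf{x}/\lambda)\,\mathbf{n} = \bigl(\mathbf{x}^{\top}\sin(\Ltwonorm{\mathbf{x}}/\lambda)/\Ltwonorm{\mathbf{x}},\ \cos(\Ltwonorm{\mathbf{x}}/\lambda)\bigr)^{\top}$ as a separate lemma (\Cref{ExplicitSubmersionLemma}, proved by Taylor expansion rather than Rodrigues, but the same computation), and then simply asserts that the fiber descriptions follow from this together with the standard properties of spherical and polar coordinates; your trigonometric case analysis for \eqref{Fiber3} fills in details the paper leaves implicit.
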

	
	\begin{figure}
		\begin{center}
			\includegraphics[scale=0.5]{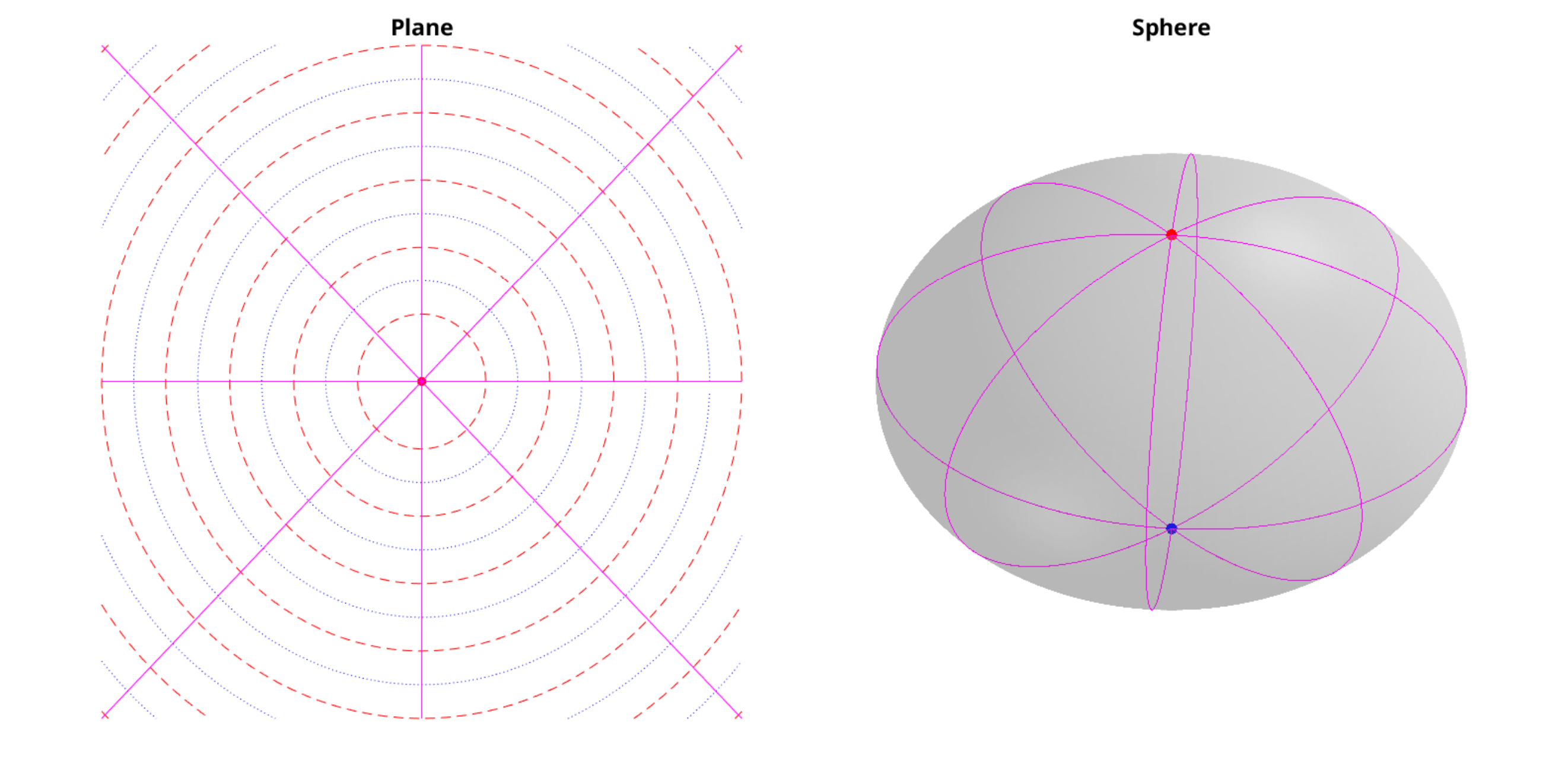}
			\caption{The mapping $ \SubmerSO \circ \Psi_{\lambda} $ can be seen as mapping lines through the origin of the plane (magenta) onto great circles through the north pole on the sphere (magenta). Consequently, the origin and the red, dashed circles are mapped to the north pole and the blue, dotted circles are mapped to the south pole.}
			\label{MapVisualDemonstration}
		\end{center}
	\end{figure}
	
	The final lemma enables us to lift a smooth, compactly supported function on the plane onto $ \SE(2) $ while respecting the fibers of $ \SubmerSO \circ \Psi_{\lambda} $:
	\begin{lemma}
		\label{ExtensionLemma}
		Let $ f : \F^{2} \to \F $ be a smooth function. If $ f $ is compactly supported within $ B_{ \lambda \pi } $, then there is a unique smooth function $ \widetilde{f} : \SE(2) \to \F $ which satisfies the following conditions:
		\begin{enumerate}[label=(\roman*)]
			
			\item \label{ExtensionLemmaCond2} $\widetilde{f}$ is constant on fibers of $ \SubmerSO \circ \Psi_{\lambda} $.
			
			\item \label{ExtensionLemmaCond3} $ f \circ \SubmerSE = \widetilde{f} $ on $ B_{ \lambda \pi } \times \SO(2) $.
		\end{enumerate}
	\end{lemma}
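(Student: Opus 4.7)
The plan is to construct $\widetilde{f}$ as a pullback, through $\SubmerSO\circ\Psi_{\lambda}$, of a suitable smooth function on the sphere. Since \Cref{SmoothSubmersionLemma} establishes that $\SubmerSO\circ\Psi_{\lambda}$ is a smooth submersion, and it is surjective (the exponential map surjects onto $\SO(3)$), the standard submersion-quotient principle (smooth local sections of submersions) guarantees that smooth functions on $\SE(2)$ constant on the fibers of $\SubmerSO\circ\Psi_{\lambda}$ correspond bijectively, via pullback, to smooth functions on $S^{2}$. This immediately reduces the problem to producing a smooth $g:S^{2}\to\F$ such that $g\circ\SubmerSO\circ\Psi_{\lambda}$ agrees with $f\circ\SubmerSE$ on $B_{\lambda\pi}\times\SO(2)$.

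Next I would use \eqref{EtaSOPsiLambdaFormula} and \Cref{SmoothSubmersionLemmaFibers} to see that the restriction of $\SubmerSO\circ\Psi_{\lambda}$ to $B_{\lambda\pi}\times\SO(2)$ is independent of $R$ and may be identified with a map $\varphi:B_{\lambda\pi}\to S^{2}\setminus\{-\mathbf{n}\}$ given by $\varphi(\mathbf{x})=\exp_{\SO(3)}(\mathbf{x}/\lambda)\mathbf{n}$; this is, after rescaling, the Riemannian exponential of $S^{2}$ at $\mathbf{n}$ restricted to its ball of injectivity, and is therefore a diffeomorphism. The natural candidate for $g$ is then
\begin{equation*}
  g(\mathbf{s}) \;=\; \begin{cases} f\bigl(\varphi^{-1}(\mathbf{s})\bigr), & \mathbf{s}\in S^{2}\setminus\{-\mathbf{n}\},\\ 0, & \mathbf{s}=-\mathbf{n}. \end{cases}
\end{equation*}

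The main obstacle, and the only step that is not purely formal, is verifying that $g$ is smooth at $-\mathbf{n}$. Here the compact-support hypothesis enters decisively: because $\mathrm{supp}(f)$ is compact in the open ball $B_{\lambda\pi}$, there is some $r<\lambda\pi$ with $\mathrm{supp}(f)\subset\overline{B}_{r}$, and $\varphi$ carries the open annulus $B_{\lambda\pi}\setminus\overline{B}_{r}$ onto a punctured neighborhood of $-\mathbf{n}$. Hence $g\equiv 0$ on a full open neighborhood of $-\mathbf{n}$, so $g$ is smooth there, while smoothness on $S^{2}\setminus\{-\mathbf{n}\}$ is immediate from $g=f\circ\varphi^{-1}$. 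Setting $\widetilde{f}:=g\circ\SubmerSO\circ\Psi_{\lambda}$ then satisfies \ref{ExtensionLemmaCond2} tautologically and \ref{ExtensionLemmaCond3} because $\varphi^{-1}\circ\varphi=\mathrm{Id}$ on $B_{\lambda\pi}$. For uniqueness, any competing smooth $\widetilde{f}'$ descends via the submersion principle to a smooth $g':S^{2}\to\F$; condition \ref{ExtensionLemmaCond3} forces $g'=g$ on the dense open set $S^{2}\setminus\{-\mathbf{n}\}$, and hence on all of $S^{2}$ by continuity, so $\widetilde{f}'=\widetilde{f}$.
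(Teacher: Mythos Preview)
Your argument is correct. The key ingredients---surjectivity of $\SubmerSO\circ\Psi_{\lambda}$, the submersion-quotient correspondence, the identification of $\varphi$ with a rescaled Riemannian exponential on its injectivity ball, and the compact-support argument forcing $g\equiv 0$ near $-\mathbf{n}$---are all sound, and uniqueness via density is clean.

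It is worth noting that your route is genuinely different from the paper's. The paper constructs $\widetilde{f}$ directly on $\SE(2)$: it uses the fiber description in \Cref{SmoothSubmersionLemmaFibers} to define $\widetilde{f}$ fiberwise (take the value of $f$ at the unique intersection point of each fiber with $B_{\lambda\pi}\times\SO(2)$, and zero on the fiber over $-\mathbf{n}$), and then checks smoothness annulus-by-annulus via explicit polar-coordinate diffeomorphisms $p:B_{\lambda\pi}\setminus\{0\}\to B^{(n)}$, handling the boundary circles separately using compact support. You instead go through $S^{2}$ first: you build $g$ on the sphere and pull it back. This is more conceptual and avoids the annular casework, but it effectively constructs $\kappa_{\lambda}f$ (your $g$) \emph{before} proving \Cref{ExtensionLemma}, whereas the paper uses \Cref{ExtensionLemma} as the stepping stone toward the existence of $\kappa_{\lambda}f$ in \Cref{ProjectionTheorem}. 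So your argument collapses the two results into one; the paper's ordering separates the lift to $\SE(2)$ from the descent to $S^{2}$, at the cost of a more hands-on smoothness verification.
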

	
	We begin by proving the existence of the projection:
	\begin{proof}[Proof of \textsc{Existence} in \Cref{ProjectionTheorem}]
		\label{ProjectionTheoremExistenceProof}
		Let $ \widetilde{f} $ be the extension of $ f $ described in \Cref{ExtensionLemma}. 
		By \Cref{ExtensionLemma}\ref{ExtensionLemmaCond2}, it is constant on fibers of $ \SubmerSO \circ \Psi_{\lambda} $.
		Because $ \SubmerSO \circ \Psi_{\lambda} $ is a smooth submersion by \Cref{SmoothSubmersionLemma}, 
		there is $ \widetilde{F} : S^{2} \to \F $ satisfying $ \widetilde{f} = \widetilde{F} \circ \SubmerSO \circ \Psi_{\lambda} $ (see Thm. 4.30, in \cite[]{Lee2013}).
		By \Cref{ExtensionLemma}\ref{ExtensionLemmaCond3}, it follows that $ \widetilde{F} \circ \SubmerSO \circ \Psi_{\lambda} = f \circ \SubmerSE $. We therefore take $ \kappa_{\lambda} f \coloneqq \widetilde{F} $.
	\end{proof}
	
	The projection we constructed in the \hyperref[ProjectionTheoremExistenceProof]{existence proof} of \Cref{ProjectionTheorem} has an explicit, easy-to-compute formula:
	\begin{proposition}
		\label{ExplicitProjectionTheorem}
		If $ f : \F^{2} \to \F $ is a smooth compactly supported function supported within $ B_{\lambda \pi} $, then 
		\begin{equation}
			\label{ExplicitProjectionFormula}
			\kappa_{\lambda} f (\theta, \phi)
			= f(\lambda \theta \cos \phi, \lambda \theta \sin \phi).
		\end{equation}
	\end{proposition}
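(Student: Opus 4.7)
The plan is to unpack the defining relation \eqref{CommutativeDiagram} using the explicit forms of $\Psi_{\lambda}$ and $\SubmerSO$, and then read off the formula directly.

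First, I would combine \eqref{ContractionMap} and \eqref{EtaSOPsiLambdaFormula} to see that $\SubmerSO \circ \Psi_{\lambda}(\mathbf{x}, R) = \exp_{\SO(3)}(\mathbf{x}/\lambda)\mathbf{n}$, which depends only on $\mathbf{x}$. Writing $\mathbf{x} = r(\cos\phi, \sin\phi)^{\top}$ with $r = \Ltwonorm{\mathbf{x}}$, and using the identification $V \cong \Sp\{S_{2}, S_{3}\}$ from \Cref{ContractionMapsSection}, the skew-symmetric matrix attached to $\mathbf{x}/\lambda$ acts as an axis-angle generator whose rotation axis lies in the $xy$-plane and is orthogonal to $(x_{1}, x_{2}, 0)^{\top}$, with rotation angle $r/\lambda$. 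A direct Rodrigues-formula computation then gives
\begin{equation*}
    \exp_{\SO(3)}(\mathbf{x}/\lambda)\mathbf{n} = \bigl(\sin(r/\lambda)\cos\phi,\ \sin(r/\lambda)\sin\phi,\ \cos(r/\lambda)\bigr)^{\top},
\end{equation*}
where the absence of a third contribution uses that the rotation axis is orthogonal to $\mathbf{n}$.

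Next, I would recognize this right-hand side as the point on $S^{2}$ with polar angle $\theta = r/\lambda$ and azimuth $\phi$. On the fundamental region $r < \lambda\pi$, the map $(r, \phi) \mapsto \SubmerSO \circ \Psi_{\lambda}(\mathbf{x}, R)$ becomes a bijection onto $S^{2} \setminus \{-\mathbf{n}\}$ (consistent with \Cref{SmoothSubmersionLemmaFibers}), whose inverse sends $(\theta, \phi)$ to $\mathbf{x} = (\lambda\theta\cos\phi, \lambda\theta\sin\phi)^{\top}$. Substituting this pre-image into \eqref{CommutativeDiagram} together with $\SubmerSE(\mathbf{x}, R) = \mathbf{x}$ yields \eqref{ExplicitProjectionFormula} on the dense open set $S^{2} \setminus \{\pm\mathbf{n}\}$; the formula extends continuously to the poles since $f$ vanishes on $\partial B_{\lambda\pi}$ by compact support (handling $\theta = \pi$) and the right-hand side reduces to $f(0, 0)$ independently of $\phi$ at $\theta = 0$.

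The main obstacle is essentially bookkeeping: carefully pinning down the signs in the identification of $V$ with $\F^{2}$ and matching the spherical-coordinate convention of the statement so that Rodrigues' formula produces the displayed expression. Once these conventions are fixed the argument is a direct substitution, and smoothness of $\kappa_{\lambda} f$ requires no separate verification since it is guaranteed by the existence part of \Cref{ProjectionTheorem}.
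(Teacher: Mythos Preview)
Your proposal is correct and follows essentially the same route as the paper: compute $\SubmerSO\circ\Psi_{\lambda}(\mathbf{x},R)$ explicitly, identify the result with the spherical-coordinate point $(\theta,\phi)=(\Ltwonorm{\mathbf{x}}/\lambda,\phi)$, invert on the fundamental region $\mathbf{x}\in B_{\lambda\pi}$, and substitute into the defining relation \eqref{CommutativeDiagram}. The only cosmetic difference is that the paper packages the explicit computation of $\SubmerSO\circ\Psi_{\lambda}$ as a separate lemma (\Cref{ExplicitSubmersionLemma}, proved via the power series for $\exp$), whereas you derive it inline via Rodrigues' formula; the logic from that point on is identical.
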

	Formula \eqref{ExplicitProjectionFormula} is virtually identical to the one used by Kondor \cite{Kondor2007}. While Kondor chose it based on a heuristic, we here showed that it is derived from the contraction map \eqref{ContractionMap}, a more fundamental mapping between $ \SE(2) $ and $ \SO(3) $ that relates their underlying group structure. 
	
	The proof of \Cref{ExplicitProjectionTheorem} relies on the following lemma, which provides an explicit formula for $ \SubmerSO \circ \Psi_{\lambda} $. It is proved in \Cref{ProofsAppendix}.
	\begin{lemma}
		\label{ExplicitSubmersionLemma}
		$ \SubmerSO \circ \Psi_{\lambda} (\mathbf{x}, R) = \left( \frac{\mathbf{x}^{\top}}{\Ltwonorm{\mathbf{x}}} \sin \left( \frac{\Ltwonorm{\mathbf{x}}}{\lambda} \right), \cos \left( \frac{\Ltwonorm{\mathbf{x}}}{\lambda} \right) \right)^{\top} $ for all $ (\mathbf{x}, R) \in \SE(2) $.
	\end{lemma}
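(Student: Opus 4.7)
The plan is to reduce the claim to a direct computation of a matrix exponential acting on the north pole. By \eqref{EtaSOPsiLambdaFormula}, we already have
$$\SubmerSO \circ \Psi_{\lambda}(\mathbf{x}, R) = \exp_{\SO(3)}(\mathbf{x}/\lambda)\,\mathbf{n},$$
so the task is purely to evaluate $\exp_{\SO(3)}(\mathbf{x}/\lambda) \mathbf{n}$ for $\mathbf{x} \in \F^{2}$, where the identification $V \cong \Sp\{S_{2}, S_{3}\} \cong \F^{2}$ from \Cref{ContractionMapsSection} is in force. Writing $\mathbf{x} = (x_{1}, x_{2})^{\top}$, I would first make this identification explicit by setting
$$B \coloneqq x_{1} S_{2} + x_{2} S_{3} = \begin{pmatrix} 0 & 0 & x_{1} \\ 0 & 0 & x_{2} \\ -x_{1} & -x_{2} & 0 \end{pmatrix} \in \so(3),$$
so that the quantity to compute is $\exp_{\SO(3)}(B/\lambda)\,\mathbf{n}$.

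Next I would exploit the very simple action of powers of $B$ on $\mathbf{n}$. A direct calculation gives $B\mathbf{n} = (x_{1}, x_{2}, 0)^{\top}$ and then $B^{2}\mathbf{n} = -\Ltwonorm{\mathbf{x}}^{2} \mathbf{n}$. Consequently $B$ acts on the two-dimensional subspace $\Sp\{\mathbf{n},\, B\mathbf{n}\}$ as a scaled rotation generator: all even powers return a scalar multiple of $\mathbf{n}$ and all odd powers a scalar multiple of $B\mathbf{n}$, with $B^{2k}\mathbf{n} = (-1)^{k}\Ltwonorm{\mathbf{x}}^{2k}\mathbf{n}$ and $B^{2k+1}\mathbf{n} = (-1)^{k}\Ltwonorm{\mathbf{x}}^{2k}B\mathbf{n}$. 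This is the whole substance of the proof; the rest is bookkeeping.

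Expanding the matrix exponential as a power series and separating even and odd terms gives
$$\exp_{\SO(3)}(B/\lambda)\,\mathbf{n} = \mathbf{n}\sum_{k=0}^{\infty}\frac{(-1)^{k}}{(2k)!}\left(\frac{\Ltwonorm{\mathbf{x}}}{\lambda}\right)^{\!2k} + \frac{B\mathbf{n}}{\Ltwonorm{\mathbf{x}}}\sum_{k=0}^{\infty}\frac{(-1)^{k}}{(2k+1)!}\left(\frac{\Ltwonorm{\mathbf{x}}}{\lambda}\right)^{\!2k+1},$$
which collapses to $\mathbf{n}\cos(\Ltwonorm{\mathbf{x}}/\lambda) + (B\mathbf{n}/\Ltwonorm{\mathbf{x}})\sin(\Ltwonorm{\mathbf{x}}/\lambda)$. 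Substituting $B\mathbf{n} = (x_{1}, x_{2}, 0)^{\top}$ and $\mathbf{n} = (0,0,1)^{\top}$ yields exactly the formula in the statement. If we prefer a slicker route, the same identity is immediate from Rodrigues' formula applied to the skew-symmetric matrix $B/\lambda$, whose associated axis is $(-x_{2}, x_{1}, 0)^{\top}/\Ltwonorm{\mathbf{x}}$ (orthogonal to $\mathbf{n}$) and whose rotation angle is $\Ltwonorm{\mathbf{x}}/\lambda$.

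There is no real obstacle here; the only thing one must be careful about is the identification $V \cong \F^{2}$ --- that is, getting the $(x_{1}, x_{2}) \leftrightarrow x_{1} S_{2} + x_{2} S_{3}$ correspondence right so the signs and indices in $B$ match the conventions set in \Cref{ContractionMapsSection}. Once $B$ is written down, the observation $B^{2}\mathbf{n} = -\Ltwonorm{\mathbf{x}}^{2}\mathbf{n}$ does all the work, and the case $\mathbf{x} = \mathbf{0}$ is handled by continuity (or by noting both sides equal $\mathbf{n}$ in the appropriate limit).
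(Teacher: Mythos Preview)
Your proposal is correct and follows essentially the same approach as the paper: both reduce to computing $\exp_{\SO(3)}(\mathbf{x}/\lambda)\,\mathbf{n}$ via \eqref{EtaSOPsiLambdaFormula}, write the skew-symmetric matrix explicitly, observe $B^{2}\mathbf{n} = -\Ltwonorm{\mathbf{x}}^{2}\mathbf{n}$, and split the exponential series into the cosine and sine Taylor series. The only cosmetic difference is that the paper first computes with the unscaled matrix and substitutes $\mathbf{x}\mapsto\mathbf{x}/\lambda$ at the end, whereas you carry the $\lambda$ through from the start; your aside about Rodrigues' formula is a nice alternative the paper does not mention.
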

	
	\begin{proof}[Proof of \Cref{ExplicitProjectionTheorem}]
		Let $ \mathbf{s} \in S^{2} \setminus \left\{ -\mathbf{n}\right\}$ with spherical coordinate representation $ (\theta, \phi) $. Let $ (\mathbf{x}, R) \in \SE(2) $ be a vector satisfying $ \SubmerSO \circ \Psi_{\lambda} (\mathbf{x}, R) = \mathbf{s} $. By \Cref{SmoothSubmersionLemmaFibers}, we can choose $ \mathbf{x} \in B_{ \lambda \pi } $. 
		By \Cref{ExplicitSubmersionLemma}, 
		\begin{equation}
			\label{lem:sPolar}
			\mathbf{s}
			= \left( \frac{\mathbf{x}^{\top}}{\Ltwonorm{\mathbf{x}}} \sin \left(\frac{\Ltwonorm{\mathbf{x}}}{\lambda}\right), \cos \left(\frac{\Ltwonorm{\mathbf{x}}}{\lambda}\right) \right)^{\top} \\
		\end{equation}
		if and only if $ \theta = \frac{\Ltwonorm{\mathbf{x}}}{\lambda} $ and $ \phi $ is the angle of $ \mathbf{x} $ in polar coordinates.
		Now, for every $ R \in \SO(2) $ we have
		\begin{align*}
			f \left( \lambda \theta \cos \phi, \lambda \theta \sin \phi \right)
			&= f \left( \Ltwonorm{\mathbf{x}} \cos \phi, \Ltwonorm{\mathbf{x}} \sin \phi \right)  
			&& \mbox{by \eqref{lem:sPolar}} \\
			&= f \left( \mathbf{x} \right) \\
			&= f \circ \SubmerSE (\mathbf{x}, R) \\
			&= \kappa_{\lambda} f \circ \SubmerSO \circ \Psi_{\lambda} (\mathbf{x}, R) 
			&& \mbox{by \Cref{ProjectionTheorem}}\\
			&= \kappa_{\lambda} f (\mathbf{s})
			= \kappa_{\lambda} f (\theta, \phi).
		\end{align*}	
	\end{proof}
	
	We use the explicit formula in \Cref{ExplicitProjectionTheorem} to prove uniqueness in \Cref{ProjectionTheorem}:
	\begin{proof}[Proof of \textsc{Uniqueness} in \Cref{ProjectionTheorem}]
		Note that the only property of $ \kappa_{\lambda} f $ we used in the proof of \Cref{ExplicitProjectionTheorem} is the fact it satisfies \eqref{CommutativeDiagram}.
		Therefore, if $ g_{1}, g_{2} : S^{2} \to \F $ satisfy \eqref{CommutativeDiagram}, they are identical on $ S^{2} \setminus \left\{-\mathbf{n} \right\} $. Finally, it is easy to see that on $ -\mathbf{n} $ both $ g_{1} $ and $ g_{2} $ must equal zero, since the support of $ f $ does not intersect the fiber of $ - \mathbf{n} $ \eqref{Fiber2}. We conclude $ g_{1} = g_{2} $, and therefore there is exactly one function on the sphere satisfying \eqref{CommutativeDiagram}.
	\end{proof}
	
	\subsection{The group action approximation theorem}
	\label{ApproximationTheoremSection}
	
	As shown above, the contraction map \eqref{ContractionMap} is a mapping between groups which induces a mapping between function spaces these groups act on. Therefore, it makes sense to try to capture the way the action of $ \SE(2) $ on its corresponding function space is approximated by the action of $ \SO(3) $ on its corresponding function space by how well the the mapping between function spaces commutes with the group action. 
	
	\Cref{NirsTheorem} of \cite{Sharon2018} does exactly that for the approximation of $ \SE(2) $ by $ \SO(3) $. We model our approach similarly to theirs and prove that
	\begin{equation*}
		\kappa_{\lambda} \left( (\mathbf{x}, R) \bullet f \right) \approx \Psi_{\lambda} ((\mathbf{x}, R)) \bullet \kappa_{\lambda} f 
		\quad\mbox{ for all $ \mathbf{x} \in \F^{2} $}.
	\end{equation*}
	
	The following theorem makes this idea precise:
	\begin{theorem}
		\label{GroupActionApproxTheorem}
		Let $ \widetilde{\lambda} \ge \lambda \ge 1 $, let $ (\mathbf{b}, R) \in \SE(2) $ and let $ f : \F^{2} \to \F $ be a smooth function compactly supported within $ B_{\lambda \pi} $. Then, 
		\begin{enumerate}[label=(\roman*)]
			\item \label{GroupActionApproxTheoremPureRot} If $ (\mathbf{b}, R) $ is a purely rotational element of $ \SE(2) $, i.e., $ \mathbf{b} = \mathbf{0} $, then 
			\begin{equation}
				\label{CummResult1}
				\kappa_{\widetilde{\lambda}} \left( (\mathbf{b}, R) \bullet f \right)
				= \Psi_{\widetilde{\lambda}} ((\mathbf{b}, R)) \bullet \kappa_{\widetilde{\lambda}} f.
			\end{equation}
			
			\item \label{GroupActionApproxTheoremNonPureRot} 
			Assume $ \left(\mathbf{b}, R\right) $ is such that 
			$ \left(\mathbf{b}, R\right) \bullet f $ is supported within $ B_{\lambda \pi} $ 
			and $ \Psi_{\widetilde{\lambda}} \left(\left(\mathbf{b}, R\right) \right) \bullet \kappa_{\widetilde{\lambda}} f $ is supported within $ \SubmerSO \circ \Psi_{\widetilde{\lambda}} \left( B_{\lambda \pi} \right)$. Then 
			\begin{equation}
				\label{CummResult2}
				\Ltwonorm{\kappa_{\widetilde{\lambda}} \left( (\mathbf{b}, R) \bullet f \right) - \Psi_{\widetilde{\lambda}} ((\mathbf{b}, R)) \bullet \kappa_{\widetilde{\lambda}} f }_{2}
				\le \frac{4 L \pi e^{\Ltwonorm{\mathbf{b}} + \lambda \pi }}{\widetilde{\lambda}^{2}} \left( 1 - \cos \left(\frac{\lambda \pi}{\widetilde{\lambda}}\right) \right)^{1/2},
			\end{equation}
			where $ L $ is the Lipschitz constant of $ \kappa_{\widetilde{\lambda}} f $ in the sense defined in \eqref{LipschitzOfProj}.
		\end{enumerate}
		
	\end{theorem}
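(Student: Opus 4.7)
The plan is to handle the two cases separately: part (i) follows directly from the explicit formula in \Cref{ExplicitProjectionTheorem}, while part (ii) is a pointwise-to-$ L^{2} $ argument that converts the approximate-homomorphism property of $ \Psi_{\widetilde{\lambda}} $ from \Cref{NirsTheorem} into a bound on the sphere.

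For part (i), I observe that $ \Psi_{\widetilde{\lambda}}((\mathbf{0}, R)) = \exp_{\SO(3)}(\mathbf{0}) R = R $, viewed under the canonical embedding $ \SO(2) \hookrightarrow \SO(3) $ fixing $ \mathbf{n} $. Writing $ R $ as rotation by angle $ \alpha $ and using \eqref{ExplicitProjectionFormula}, a direct computation in spherical coordinates $ \mathbf{s} = (\theta, \phi) $ gives
\begin{equation*}
\kappa_{\widetilde{\lambda}}((\mathbf{0}, R) \bullet f)(\theta, \phi)
= f\bigl(R^{\top}(\widetilde{\lambda}\theta\cos\phi,\, \widetilde{\lambda}\theta\sin\phi)^{\top}\bigr)
= \kappa_{\widetilde{\lambda}} f(\theta, \phi - \alpha)
= (R \bullet \kappa_{\widetilde{\lambda}} f)(\mathbf{s}),
\end{equation*}
which is exactly \eqref{CummResult1}.

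For part (ii), I would argue pointwise. Fix $ \mathbf{s} $ in the common support of both functions. By \Cref{SmoothSubmersionLemmaFibers} I can choose a representative $ g' = (\mathbf{x}', I) \in \SE(2) $ with $ \Ltwonorm{\mathbf{x}'} < \lambda \pi $ and $ \SubmerSO \circ \Psi_{\widetilde{\lambda}}(g') = \mathbf{s} $. Applying the commutative diagram \eqref{CommutativeDiagram} to $ (\mathbf{b}, R) \bullet f $ (supported in $ B_{\lambda\pi} \subset B_{\widetilde{\lambda}\pi} $ by hypothesis), and then to $ f $ itself,
\begin{equation*}
\kappa_{\widetilde{\lambda}}\bigl((\mathbf{b}, R) \bullet f\bigr)(\mathbf{s})
= f\bigl(R^{\top}(\mathbf{x}' - \mathbf{b})\bigr)
= \kappa_{\widetilde{\lambda}} f\bigl(\SubmerSO \circ \Psi_{\widetilde{\lambda}}((\mathbf{b}, R)^{-1} g')\bigr),
\end{equation*}
while the definition of the $ \SO(3) $-action together with $ \Psi_{\widetilde{\lambda}}(g^{-1}) = \Psi_{\widetilde{\lambda}}(g)^{-1} $ from \Cref{NirsTheorem} yields
\begin{equation*}
\bigl(\Psi_{\widetilde{\lambda}}((\mathbf{b}, R)) \bullet \kappa_{\widetilde{\lambda}} f\bigr)(\mathbf{s})
= \kappa_{\widetilde{\lambda}} f\bigl(\Psi_{\widetilde{\lambda}}((\mathbf{b}, R)^{-1})\, \Psi_{\widetilde{\lambda}}(g')\, \mathbf{n}\bigr).
\end{equation*}
The two quantities therefore differ only in whether $ \Psi_{\widetilde{\lambda}} $ is applied to a product or to its factors separately, so the Lipschitz hypothesis on $ \kappa_{\widetilde{\lambda}} f $ together with the elementary estimate $ d_{S^{2}}(R_{1}\mathbf{n}, R_{2}\mathbf{n}) \le \Ltwonorm{R_{1} - R_{2}}_{F} $ reduces the problem to bounding $ \Ltwonorm{\Psi_{\widetilde{\lambda}}(g_{1} g_{2}) - \Psi_{\widetilde{\lambda}}(g_{1})\Psi_{\widetilde{\lambda}}(g_{2})}_{F} $ with $ g_{1} = (\mathbf{b}, R)^{-1} $ and $ g_{2} = g' $.

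The main obstacle is upgrading \Cref{NirsTheorem} to a non-asymptotic inequality of the form $ \Ltwonorm{\Psi_{\widetilde{\lambda}}(g_{1} g_{2}) - \Psi_{\widetilde{\lambda}}(g_{1})\Psi_{\widetilde{\lambda}}(g_{2})}_{F} \le C\, e^{\Ltwonorm{\mathbf{x}_{1}} + \Ltwonorm{\mathbf{x}_{2}}}/\widetilde{\lambda}^{2} $, valid for every $ \widetilde{\lambda} \ge 1 $ with no smallness condition on the translational parts. I expect this to come from a direct power-series expansion of $ \exp_{\SO(3)}(\mathbf{b}/\widetilde{\lambda})\, R\, \exp_{\SO(3)}(\mathbf{x}'/\widetilde{\lambda}) - \exp_{\SO(3)}\bigl((\mathbf{b} + R\mathbf{x}')/\widetilde{\lambda}\bigr)\, R $ (or a Baker--Campbell--Hausdorff argument), collecting the zeroth- and first-order terms exactly and bounding the tail in Frobenius norm by an exponential in the translation sizes divided by $ \widetilde{\lambda}^{2} $. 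Since $ \Ltwonorm{\mathbf{x}'} < \lambda\pi $, this produces the $ e^{\Ltwonorm{\mathbf{b}} + \lambda\pi} $ factor in \eqref{CummResult2}. Once the pointwise estimate is in hand, integrating the squared Lipschitz bound over the support region $ \SubmerSO \circ \Psi_{\widetilde{\lambda}}(B_{\lambda\pi}) $, which by \Cref{ExplicitSubmersionLemma} is a spherical cap of angular radius $ \lambda\pi/\widetilde{\lambda} $ and area $ 2\pi(1 - \cos(\lambda\pi/\widetilde{\lambda})) $, and taking a square root yields the $ (1 - \cos(\lambda\pi/\widetilde{\lambda}))^{1/2} $ factor and hence \eqref{CummResult2}.
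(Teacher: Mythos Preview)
Your proposal is correct and matches the paper's proof closely. For part (ii) you have identified precisely the structure the paper uses: the pointwise rewriting via \eqref{CommutativeDiagram} into the two expressions \eqref{Cumm1Final}--\eqref{Cumm2Final}, the Lipschitz step, the non-asymptotic approximate-homomorphism bound (which the paper isolates as \Cref{ApproximateCumm1} and \Cref{ApproximateCumm2}, proved exactly by the power-series argument you anticipate), and the integration over the cap of angular radius $\lambda\pi/\widetilde{\lambda}$. The paper's chord-to-geodesic constant is $\sqrt{\pi}$ (\Cref{LipschitzStuff}) rather than your $1$, but this only affects the numerical constant. The one genuine difference is part (i): you compute directly from the explicit formula \eqref{ExplicitProjectionFormula}, whereas the paper derives it as the $\mathbf{b}=\mathbf{0}$ specialization of the same pointwise identities \eqref{Cumm1Final}--\eqref{Cumm2Final} set up for part (ii); your route is shorter, theirs avoids a separate coordinate calculation.
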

	
	\begin{remark}
		It is worth emphasizing that the bound in \eqref{CummResult2} depends on the function being projected through $ L $, the Lipschitz constant of its projection onto the sphere.
	\end{remark}
	
	\begin{remark}
		We note an apparent difference between \Cref{GroupActionApproxTheorem} and \Cref{NirsTheorem}. 
		In \eqref{NirsTheoremInequality} of \Cref{NirsTheorem} from \cite{Sharon2018}, the bound behaves as $ O \left( \frac{1}{\lambda^{2}} \right) $ when the scaling parameter satisfies $ \lambda \to \infty $. 
		In \eqref{CummResult2} of \Cref{GroupActionApproxTheorem}, when we hold $ \lambda $ fixed and take $ \widetilde{\lambda} \to \infty $, it appears at first glances that the bound behaves as $ O \left( \frac{1}{\widetilde{\lambda}^{3}} \right) $.
		However, we expect that the the Lipschitz constant $ L $ of the projection $ \kappa_{\widetilde{\lambda}} f $ is $ O \left( \widetilde{\lambda} \right) $, and thus the bound behaves as $ O \left( \widetilde{\lambda}^{2}\right) $.
		%
		%
		To see why, note that our projection maps two fixed points on the plane to two points on the sphere, as demonstrated visually in \autoref{MapVisualDemonstration}. Therefore, if we fix two points on the plane, the distance between the value of their corresponding points on the sphere does not depend on $ \widetilde{\lambda} $. 
		However, as $ \widetilde{\lambda} $ increases, the spherical cap to which the support of $ f $ is mapped grows smaller and specifically the maximal distance between any two points in it grow smaller by $ O \left( \frac{1}{\widetilde{\lambda}} \right) $.
		Thus, the Lipschitz constant of $ \kappa_{\widetilde{\lambda}} f $ should scale asymptotically like $ L_{\lambda} O \left( \widetilde{\lambda} \right) $, where $ L_{\lambda} $ is the Lipschitz constant of $ \kappa_{\lambda} f $. Overall, we expect the entire bound to be $ O \left( \frac{1}{\widetilde{\lambda}^{2}}\right) $ as $ \widetilde{\lambda} \to \infty $ for fixed $ \lambda $, same as in \Cref{NirsTheorem}.
	\end{remark}
	
	Our proof of \Cref{GroupActionApproxTheorem} begins with a lemma providing a sense in which the matrix exponential approximately preserves the matrix group multiplication map:
	\begin{lemma}
		\label{ApproximateCumm1}
		If $ X $ and $ Y $ are real $ n \times n $ matrices and $ \lambda \ge 1 $, then
		\begin{equation}
			\label{ApproximatelyCummutative}
			\Ltwonorm{\exp \left( \frac{X}{\lambda} + \frac{Y}{\lambda} \right) - \exp \left(\frac{X}{\lambda}\right) \exp \left(\frac{Y}{\lambda}\right)}
			\le \frac{C}{\lambda^{2}},
		\end{equation}
		where 
		\begin{equation}
			\label{ApproximatelyCummutativeConstant}
			C
			\coloneqq 2 e^{\Ltwonorm{X} + \Ltwonorm{Y}} - e^{\Ltwonorm{X}} - e^{\Ltwonorm{Y}} - \Ltwonorm{X} - \Ltwonorm{Y}.
		\end{equation}
	\end{lemma}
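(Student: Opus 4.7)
The plan is to expand both exponentials as absolutely convergent matrix power series in $1/\lambda$ and exploit the cancellation of the $O(1)$ and $O(1/\lambda)$ terms. Writing
\begin{equation*}
\exp\!\left(\frac{X+Y}{\lambda}\right) = \sum_{n \geq 0} \frac{(X+Y)^n}{n!\,\lambda^n}, \qquad \exp\!\left(\frac{X}{\lambda}\right)\exp\!\left(\frac{Y}{\lambda}\right) = \sum_{j,k \geq 0} \frac{X^j Y^k}{j!\,k!\,\lambda^{j+k}},
\end{equation*}
the $n=0$ contributions on both sides are $I$, and the $n=1$ contributions are both $(X+Y)/\lambda$, so they cancel identically upon subtraction. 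Every remaining term therefore carries a $1/\lambda^n$ factor with $n \geq 2$, and because $\lambda \geq 1$, one can uniformly replace $1/\lambda^n$ by $1/\lambda^2$ and pull this common factor out front.

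The next step is to bound the residual tails using the triangle inequality and submultiplicativity of the operator norm, so that $\|(X+Y)^n\| \leq (\|X\|+\|Y\|)^n$ and $\|X^j Y^k\| \leq \|X\|^j \|Y\|^k$. The tail of the first exponential telescopes into the scalar quantity $e^{\|X\|+\|Y\|} - 1 - \|X\| - \|Y\|$. For the second, I would partition the double sum into the two pure tails (with $k=0$ or $j=0$, each contributing $e^{\|\cdot\|} - 1 - \|\cdot\|$) and the genuinely mixed block $j,k \geq 1$, which collapses into $(e^{\|X\|}-1)(e^{\|Y\|}-1)$. Combining these pieces and using $e^{\|X\|+\|Y\|} = e^{\|X\|} e^{\|Y\|}$ collects the upper bound into the compact closed form $C/\lambda^2$ with $C$ as stated.

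The obstacle is not conceptual but clerical: a naive application of the triangle inequality to the full noncommutative sum overcounts and produces a strictly looser constant than the $C$ in the lemma. The improvement comes from cleanly separating the purely diagonal contributions, which behave exactly as they would in the scalar case, from the mixed noncommutative block, so that the terms $e^{\|X\|}$ and $e^{\|Y\|}$ appear with the correct signs in $C$. Once this decomposition is made, the remaining manipulations are routine scalar series calculations.
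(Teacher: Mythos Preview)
Your strategy is exactly the paper's: expand both exponentials, cancel the order-$0$ and order-$1$ contributions, bound the remaining tails via submultiplicativity, and use $\lambda\ge1$ to extract $1/\lambda^2$. The only issue is the constant. If you carry out your three-piece decomposition of the product tail $e^{X/\lambda}e^{Y/\lambda}-I-(X+Y)/\lambda$ into the two pure tails $\sum_{j\ge2}X^j/(j!\lambda^j)$, $\sum_{k\ge2}Y^k/(k!\lambda^k)$ and the mixed block $\sum_{j,k\ge1}X^jY^k/(j!k!\lambda^{j+k})$, the three bounds sum (after simplification) to $(e^{\|X\|+\|Y\|}-1-\|X\|-\|Y\|)/\lambda^2$. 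Combined with the identical bound for the tail of $\exp((X+Y)/\lambda)$ you obtain
\[
\frac{2\bigl(e^{\|X\|+\|Y\|}-1-\|X\|-\|Y\|\bigr)}{\lambda^2},
\]
which exceeds $C/\lambda^2$ by $(e^{\|X\|}+e^{\|Y\|}-2-\|X\|-\|Y\|)/\lambda^2\ge0$. So your scheme proves a valid inequality but not with the stated constant.

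The paper obtains $C$ by keeping the first tail $\sum_{n\ge2}(A+B)^n/n!$ intact and pairing it only against the \emph{mixed} block $\sum_{k,m\ge1}A^kB^m/(k!m!)$ of the product, giving $(e^{\|X\|+\|Y\|}-1-\|X\|-\|Y\|)+(e^{\|X\|}-1)(e^{\|Y\|}-1)=C$. The justification is that the pure-power pieces $A^n$ and $B^n$ occur in both expansions---inside $(A+B)^n$ on one side, as the $k=0$ or $m=0$ terms on the other---and therefore cancel in the difference before any norm is taken. Your ``separate the diagonal from the mixed'' instinct is exactly right, but it must be applied to \emph{both} series simultaneously so that the diagonal terms cancel across the two expansions rather than being bounded twice.
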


	We particularize \Cref{ApproximateCumm1} to our contraction map between $ \SE(2) $ and $ \SO(3) $:
	\begin{corollary}
		\label{ApproximateCumm2}
		Let $ \widetilde{\lambda} \ge 1 $. If $ (\mathbf{x}_{1}, R_{1}), (\mathbf{x}_{2}, R_{1}) \in \SE(2) $, then
		\begin{equation}
			\label{ApporximateCummIneq}
			\Ltwonorm{\Psi_{\widetilde{\lambda}} \left( (\mathbf{x}_{1}, R_{1}) (\mathbf{x}_{2}, R_{2}) \right) - \Psi_{\widetilde{\lambda}} (\mathbf{x}_{1}, R_{1})  \Psi_{\widetilde{\lambda}} (\mathbf{x}_{2}, R_{2})  }
			\le \frac{C \left(\Ltwonorm{\mathbf{x}_{1}}, \Ltwonorm{\mathbf{x}_{2}}\right)}{\widetilde{\lambda}^{2}},
		\end{equation}
		where
		\begin{equation}
			\label{LemmaC}
			C\left(r_{1}, r_{2} \right)
			\coloneqq 2 e^{r_{1} + r_{2}} - e^{r_{1}} - e^{r_{2}} - r_{1} - r_{2} .
		\end{equation}
	\end{corollary}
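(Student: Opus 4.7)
The plan is to reduce the inequality to \Cref{ApproximateCumm1} by first algebraically manipulating $\Psi_{\widetilde{\lambda}}((\mathbf{x}_1,R_1)(\mathbf{x}_2,R_2))$ and $\Psi_{\widetilde{\lambda}}(\mathbf{x}_1,R_1)\Psi_{\widetilde{\lambda}}(\mathbf{x}_2,R_2)$ into a form where both sides are products involving $\exp_{\SO(3)}$ of elements of $V$ times the same rotation $R_1 R_2$. The key observation is that, under the identification $V \cong \F^{2}$ coming from the basis $\{S_{2}, S_{3}\}$, the adjoint action of $R \in \SO(2) \le \SO(3)$ on $V$ agrees with the usual rotation action of $\SO(2)$ on $\F^{2}$, since $\SO(2)$ has been identified with the stabilizer of the north pole $\mathbf{n}$. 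Consequently, for any $R \in \SO(2)$ and $\mathbf{x} \in V$, the conjugation formula $R \exp_{\SO(3)}(\mathbf{x}/\widetilde{\lambda}) = \exp_{\SO(3)}(R\mathbf{x}/\widetilde{\lambda}) R$ holds.

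First, using $(\mathbf{x}_1, R_1)(\mathbf{x}_2, R_2) = (\mathbf{x}_1 + R_1 \mathbf{x}_2, R_1 R_2)$ together with the definition \eqref{ContractionMap} of $\Psi_{\widetilde{\lambda}}$, I would write
\begin{equation*}
    \Psi_{\widetilde{\lambda}}((\mathbf{x}_1, R_1)(\mathbf{x}_2, R_2)) = \exp_{\SO(3)}\Bigl(\tfrac{\mathbf{x}_1 + R_1 \mathbf{x}_2}{\widetilde{\lambda}}\Bigr)\, R_1 R_2.
\end{equation*}
Next, I apply the conjugation identity to commute $R_1$ past $\exp_{\SO(3)}(\mathbf{x}_2/\widetilde{\lambda})$ in the second product, obtaining
\begin{equation*}
    \Psi_{\widetilde{\lambda}}(\mathbf{x}_1, R_1)\, \Psi_{\widetilde{\lambda}}(\mathbf{x}_2, R_2) = \exp_{\SO(3)}\Bigl(\tfrac{\mathbf{x}_1}{\widetilde{\lambda}}\Bigr) \exp_{\SO(3)}\Bigl(\tfrac{R_1 \mathbf{x}_2}{\widetilde{\lambda}}\Bigr) R_1 R_2.
\end{equation*}
Subtracting and factoring out the orthogonal matrix $R_1 R_2$ on the right, whose presence does not affect the norm (Frobenius or operator), reduces the left-hand side of \eqref{ApporximateCummIneq} to
\begin{equation*}
    \Bigl\| \exp_{\SO(3)}\Bigl(\tfrac{\mathbf{x}_1}{\widetilde{\lambda}} + \tfrac{R_1 \mathbf{x}_2}{\widetilde{\lambda}}\Bigr) - \exp_{\SO(3)}\Bigl(\tfrac{\mathbf{x}_1}{\widetilde{\lambda}}\Bigr) \exp_{\SO(3)}\Bigl(\tfrac{R_1 \mathbf{x}_2}{\widetilde{\lambda}}\Bigr) \Bigr\|.
\end{equation*}

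Now \Cref{ApproximateCumm1} applied with $X = \mathbf{x}_1$ and $Y = R_1 \mathbf{x}_2$, both regarded as elements of $\so(3) \cong V \oplus \so(2)$ via the basis $\{S_{2}, S_{3}\}$, yields a bound of $C/\widetilde{\lambda}^{2}$ with $C = 2 e^{\|X\| + \|Y\|} - e^{\|X\|} - e^{\|Y\|} - \|X\| - \|Y\|$. Finally, I identify $\|X\| = \|\mathbf{x}_1\|$ and $\|Y\| = \|R_1 \mathbf{x}_2\| = \|\mathbf{x}_2\|$: the first identity follows from the chosen matrix norm being compatible with the Euclidean norm under the embedding $\F^{2} \hookrightarrow \so(3)$, and the second from $R_1 \in \SO(2)$ being an isometry of $\F^{2}$. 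Substituting into the definition \eqref{LemmaC} of $C(r_1, r_2)$ gives exactly the claimed bound.

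The main obstacle I anticipate is justifying cleanly the adjoint-action identity $R \exp_{\SO(3)}(\mathbf{x}/\widetilde{\lambda}) R^{-1} = \exp_{\SO(3)}(R\mathbf{x}/\widetilde{\lambda})$ under our identification of $\SO(2)$ with the north-pole stabilizer inside $\SO(3)$, since it requires verifying that Ad$_R$ restricted to $V = \Sp\{S_2, S_3\}$ acts as the ordinary planar rotation by $R$. All other steps are routine algebraic manipulations and applications of the previous lemma.
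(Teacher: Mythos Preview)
Your proposal is correct and follows essentially the same route as the paper's proof: both reduce the difference to $\bigl\|\exp\bigl(\tfrac{\mathbf{x}_1+R_1\mathbf{x}_2}{\widetilde{\lambda}}\bigr)-\exp\bigl(\tfrac{\mathbf{x}_1}{\widetilde{\lambda}}\bigr)\exp\bigl(\tfrac{R_1\mathbf{x}_2}{\widetilde{\lambda}}\bigr)\bigr\|$ via the conjugation identity $\bar R\exp(\mathbf{x})\bar R^{\top}=\exp(R\mathbf{x})$ for $R\in\SO(2)$ embedded as the block-diagonal $\bar R$, then invoke \Cref{ApproximateCumm1} and use $\|R_1\mathbf{x}_2\|=\|\mathbf{x}_2\|$. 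The only cosmetic difference is that you commute $R_1$ past the exponential first and then factor out $R_1R_2$, whereas the paper factors out the trailing rotations first and then applies the adjoint identity; the anticipated ``obstacle'' you flag is exactly what the paper records as \eqref{AdjointCumm}.
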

	
	\begin{proof}
		Throughout this proof we assume that $ \exp = \exp_{\SO(3)} $.
		Since we identify $ \F^{2} \cong \Sp \left\{ S_{1}, S_{2} \right\} $ (see \Cref{ContractionMapsSection}) and the matrix exponential commutes with conjunction, for every $ R \in \SO(2) $ and  $ \mathbf{x} \in \F^{2} $ we have 
		\begin{equation}
			\label{AdjointCumm}
			\exp (R^{\top} \mathbf{x}) 
			= \bar{R}^{\top} \exp (\mathbf{x})
			\bar{R}
			\mbox{ with }
			\bar{R}
			\coloneqq \left[\begin{matrix}
				R & \mathbf{0} \\
				\mathbf{0} & 1
			\end{matrix}\right].
		\end{equation}
		
		Now, denote the left hand side of \eqref{ApporximateCummIneq} by $ M $. From \eqref{ContractionMap} and the definition of $ \bar{R} $ in \eqref{AdjointCumm}, it follows that
		\begin{align*}
			\Psi_{\widetilde{\lambda}} \left( (\mathbf{x}_{1}, R_{1}) (\mathbf{x}_{2}, R_{2})\right)
			&= \Psi_{\widetilde{\lambda}} \left( \mathbf{x}_{1} + R_{1} \mathbf{x}_{2}, R_{1} R_{2} \right)
			= \exp \left( \frac{\mathbf{x}_{1} + R_{1} \mathbf{x}_{2} }{\widetilde{\lambda}} \right) \bar{R}_{1} \bar{R}_{2} 
		\end{align*}
		and
		\begin{align*}
			\Psi_{\widetilde{\lambda}} (\mathbf{x}_{1}, R_{1}) \Psi_{\widetilde{\lambda}} (\mathbf{x}_{2}, R_{2})
			&= \exp \left( \frac{\mathbf{x}_{1}}{\widetilde{\lambda}} \right) \bar{R}_{1} \exp \left( \frac{\mathbf{x}_{2}}{\widetilde{\lambda}} \right) \bar{R}_{2} .
		\end{align*}
		Since the matrix $ 2 $-norm is rotation-invariant, it follows from the computations above that
		\begin{equation*}
			M
			= \Ltwonorm{\exp \left( \frac{\mathbf{x}_{1} + R_{1} \mathbf{x}_{2}}{\widetilde{\lambda}} \right) - \exp \left( \frac{\mathbf{x}_{1}}{\widetilde{\lambda}} \right) \bar{R}_{1} \exp \left( \frac{\mathbf{x}_{2}}{\widetilde{\lambda}} \right) \bar{R}_{1}^{\top} }.
		\end{equation*}
		From \eqref{AdjointCumm}, it now follows that
		\begin{equation*}
			M 
			= \Ltwonorm{\exp \left( \frac{\mathbf{x}_{1} + R_{1} \mathbf{x}_{2}}{\widetilde{\lambda}} \right) - \exp \left( \frac{\mathbf{x}_{1}}{\widetilde{\lambda}} \right) \exp \left( \frac{R_{1} \mathbf{x}_{2}}{\widetilde{\lambda}} \right) }.
		\end{equation*}
		Taking $ X = \mathbf{x}_{1} $, $ Y = \mathbf{x}_{2} $, it follows from \Cref{ApproximateCumm1} that
		$ M 
		\le \frac{C}{\widetilde{\lambda}^{2}} $
		with 
		\begin{align*}
			C
			&= 2 e^{\Ltwonorm{\mathbf{x}_{1}} + \Ltwonorm{R_{1} \mathbf{x}_{2}}} - e^{\Ltwonorm{\mathbf{x}_{1}}} - e^{\Ltwonorm{R_{1} \mathbf{x}_{2}}} - \Ltwonorm{\mathbf{x}_{1}} - \Ltwonorm{R_{1} \mathbf{x}_{2}} \\
			&= 2 e^{\Ltwonorm{\mathbf{x}_{1}} + \Ltwonorm{\mathbf{x}_{2}}} - e^{\Ltwonorm{\mathbf{x}_{1}}} - e^{\Ltwonorm{\mathbf{x}_{2}}} - \Ltwonorm{\mathbf{x}_{1}} - \Ltwonorm{\mathbf{x}_{2}} .
		\end{align*}
	\end{proof}
	
	\Cref{ApproximateCumm2} is an analog of \Cref{NirsTheorem} of \cite{Sharon2018}. It differs from it in three key respects. First, the requirement that the translational parts satisfy an inequality constraint is dropped. Second, the dependence on the scaling parameters $ \lambda $ and $ \widetilde{\lambda} $ is no longer asymptotic. Third, the dependence of the bound on the elements of $ \SE(2) $ is explicit.
	
	Before proving \Cref{GroupActionApproxTheorem}, we introduce another auxiliary lemmas, which we prove in Appendix A. 
	\begin{lemma}
		\label{LipschitzStuff}
		$ d_{S^{2}} (\mathbf{x}, \mathbf{y}) 
		\le \sqrt{\pi} \Ltwonorm{\mathbf{x} - \mathbf{y}} $.
		
	\end{lemma}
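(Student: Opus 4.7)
The plan is to reduce the inequality to a one-variable trigonometric identity. Since both $\mathbf{x}$ and $\mathbf{y}$ lie on the unit sphere, the great-circle distance $\theta \coloneqq d_{S^{2}}(\mathbf{x}, \mathbf{y}) = \arccos(\mathbf{y}^{\top} \mathbf{x})$ lies in $[0, \pi]$, and the chord length satisfies
$$
\Ltwonorm{\mathbf{x} - \mathbf{y}}^{2} = 2 - 2 \mathbf{y}^{\top} \mathbf{x} = 2 - 2\cos\theta = 4 \sin^{2}(\theta/2),
$$
so $\Ltwonorm{\mathbf{x} - \mathbf{y}} = 2\sin(\theta/2)$. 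Thus the claim is equivalent to the purely scalar inequality
$$
\theta \;\le\; 2\sqrt{\pi}\,\sin(\theta/2), \qquad \theta \in [0, \pi].
$$

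Next I would verify this scalar inequality by elementary calculus. Set $g(\theta) \coloneqq 2\sqrt{\pi}\sin(\theta/2) - \theta$. Then $g(0) = 0$ and $g(\pi) = 2\sqrt{\pi} - \pi = \sqrt{\pi}(2 - \sqrt{\pi}) > 0$, since $\sqrt{\pi} < 2$. Differentiating gives $g'(\theta) = \sqrt{\pi}\cos(\theta/2) - 1$, which is strictly positive near $\theta = 0$ (where it equals $\sqrt{\pi} - 1 > 0$) and strictly negative near $\theta = \pi$ (where it equals $-1$), changing sign exactly once on $(0, \pi)$ at $\theta^{\star} = 2\arccos(1/\sqrt{\pi})$. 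Consequently $g$ increases on $[0, \theta^{\star}]$ and decreases on $[\theta^{\star}, \pi]$, so its minimum over $[0, \pi]$ is attained at one of the endpoints and equals $\min\{0,\, \sqrt{\pi}(2-\sqrt{\pi})\} = 0$. Hence $g \ge 0$ on $[0, \pi]$, which is the required inequality.

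There is no real obstacle here: the only subtlety is organizing the sign analysis of $g'$ correctly, and one could just as well invoke Jordan's inequality $\sin(x) \ge (2/\pi)\, x$ on $[0, \pi/2]$ to obtain the sharper bound $d_{S^{2}}(\mathbf{x},\mathbf{y}) \le (\pi/2)\Ltwonorm{\mathbf{x}-\mathbf{y}}$, from which the stated inequality follows immediately since $\pi/2 < \sqrt{\pi}$.
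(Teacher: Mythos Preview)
Your proof is correct and follows essentially the same route as the paper: both reduce to the scalar inequality $\theta \le 2\sqrt{\pi}\sin(\theta/2)$ via $\Ltwonorm{\mathbf{x}-\mathbf{y}}^{2} = 2(1-\cos\theta)$. The paper verifies this scalar bound directly from Jordan's inequality $\sin x \ge \tfrac{2}{\pi}x$ on $[0,\pi/2]$ (the alternative you mention at the end), whereas your primary argument uses a direct calculus analysis of $g(\theta)=2\sqrt{\pi}\sin(\theta/2)-\theta$; both are equally valid and elementary.
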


	We are now ready to prove our main result, \Cref{GroupActionApproxTheorem}:
	\begin{proof}[Proof of \Cref{GroupActionApproxTheorem}]
		Recall that $ \SubmerSO (R) = R \mathbf{n} $.
		Denote:
		\begin{equation}
			\label{L2NormCumm}
			\begin{aligned}
				K^{2}
				&\coloneqq \Ltwonorm{\kappa_{\widetilde{\lambda}} \left( (\mathbf{b}, R) \bullet f \right) - \Psi_{\widetilde{\lambda}} ((\mathbf{b}, R)) \bullet \kappa_{\widetilde{\lambda}} f }_{2}^{2} \\
				&= \int_{S^{2}} \left| \kappa_{\widetilde{\lambda}} \left( (\mathbf{b}, R) \bullet f \right) (\mathbf{x}) - \Psi_{\widetilde{\lambda}} ((\mathbf{b}, R)) \bullet \kappa_{\widetilde{\lambda}} f (\mathbf{x}) \right|^{2} dS^{2} (\mathbf{x}).
			\end{aligned}
		\end{equation}
		Fix $ \mathbf{x} \in S^{2} $ and let $ \mathbf{b}_{1} \in B_{\widetilde{\lambda} \pi } $ such that $ \SubmerSO \circ \Psi_{\widetilde{\lambda}} (\mathbf{b}_{1}, I) = \mathbf{x} $. We focus on two terms we subtract from one another in the integrand of \eqref{L2NormCumm}. Unraveling the definitions, it follows that the first term satisfies
		\begin{align}
			\nonumber
			\kappa_{\widetilde{\lambda}} \left( (\mathbf{b}, R) \bullet f\right) (\mathbf{x})
			&= (\mathbf{b}, R) \bullet f ( \mathbf{b}_{1}) \\
			\nonumber
			&= f \left( \left(\mathbf{b}, R\right)^{-1} \mathbf{b}_{1} \right) \\
			\nonumber
			&= f \circ \SubmerSE \left( \left(\mathbf{b}, R \right)^{-1} (\mathbf{b}_{1}, I) \right)  \\
			&= \kappa_{\widetilde{\lambda}} f \circ \SubmerSO \circ \Psi_{\widetilde{\lambda}} \left( \left( \mathbf{b}, R \right)^{-1} \left(\mathbf{b}_{1}, I \right) \right)
			\nonumber \\
			\label{Cumm1Final} 
			&= \kappa_{\widetilde{\lambda}} f \left( \Psi_{\widetilde{\lambda}} \left( \left(\mathbf{b}, R \right)^{-1} (\mathbf{b}_{1}, I) \right) \mathbf{n} \right).
		\end{align}
		Similarly, the second term satisfies
		\begin{align}
			\nonumber
			\Psi_{\widetilde{\lambda}} ((\mathbf{b}, R)) \bullet \kappa_{\widetilde{\lambda}} f (\mathbf{x})
			&= \kappa_{\widetilde{\lambda}} f( \left(\exp (\mathbf{b} ) R\right)^{\top} \exp(\mathbf{b}_{1}) \mathbf{n} ) \\
			&= \kappa_{\widetilde{\lambda}} f( \Psi_{\widetilde{\lambda}} \left( \left(\mathbf{b}, R\right) \right)^{-1} \exp(\mathbf{b}_{1}) \mathbf{n} ) \nonumber \\
			&= \kappa_{\widetilde{\lambda}} f \left( \Psi_{\widetilde{\lambda}} \left( \left(\mathbf{b}, R\right)^{-1} \right) \Psi_{\widetilde{\lambda}}  ((\mathbf{b}_{1}, I) \mathbf{n}) \right),
			\label{Cumm2Final}
		\end{align}
		where the last transition follows from the first half of \Cref{NirsTheorem}, which does not require any restrictions on $ \lambda $.
		Now, note that $ (\mathbf{b}, R) = (0, R) (\mathbf{b}, I) $ and that $ \Psi_{\widetilde{\lambda}} (\mathbf{b}, R) = \Psi_{\widetilde{\lambda}} (0, R) \Psi_{\widetilde{\lambda}} (\mathbf{b}, I) $. Therefore, substituting $ \mathbf{b} = \mathbf{0} $ into \eqref{Cumm1Final} and $ \eqref{Cumm2Final} $ shows that the left-hand side of both equations are equal if $ (\mathbf{b}, R) $ is a purely rotational element of $ \SE(2) $. Thus, the integrand in \eqref{L2NormCumm} is zero on $ S^{2} $, which proves \eqref{CummResult1}.
		
		In order to prove \eqref{CummResult2}, let $ \alpha = \Psi_{\widetilde{\lambda}} \left( \left( \mathbf{b}, R \right)^{-1} (\mathbf{b}_{1}, I) \right) $ and $ \beta = \Psi_{\widetilde{\lambda}} \left( \left( \mathbf{b}, R \right)^{-1} \right) \Psi_{\widetilde{\lambda}} (\mathbf{b}_{1}, I) $. 
		Since $ \kappa_{\widetilde{\lambda}} f $ is a smooth function with a compact domain $ S^{2} $, it is Lipschitz in the sense that there is $ L \ge 0 $ satisfying
		\begin{equation}
			\label{LipschitzOfProj}
			\left| \kappa_{\widetilde{\lambda}} f(\mathbf{x}) - \kappa_{\widetilde{\lambda}} f(\mathbf{y}) \right| \le L d_{S^{2}} ( \mathbf{x}, \mathbf{y} )
			\enskip \mbox{for all } \mathbf{x}, \mathbf{y} \in S^{2},
		\end{equation}
		where $ d_{S^{2}} (\mathbf{x}, \mathbf{y}) = \arccos(\mathbf{y}^{\top} \mathbf{x}) $ is the standard metric on the sphere (great-circle distance). 
		Therefore, 
		\begin{equation}
			\label{MainResultOfProof}
			\begin{aligned}
				| \kappa_{\widetilde{\lambda}} \left( (\mathbf{b}, R) \bullet f \right) (\mathbf{x}) &- \Psi_{\widetilde{\lambda}} ((\mathbf{b}, R)) \bullet \kappa_{\widetilde{\lambda}} f (\mathbf{x}) |\\
				&= \left| \kappa_{\widetilde{\lambda}} f (\alpha \mathbf{n}) - \kappa_{\widetilde{\lambda}} f (\beta \mathbf{n}) \right|, && \mbox{By \eqref{Cumm1Final} and \eqref{Cumm2Final}} \\
				&\le L d_{S^{2}} (\alpha \mathbf{n}, \beta \mathbf{n}), && \mbox{By \eqref{LipschitzOfProj}} \\
				&\le L \sqrt{\pi} \Ltwonorm{\alpha \mathbf{n} - \beta \mathbf{n}}, && \mbox{By \Cref{LipschitzStuff}} \\
				&\le L \sqrt{\pi} \Ltwonorm{\alpha - \beta}, && \mbox{Since $ \Ltwonorm{\mathbf{n}} = 1 $} \\
				&\le \frac{2 L \sqrt{\pi} C \left( \Ltwonorm{\mathbf{b}}, \Ltwonorm{\mathbf{b}_{1}} \right)}{\widetilde{\lambda}^{2}}  && \mbox{By \Cref{ApproximateCumm2}},
			\end{aligned}
		\end{equation}
		where $ C $ is defined in \eqref{LemmaC}.
		
		From \Cref{ExplicitSubmersionLemma}, it is easy to show that if we treat $ \SubmerSO \circ \Psi_{\widetilde{\lambda}}  $ as map from $ \F^{2} $ to the sphere, it can be represented in polar coordinates as
		\begin{equation*}
			\SubmerSO \circ \Psi_{\widetilde{\lambda}} \left( r, \phi \right)
			= \left( \cos \phi \sin \left(\frac{r}{\widetilde{\lambda}}\right), \sin \phi \sin \left( \frac{r}{\widetilde{\lambda}} \right), \cos \left(\frac{r}{\widetilde{\lambda}}\right) \right)^{\top} .
		\end{equation*}
		Furthermore, it is obvious that it is a smooth parameterization of $ S^{2} \setminus \left\{-\mathbf{n}\right\} $ with a Gram determinant
		\begin{equation*}
			\det G 
			= \frac{1}{\widetilde{\lambda}^{2}} \sin^{2} \left( \frac{r}{\widetilde{\lambda}}\right) .
		\end{equation*}
		Finally, since by hypothesis $ \left(\mathbf{b}, R\right) \bullet f $ is supported within $ B_{\lambda \pi} $, both $ \Psi_{\widetilde{\lambda}} \left(\left(\mathbf{b}, R\right) \right) \bullet \kappa_{\widetilde{\lambda}} f $ and $ \kappa_{\widetilde{\lambda}} \left( \left(\mathbf{b}, R\right) \bullet f \right) $ are supported within $ \SubmerSO \circ \Psi_{\widetilde{\lambda}} \left( B_{\lambda \pi} \right)$. 
		Overall, we can write \eqref{L2NormCumm} as
		\begin{equation*}
			K^{2} 
			= \frac{1}{\widetilde{\lambda}} \int_{0}^{\lambda \pi } \int_{0}^{2 \pi} \left| \kappa_{\widetilde{\lambda}} \left( (\mathbf{b}, R) \bullet f \right) (\mathbf{y}) - \Psi_{\widetilde{\lambda}} ((\mathbf{b}, R)) \bullet \kappa_{\widetilde{\lambda}} f (\mathbf{y}) \right|^{2} \sin \left( \frac{r}{\widetilde{\lambda}}\right)  d\phi d r, 
		\end{equation*}
		where $ \mathbf{y} = \SubmerSO \circ \Psi_{\widetilde{\lambda}} \left( \phi, r\right)  $. 
		From \eqref{MainResultOfProof} we obtain
		\begin{align*}
			K^{2} 
			&\le \left( \frac{2L \sqrt{\pi} }{\widetilde{\lambda}^{2}}\right)^{2}\frac{1}{\widetilde{\lambda}} \int_{0}^{\lambda \pi} \int_{0}^{2 \pi} C \left(\Ltwonorm{\mathbf{b}}, r\right)^{2} \sin\left(\frac{r}{\widetilde{\lambda}}\right) d\phi d r \\
			&= \left( \frac{2L \sqrt{\pi} }{\widetilde{\lambda}^{2}}\right)^{2}\frac{2 \pi }{\widetilde{\lambda}} \int_{0}^{\lambda \pi} C \left(\Ltwonorm{\mathbf{b}}, r\right)^{2} \sin\left(\frac{r}{\widetilde{\lambda}}\right) d r.
		\end{align*}
		It is clear that $ C $ in \eqref{LemmaC} satisfies 
		\begin{equation*}
			0 
			\le C \left(\Ltwonorm{\mathbf{b}} , r\right) 
			\le 2 e^{\Ltwonorm{\mathbf{b}} + r } 
			\le 2 e^{\Ltwonorm{\mathbf{b}} + \lambda \pi } , 
		\end{equation*}
		for $ 0\le  r \le \lambda \pi  $.
		It follows that
		\begin{align*}
			K^{2} 
			&\le \left( \frac{2L \sqrt{\pi} }{\widetilde{\lambda}^{2}}\right)^{2}\frac{2 \pi }{\widetilde{\lambda}} \cdot 2 e^{2 \left(\Ltwonorm{\mathbf{b}} + \lambda \pi\right) }  \cdot   \int_{0}^{\lambda \pi} \sin\left(\frac{r}{\widetilde{\lambda}}\right) d r \\
			&= \left( \frac{2L \sqrt{\pi} }{\widetilde{\lambda}^{2}}\right)^{2} \cdot 2 \pi \cdot 2 e^{2 \left(\Ltwonorm{\mathbf{b}} + \lambda \pi\right) }  \cdot \left( - \cos \left(\frac{r}{\widetilde{\lambda}}\right) \right) \bigg|_{0}^{\lambda \pi} \\
			&= \left( \frac{2L \sqrt{\pi} }{\widetilde{\lambda}^{2}}\right)^{2} \cdot 2 \pi \cdot 2 e^{2 \left(\Ltwonorm{\mathbf{b}} + \lambda \pi\right) }  \cdot \left( 1 - \cos \left(\frac{\lambda \pi}{\widetilde{\lambda}}\right) \right).
		\end{align*}
		Taking the square root of both sides, we obtain
		\begin{equation*}
			K 
			\le \frac{4 L \pi e^{\Ltwonorm{\mathbf{b}} + \lambda \pi }}{\widetilde{\lambda}^{2}} \left( 1 - \cos \left(\frac{\lambda \pi}{\widetilde{\lambda}}\right) \right)^{1/2}.
		\end{equation*}
	\end{proof}

	\subsection{The spherical bispectrum as an $ \SO(3) $-invariant representation of functions on the sphere}
	\label{SphericalBispectrumSection}
	\Cref{GroupActionApproxTheorem} suggests an approximately $ \SE(2) $-invariant representation for functions on the plane: project the function onto the sphere and then compute an orbit-characterizing $ \SO(3) $-invariant representation of the projected function. To that end, we present the spherical bispectrum, an orbit-characterizing $ \SO(3) $-invariant representation of functions on the sphere.
	The theory of the spherical bispectrum was largely developed by \cite{Kakarala1992,Kakarala1993,Kakarala2010}. These works heavily relied on tools from harmonic analysis over groups. 
	Here we merely draw on their results and do not revisit their theoretical underpinnings. Additionally, when needed, we drew material on spherical harmonics from \cite[Chp. 2]{Atkinson2012}.
	
	Let $ \LtwoStwo $ be the Hilbert space of all square-integrable complex-valued spherical functions equipped with the inner product 
	\begin{equation}
		\label{InnerProductSphere}
		\Eucprod{f}{g}_{S^{2}} 
		= \int_{S^{2}} f(\mathbf{x}) g(\mathbf{x})^{*} dS^{2} (\mathbf{x}).
	\end{equation}
	It induces a norm $ \Ltwonorm{f}_{S^{2}} = \sqrt{\Eucprod{f}{f}} $.
	Here $ dS^{2}  $ is the standard volume element on $ S^{2} $. 
	$ \LtwoStwo $ is the direct sum of the of spaces of spherical harmonics of arbitrary degree. To make this statement more precise, let $ \mathcal{H}_{\ell} $ be the space of spherical harmonics of degree $ \ell \in \F[Z] $; that is, 
	\begin{equation}
		\label{SphericalHarmonicsSpace}
		\mathcal{H}_{\ell} = \Sp \left\{ Y_{\ell,m} :S^{2} \to \F[C] \setsep m = -\ell, -\ell+1, \dots, \ell \right\},
	\end{equation}
	where the spherical harmonic function of degree $ \ell $ and order $ m $ are given by
	\begin{equation}
		\label{SphericalHarmonics}
		Y_{\ell, m} (\theta, \phi)
		= \sqrt{\frac{2 \ell + 1}{4 \pi} \frac{(\ell-m)!}{(\ell+m)!} } P_{\ell,m} (\cos \theta) e^{i m \phi}.
	\end{equation}
	Here, $ P_{\ell,m} (x) $ is the associated Legendre function of degree $ \ell $ and order $ m $. 
	The collection of all spherical harmonic functions is an orthonormal basis of $ \LtwoStwo $. Also, $ R \bullet \mathcal{H}_{\ell} \subseteq \mathcal{H}_{\ell} $ for all $ R \in \SO(3) $ and $ \ell $. 
	
	Fix $ f \in \LtwoStwo $. Since spherical harmonics \eqref{SphericalHarmonics} are orthonormal with respect to inner product \eqref{InnerProductSphere}, we can expand $ f $ in terms of \eqref{SphericalHarmonics} by taking as coefficients $ f_{\ell,m} = \Eucprod{f}{Y_{\ell,m}}_{S^{2}}  $. We say $ f $ has bandlimit $ L $ if $ f_{\ell,m} = 0 $ for all $ \ell > L $. Combining the results of \cite{Kakarala1992} and \cite{Kakarala2010}, the spherical bispectrum of $ f $ is the set of all numbers 
	\begin{equation}
		\label{SphericalBispectrum}
		b_{f} [\ell_{1}, \ell_{2}, \ell]
		= \sum_{m=-\ell}^{\ell}  \sum_{m_{1} = -\ell_{1}}^{\ell_{1}} C_{\ell_{1},m_{1}, \ell_{2}, m-m_{1} }^{\ell, m} f_{\ell,m} f_{\ell_{1},m_{1}}^{*} f_{\ell_{2},m-m_{1}}^{*}, 
	\end{equation}
	where $ \ell_{1} $ and $ \ell_{2} $ are non-negative integers and $ \left| \ell_{1}- \ell_{2} \right| \le \ell \le  \ell_{1} + \ell_{2} $. The numbers $ C_{\ell_{1},m_{1}, \ell_{2}, m_{2} }^{\ell, m} $ are the Clebsch-Gordan coefficients. These are real numbers related to the underlying group structure of $ \SO(3) $. We detail a numerical algorithm for their calculation in \Cref{BispectrumComputationSection}.
	
	The spherical bispectrum \eqref{SphericalBispectrum} is intimately related to the triple correlation function of $ f $, which is defined by:
	\begin{equation*}
		T_{f} (R_{1}, R_{2})
		= \int_{\SO(2)} f(\mathbf{x})^{*} f (R_{1} \mathbf{x}) f(R_{2} \mathbf{x}) d \mu (\mathbf{x}),
	\end{equation*}
	where $ d \mu $ is the Haar measure on $ \SO(2) $. 
	Kakarala and Mao \cite[Theorem 4.1(a)]{Kakarala2010} proved that calculating \eqref{SphericalBispectrum} is equivalent to calculating the Fourier transform on the group $ \SO(3) \times \SO(3) $ of the triple correlation of $ f $. 
	Furthermore, the Fourier transform of a the triple correlation of a bandlimited real-valued spherical function was shown in \cite{Kakarala1993} to uniquely determine this function up to a rotation, provided it is non-zero; that is, if for every $ 0 \le \ell \le L $ there is $ - \ell \le m \le \ell $ such that $ f_{\ell,m } \ne 0 $, $ b_{g} = b_{f} $ if and only if $ g = R \bullet f $ for some $ R \in \SO(3) $. 
	Thus, the spherical bispectrum \eqref{SphericalBispectrum} determines the orbit of a real-valued a function of bandlimit $ L $ with non-zero projection onto $ \mathcal{H}_{\ell} $ for all $ 0 \le \ell \le L $. 
	
	The specific form of the bispectrum we use in \eqref{SphericalBispectrum} is owed to Kondor \cite{Kondor2007}, but is easily derivable from \cite[eq. (24)]{Kakarala2010}. 
	Furthermore, Kakarala and Mao \cite[Theorem 4.1(b)]{Kakarala2010} showed that for every $ \left( \ell_{1}, \ell_{2}, \ell \right) $ we have $ b_{f} \left[\ell_{1}, \ell_{2} ,\ell\right]  = \zeta b_{f} \left[\ell_{2}, \ell_{1} ,\ell\right] $ for some $ \zeta \in \F[C] $ independent of $ f $. Thus, it is sufficient to compute \eqref{SphericalBispectrum} for $ \ell_{2} \le \ell_{1} $.
	Also, it is obvious that if $ \max\left\{\ell_{1}, \ell_{2}, \ell \right\} > L$, then \eqref{SphericalBispectrum} is zero for functions of bandlimit $ L $. 
	
	Overall, given a real-valued function $ f \in \LtwoStwo $ of bandlimit $ L $ which has a non-zero projection over $ \mathcal{H}_{\ell} $ for all $ 0\le \ell \le L $, we can represent it up to rotation by its spherical bispectrum \eqref{SphericalBispectrum} over all triplets $ (\ell_{1}, \ell_{2}, \ell) $ satisfying
	\begin{equation}
		\label{SphericalBispectrumIndices}
		0 \le \ell_{1} \le L, \enskip
		0 \le \ell_{2} \le \ell_{1}, \enskip
		\ell_{1} - \ell_{2} \le \ell \le \min\left\{ L, \ell_{1}+ \ell_{2}\right\}.
	\end{equation}
	This representation uniquely determines $ f $ up to rotation; that is, it characterizes the orbit \\
	$ \SO(3) \bullet f $.
	
	\section{Compactifiaction of functions on the plane: computational aspects}
	\label{CompactifiactionComputationSection}
	
	We developed the theory in \Cref{CompactifiactionTheorySection} for functions on the plane. Here we detail several computational aspects that arise when working with discrete samples of such functions.
	In \Cref{ProjectionSection}, we explain how we use discrete samples of a function on the plane to estimate the spherical harmonics coefficients of the projection of this function onto the sphere. 
	In \Cref{BispectrumComputationSection}, we explain how the spherical bispectrum \eqref{SphericalBispectrum} is evaluated for bandlimited functions on the sphere. Most importantly, we present a method, introduced by \cite{Straub2014}, for the evaluation of the Clebsch-Gordan coefficients.
	We demonstrate the accuracy of this method. We show numerically that the spherical bispectrum also echoes the results of \Cref{GroupActionApproxTheorem}. In particular, for sufficiently small translation sizes, the spherical bispectrum of an image and its translation are approximately the same.

	\subsection{Projecting an image onto the sphere}
	\label{ProjectionSection}
	Let $ f : \F^{2} \to \F $ be a function compactly supported within $ \left[-\zeta,\zeta\right]^{2} $, where $ \zeta = \cos \frac{\pi}{4} = \sin \frac{\pi}{4} $. This implies it is compactly supported within $ B_{\pi} $.
	Let $ G = \left\{(x_{i}, y_{j})\right\}_{i=0,j=0}^{n-1} $ ($ n>1 $) be a regular equidistant grid on $ \left[-\zeta, \zeta \right]^{2} $:
	\begin{equation}
		\label{GridDefinition}
		x_{i} 
		= -\zeta + \frac{2\zeta }{n-1}i, \
		i = 0, 1,\dots,n-1,
		\enskip
		\mbox{ and }
		\enskip
		y_{j} = -\zeta + \frac{2\zeta }{n-1}j, \
		j = 0, 1,\dots,n-1.
	\end{equation}
	Let $ I  $ be an a real $ n \times n $ matrix with $ I_{i,j} = f(x_{i}, y_{j}) $.
	Given a fixed $ \lambda \ge 1 $, our objective is to use the discrete data available to us to approximate the spherical harmonics coefficients of $ \kappa_{\lambda} f $ under the assumption that it has a bandlimit $ L $.
	
	Since the spherical harmonics \eqref{SphericalHarmonics} are orthonormal, our task amounts to approximating the integral
	\begin{equation}
		\label{IntegralToApproximate}
		\Eucprod{\kappa_{\lambda} f }{Y_{\ell,m}}
		= \int_{S^{2}} \kappa_{\lambda} f(\mathbf{x}) Y_{\ell,m} (\mathbf{x})^{*} d S^{2} (\mathbf{x}).
	\end{equation}
	Integrals of this form can be approximated well using spherical designs. A spherical $ t $-design is set of spherical points $ \mathcal{S}_{t} = \left\{\mathbf{x}_{1}, \mathbf{x}_{2}, \dots, \mathbf{x}_{N_{t}} \right\}  $ satisfying 
	\begin{equation*}
		\int_{S^{2}} Y_{\ell,m} (\mathbf{x}) d S^{2} (\mathbf{x})
		= \frac{1 }{N_{t}} \sum_{n=1}^{N_{t}} Y_{\ell, m} (\mathbf{x}_{n}),
		\quad 
		0 \le \ell \le t.
	\end{equation*}
	Since we assume $ \kappa_{\lambda} f $ has a bandlimit $ L $ and since $ Y_{\ell_{1}, m_{1}} \cdot Y_{\ell_{2}, m_{2}}^{*} \in \mathcal{H}_{\ell_{1} + \ell_{2}} $ \cite{Atkinson2012}, we approximate \eqref{IntegralToApproximate} using a spherical $ 2L $-design, thus:
	\begin{equation}
		\label{IntegralApproximation}
		f_{\ell,m}
		\approx \frac{1 }{N} \sum_{n=1}^{N} \kappa_{\lambda} f(\theta_{n}, \phi_{n} ) Y_{\ell, m} (\theta_{n}, \phi_{n} )^{*},
	\end{equation}
	where $ (\theta_{n}, \phi_{n} )  $ is $ \mathbf{x}_{n} $ in spherical coordinates.
	This can be written in matrix form. Let $ \mathbf{Y} = \mathbf{Y}(L, \mathcal{S}_{2L}) $ be  the complex matrix with rows index by $ 1,\dots,N_{t} $  and columns indexed by $ (\ell,m) $ defined by 
	\begin{equation}
		\label{SphericalHarmonicsMatrix}
		\mathbf{Y}_{n, (\ell,m)}
		= Y_{\ell,m} (\theta_{n}, \phi_{n} ).
	\end{equation}
	Our estimate of the spherical harmonics coefficients of $ f $ is then $ \frac{1}{N_{t}} \mathbf{Y}^{\dag} \mathbf{v} $, where $ \mathbf{v} $ are the values of $ \kappa_{\lambda} f $ on the spherical design $ \mathcal{S}_{2L} $.
	
	\begin{remark}
		Approximating \eqref{IntegralToApproximate} by evaluating \eqref{IntegralApproximation} on a  spherical $ 2L $-design might be numerically unstable, especially for large bandlimits $ L $. In order to deal with that, using a $ K $-design with $ K > 2L $ is advisable, at least when $ L $ is large. 
	\end{remark}
	
	In order to calculate \eqref{IntegralApproximation}, one needs a spherical design and the value of $ \kappa_{\lambda} f $ on it. Tables of spherical designs were obtained from \cite{Womersley2017, Womersley2017Web}. 
	The values of $ \kappa_{\lambda} f $ on these points are estimated by employing \eqref{ExplicitProjectionFormula} and interpolation. At the heart of \eqref{ExplicitProjectionFormula} is a mapping $ (\theta, \phi)  \mapsto (\lambda \theta \cos \phi, \lambda \theta \sin \phi) $ from the sphere to the plane. This mapping can easily be shown to be invertible as mapping from $ S^{2} \setminus \left\{ (0,0,-1)^{\top}\right\} $ onto $ B_{\lambda \pi} $. We map the spherical design points onto the plane and isolate those that fall inside $ \left[-\zeta,\zeta\right]^{2} $. We then interpolate the known values of $ f $ on the grid $ G $ onto these spherical design points. By \eqref{ExplicitProjectionFormula}, the values of $ f $ on any other spherical design point is necessarily zero.
	
	Three points are important to note here. 
	First, relying on interpolation complicates the handling of noisy data, since it introduces correlations in the noise. We deal with that in detail in \Cref{DenoisingSection}. 
	Second, evaluating spherical harmonics is computationally costly. Thus, when calculating $ \mathbf{Y} $ in \eqref{SphericalHarmonicsMatrix}, it is advisable to exploit the fact the values of $ \kappa_{\lambda} f $ are assumed to be zero on spherical design points that are projected to outside of $ \left[-\zeta,\zeta\right]^{2} $. We therefore remove said points from the spherical design before calculating \eqref{SphericalHarmonicsMatrix}. For brevity, we will still refer to this subset of the original spherical design as $ \mathcal{S}_{2L} $. 
	Third, if the image includes additive noise, one would want to consider how the projection affects it.
	In \Cref{NoiseStatisticsAppendix}, we discuss that for the noise model we use throughout the paper. In particular, we empirically demonstrate that additive white noise in the image approximately results in an additive white noise in the estimated spherical harmonics coefficients.
	
	\subsection{Calculating the spherical bispectrum}
	\label{BispectrumComputationSection}
	
	The procedure described in \Cref{ProjectionSection} uses discrete samples of a function on the plane to approximate the spherical harmonics coefficients up to bandlimit $ L $ of its projection onto the sphere. 
	Our objective now is to use the spherical harmonics coefficients to calculate its spherical bispectrum, its $ SO(3) $-invariant representation that we presented in \Cref{SphericalBispectrumSection}. 
	
	Let $ f $ be a bandlimited function on the sphere with spherical harmonics coefficients $ \left\{f_{\ell,m}\right\}_{\ell=0}^{L} $ up to bandlimit $ L $. The bispectrum of $ f $ is calculated by evaluating \eqref{SphericalBispectrum}. Since the Clebsch-Gordan coefficients $ C_{\ell_{1}, m_{1}, \ell_{2}, m_{2} }^{\ell,m} $ are non-zero only when $ m_{1} + m_{2} = m $ (\cite[Chp. 8, eq. (2)]{Varshalovich1988}), it follows that $ C_{\ell_{1}, m_{1}, \ell_{2}, m-m_{1}}^{\ell,m} $ is non-zero only when $ \underline{m_{1}} \le m_{1} \le \overline{m_{1} }  $ for $ \underline{m_{1}} = \max\left\{ -\ell_{1}, m-\ell_{2} \right\}  $ and $ \overline{m_{1}} = \min\left\{ \ell_{1}, m+\ell_{2} \right\} $. Thus, \eqref{SphericalBispectrum} becomes
	\begin{equation}
		\label{SphericalBispectrumSimplified}
		b_{f} [\ell_{1}, \ell_{2}, \ell]
		= \sum_{m=-\ell}^{\ell} f_{\ell,m} \sum_{m_{1} = \underline{m_{1}}}^{\overline{m_{1}}} C_{\ell_{1}, m_{1}, \ell_{2}, m-m_{1}}^{\ell,m} f_{\ell_{1}, m_{1}}^{*} f_{\ell_{2}, m-m_{1}}^{*}. 
	\end{equation}
	This expression is evaluated for all triplets $ (\ell_{1}, \ell_{2}, \ell) $ in \eqref{SphericalBispectrumIndices} and the resulting values are saved in a linear array in lexicographical ordering of the triplets. We refer to this array as the bispectrum vector and denote it by $ \mathbf{b}_{f} $. 
	
	In order to evaluate \eqref{SphericalBispectrumSimplified}, the Clebsch-Gordan coefficients need to be computed. We follow the method elaborated in \cite{Straub2014} and compute all relevant Clebsch-Gordan coefficients in advance. For a fixed $ (\ell_{1}, \ell_{2}, \ell) $ triplet and a fixed $ -\ell \le m \le \ell $, \cite{Straub2014} showed that $ \mathbf{c} = \left(C_{\ell_{1}, m_{1}, \ell_{2}, m - m_{1}}^{\ell, m} \right)_{m_{1} = \underline{m_{1}}}^{\overline{m_{1}}} $ is a solution of the linear equation  $ \mathbf{C} \mathbf{c} = \mathbf{0} $. Here, $ \mathbf{C} $ is the tridiagonal $ (n-1) \times n $ real matrix, where $ n = \overline{m_{1}} - \underline{m_{1}} + 1 $  and its diagonals are
	\begin{align*}
		\mathbf{C}_{i, i}
		&= \ell_{1} (\ell_{1} + 1) 
		+ \ell_{2} (\ell_{2}+ 1)
		+ 2 m_{1, i} m_{2, i} 
		- \ell (\ell+1) \\
		\mathbf{C}_{i,i+1}
		&= \mathbf{C}_{i+1,i}
		= \sqrt{\ell_{1} (\ell_{1} +1) - m_{1,i} m_{1, i+1}} \sqrt{\ell_{2} (\ell_{2} +1) - m_{2,i} m_{2, i+1}} ,
	\end{align*}
	where
	\begin{equation*}
		m_{1,i} 
		= \underline{m_{1}} + i - 1
		\enskip\mbox{and}\enskip 
		m_{2, i} = m - \underline{m_{1}} - i + 1.
	\end{equation*}
	This equation does not have a unique solution. However, the Clebsch-Gordan coefficients must have norm of $ 1 $ and by convention satisfy $ C_{\ell_{1}, \overline{m_{1}}, \ell_{2}, m - \overline{m_{1}}}^{\ell, m} > 0 $ (see \cite[3]{Straub2014}). We therefore find a solution for this linear equation by choosing $ C_{\ell_{1}, \overline{m_{1}}, \ell_{2}, m - \overline{m_{1}}}^{\ell, m} = \frac{1}{\overline{m_{1}} - \underline{m_{1}} + 1}$ and determining the remaining coordinates in the solution vector by backward substitution. The resulting vector is then normalized to be a unit vector.
	
	We tested how well the result of \Cref{GroupActionApproxTheorem} are echoed in the invariance of the bispectrum. 
	To that effect, we generated a random image using the following procedure. First, we generated a random set of spherical harmonics coefficients of a real-valued function on the sphere $ f $ of bandlimit $ 14 $.
	The coefficients were sampled uniformly from the set of coefficients satisfying 
	\begin{equation}
		\label{RealSHCSphericalDomain}
		\sum_{m=-\ell}^{\ell} \left|f_{\ell,m}\right|^{2} 
		= 1 
	\end{equation}
	for all $ 0 \le \ell \le 14 $. 
	Given a square-integrable function $ f : S^{2} \to \F $, its spherical harmonics coefficients are
	\begin{equation*}
		f_{\ell,m}
		= \Eucprod{f}{Y_{\ell,m}}
		= \int_{S^{2}} f(\mathbf{x}) Y_{\ell,m} (\mathbf{x})^{*} d S^{2} (\mathbf{x}).
	\end{equation*}
	Since $ Y_{\ell,m} (\mathbf{x})^{*} = (-1)^{m} Y_{\ell,-m}(\mathbf{x}) $ \cite{Varshalovich1988} and $ f $ is real-valued, it follows that
	\begin{equation}
		\label{RealSHCSymmetry}
		f_{\ell, m}^{*}
		= (-1)^{m} f_{\ell, -m}.
	\end{equation}
	Thus, the coefficients were sampled uniformly from those satisfying both \eqref{RealSHCSphericalDomain} and \eqref{RealSHCSymmetry}.
	The resulting function was back-projected onto a $ 101\times 101 $ regular equidistant grid in $ [-\zeta,\zeta]^{2} $ by evaluating \eqref{ExplicitProjectionFormula} on all grid points for scaling parameter $ \lambda = 1 $. 
	In the resulting images, the first and last $ 20 $ rows and columns of pixels were set to zero and then smoothed using a Gaussian filter. 
	This was done in order to create an image that has approximately zero in its margin, with enough room to apply translations to.
	The resulting image was projected onto the sphere by estimating its spherical harmonics coefficients up to bandlimit $ 16 $ with $ \lambda = 1 $. The final image is produced by back-projecting this spherical function onto the same grid with $ \lambda  = 1 $.
	Hereafter, we refer to these images as random images.
	\autoref{RandomImageExample} shows eight examples of random images generated using this procedure.
	
	\begin{figure}[t]
		\begin{center}
			\begin{subfigure}{\textwidth}
				\begin{center}
					\includegraphics[scale=0.6]{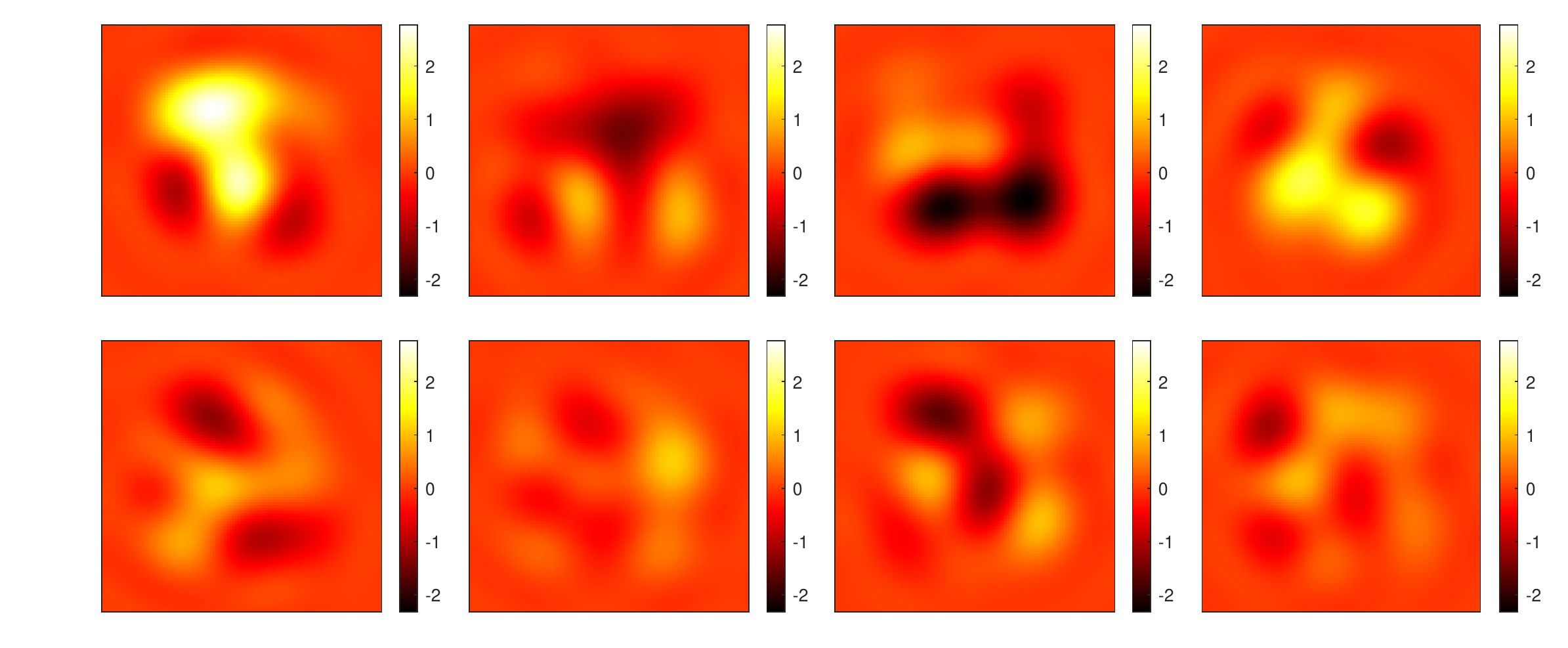}
					
					\caption{Random images from \Cref{BispectrumComputationSection}}
					\label{RandomImageExample}
				\end{center}
			\end{subfigure}
			
			\begin{subfigure}{\textwidth}
				\begin{center}
					\includegraphics[scale=0.6]{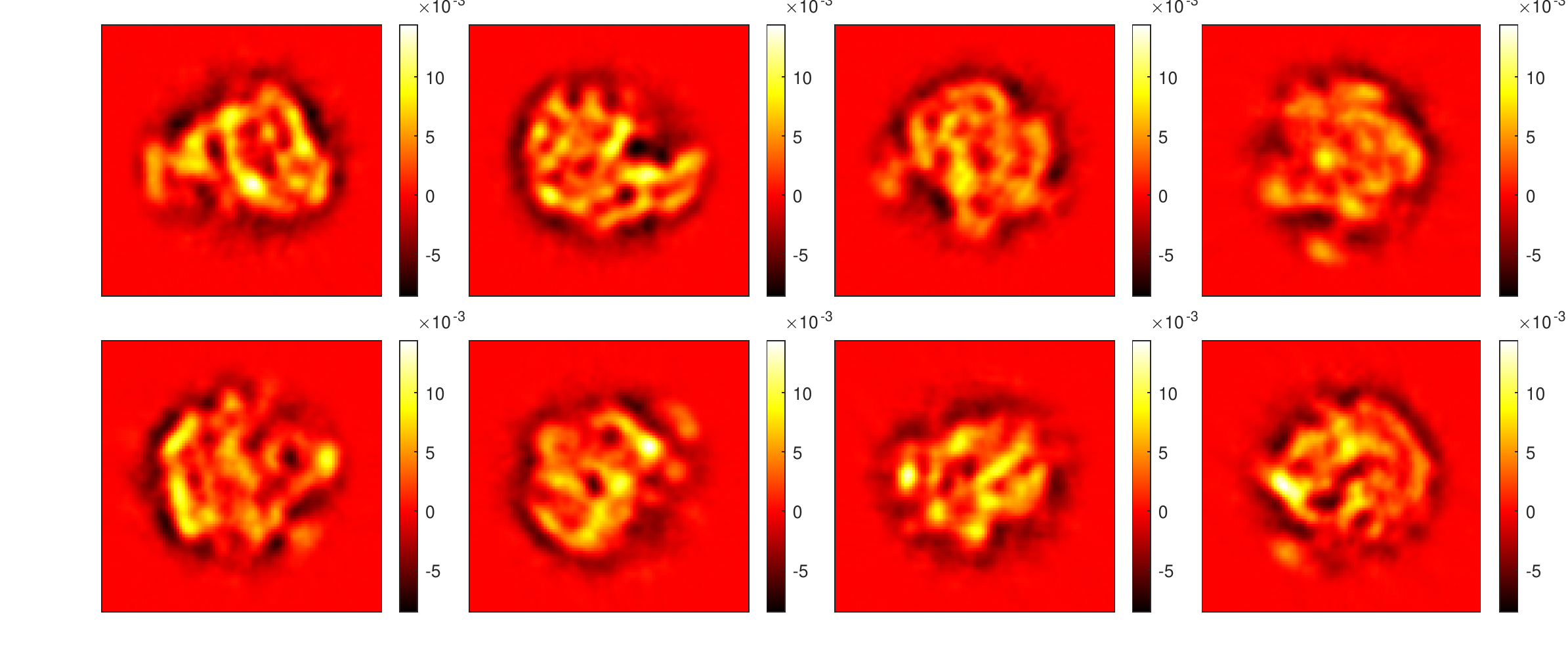}
					
					\caption{Simulated cryo-EM images}
					\label{RandomImageExampleCryo}
				\end{center}
				
			\end{subfigure}
		\end{center}
		\caption{\small
			(a) Examples of random images produced using the procedure described in \Cref{BispectrumComputationSection}.
			(b) Examples of projections in random directions produced using ASPIRE \cite{aspire}.}
		
	\end{figure}
	
	For an image generated as described above, we performed three separate experiments. First, we tested whether its bispectrum is invariant to rotations. The original image was rotated, the rotated image was projected onto the sphere and its coefficients up to bandlimit $ 16 $ were estimated. Its bispectrum was calculated and compared with the bispectrum of the projection of the original image onto the sphere. As \autoref{BispectrumInvarianceFigure} (left) shows, the relative bispecturm error is negligible, and consistent with the numerical perturbations expected the various approximations we use throughout, e.g., our use of interpolation (see \Cref{ProjectionSection}). This is in keeping with \Cref{GroupActionApproxTheorem}\ref{GroupActionApproxTheoremPureRot}.
	
	We next tested whether the bispectrum is invariant to translations. In polar coordinates, every translations is represented by its magnitude and direction. 
	The image was translated by translations of increasing size. For each translation size, the image was translated $ 10^{3} $ times, its bispectrum computed and compared with the bispectrum of the original image. The translation direction was independently, uniformly sampled. \autoref{BispectrumInvarianceFigure} (middle) shows the relative bispectrum error averaged over the translation directions, as well as the symmetric region around the mean where $ 95 $ percent of the samples were. 
	In \autoref{BispectrumInvarianceFigure} (right), we followed a similar procedure, except the image was also rotated. For every translation size, the image was rotated and translated $ 2\cdot 10^{3} $ times. The translation direction and the rotation angle were independently, uniformly sampled. 
	The results indicate the bispectrum is indeed approximately invariant to translations. The error clearly scales with the translation size, as expected from \Cref{GroupActionApproxTheorem}\ref{GroupActionApproxTheoremNonPureRot}.
	Of particular interest to our future numerical experiments, the relative error did not exceed $ 5 $ percent for translation sizes of $ 10 $ pixels (about $ 10 $ percent of the image size) and remained in single digits for even larger translation sizes.
	
	We repeated this experiment with a  simulated cryo-EM image. 
	The image was of size $ 101 \times 101 $ pixels and was simulated as a tomographic projection of a volume in a uniformly random direction using the built in simulation module of the MATLAB package ASPIRE \cite{aspire}.
	Examples of such images appear in \autoref{RandomImageExampleCryo}. 
	Changing the images required increasing the bandlimit in order to accurately represent the projected images on the sphere (see \Cref{ParameterChoiceSection} for details). 
	We used a bandlimit of $ 70 $.
	For a such a large bandlimit, calculating the bispectrum is substantially more time consuming than for bandlimit $ 16 $ we used in the experiment of \autoref{BispectrumInvarianceFigure}.
	Therefore, in order to keep execution time reasonable, in this experiment we sampled much less translation directions, rotation angles in the translation only and translation and rotation experiments (dozens instead of thousands). 
	The results, which appear in \autoref{BispectrumInvarianceCryoFigure}, are substantially the same as the results obtained for a random image, in that the bispectrum is approximately invariant to both rotations and translations, provided the translation size is small enough.
	
	
	\begin{figure}[t]
		\begin{center}
			
			\begin{subfigure}{\textwidth}
				\begin{center}
					\includegraphics[scale=0.55]{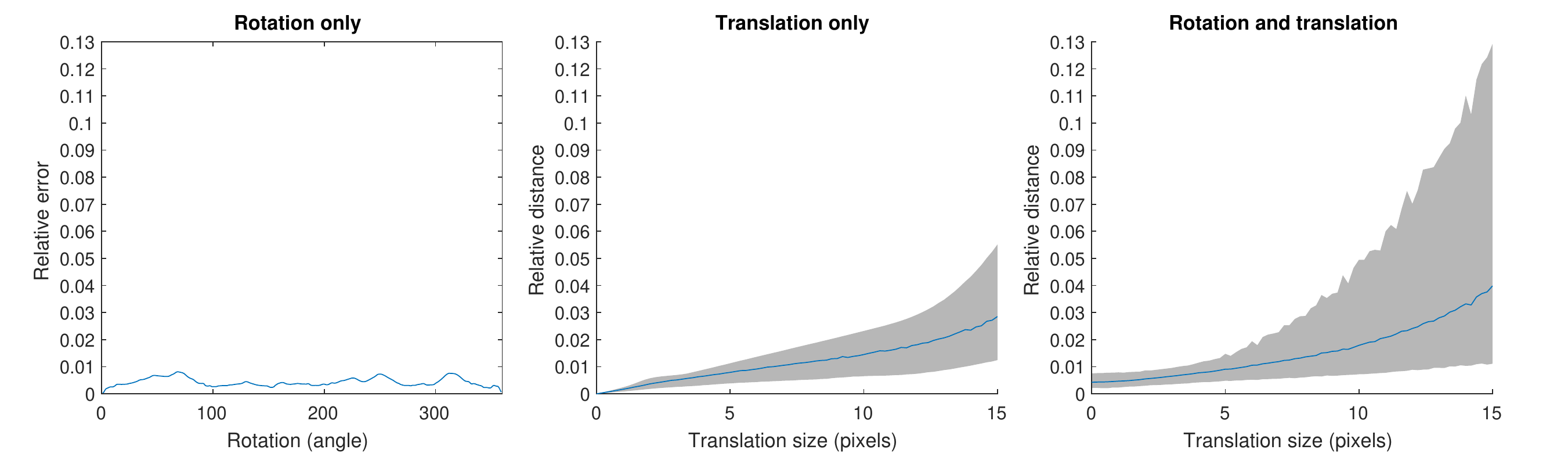}
					
					\caption{Random image from \Cref{BispectrumComputationSection}}
					\label{BispectrumInvarianceFigure}
				\end{center}
			\end{subfigure}
			
			\begin{subfigure}{\textwidth}
				\begin{center}
					\includegraphics[scale=0.55]{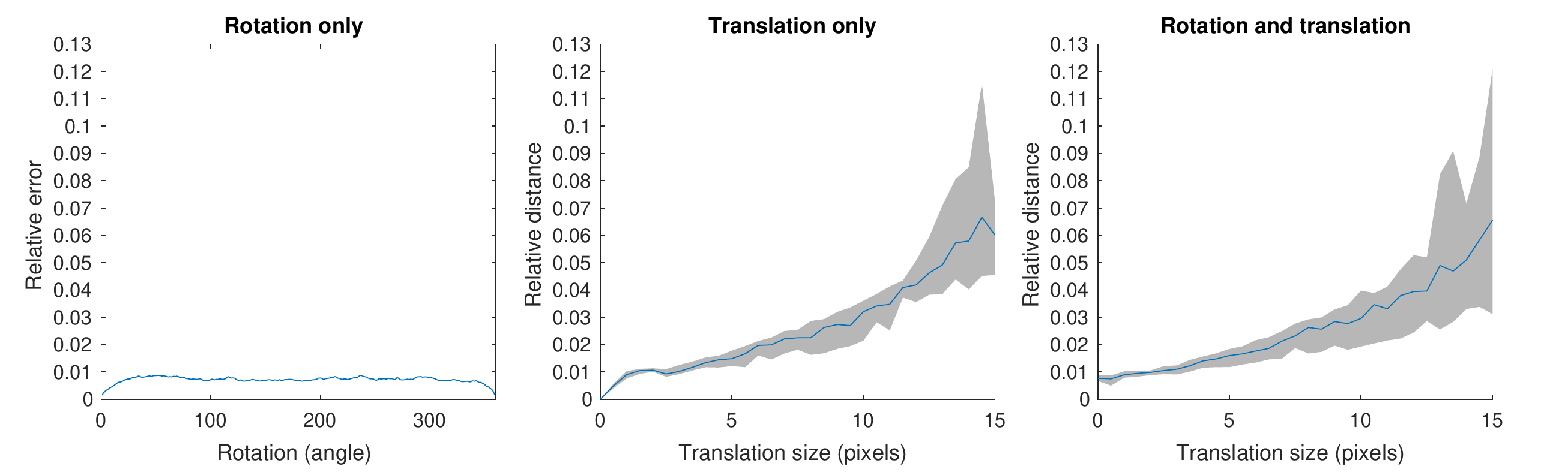}
					
					\caption{Simulated cryo-EM image}
					\label{BispectrumInvarianceCryoFigure}
				\end{center}
				
			\end{subfigure}

		\end{center}
		\caption{\small
			\textbf{\textit{The bispectrum is approximately invariant to rotations and translations of an image.}} The relative bispectrum error is below $ 1 $ percent when the image is only rotated (\textbf{left}) and scales with the translation size when the image is translated only (\textbf{middle}) or translated and rotated (\textbf{right}). The blue curve in the middle and right subfigures is the relative bispecturm error averaged over translation directions and translation directions and rotation angles, respectively. The gray region is a symmetric empirical 95 percent confidence interval around the mean.}
		\label{BispectrumInvarianceFigureLabel}
	\end{figure}
	
	\subsection{Choice of scaling parameters and bandlimit}
	\label{ParameterChoiceSection}
	According to \Cref{GroupActionApproxTheorem} two key parameters, $ \lambda $ and $ \widetilde{\lambda} $, determine how well the orbits of $ \SE(2) $ are approximated by the orbits of $ \SO(3) $.
	The first, $ \lambda $, affects the embedding of the image in $ \F^{2} $, in that it is assumed throughout that the image is compactly supported within $ B_{\lambda \pi} $.
	The second, $ \widetilde{\lambda} $, is the parameter of the contraction map.
	As \autoref{MapVisualDemonstration} illustrates, for a fixed $ \lambda $, increasing $ \widetilde{\lambda} $ maps $ B_{\lambda \pi} $ to a progressively smaller spherical cap around the north pole of the sphere. 
	This implies that a larger $ \widetilde{\lambda} $ makes the fine details of the projected image finer still.
	In order to compensate for this, it is necessary to use a larger bandlimit when computing the projection, because the bandlimit determines the resolution of a spherical harmonics expansion.
	
	We demonstrate this phenomenon empirically.
	We generated a dataset of $ 50 $ images and projected them onto the sphere as described in \Cref{ProjectionSection}, with various choices of scaling parameter and bandlimit. 
	Recall that it is assumed there that every image is a discrete sample on a regular grid of a function, and note that this regular grid is within $ B_{1} $. In particular, it assumed the function is supported within $ B_{\pi} $. 
	In terms of \Cref{GroupActionApproxTheorem}, we fixed $ \lambda = 1 $, as we have in all the numerical experiments throughout the paper, and varied $ \widetilde{\lambda} $.
	For every scaling parameter, the projected images were then projected back onto the grid and the error relative to the ground truth was calculated. 
	For every pair of bandlimit and scaling parameter, the results were averaged over all $ 50 $ images in the dataset.
	\autoref{LambdaChoiceFigure} shows the results for a dataset of random images of the kind we described in \Cref{BispectrumComputationSection} and for a dataset of simulated cryo-EM images.
	As expected, as the scaling parameter increases, a larger bandlimit is required to maintain a similar relative error.
	
	\begin{figure}
		\begin{center}
			\includegraphics[scale=0.58]{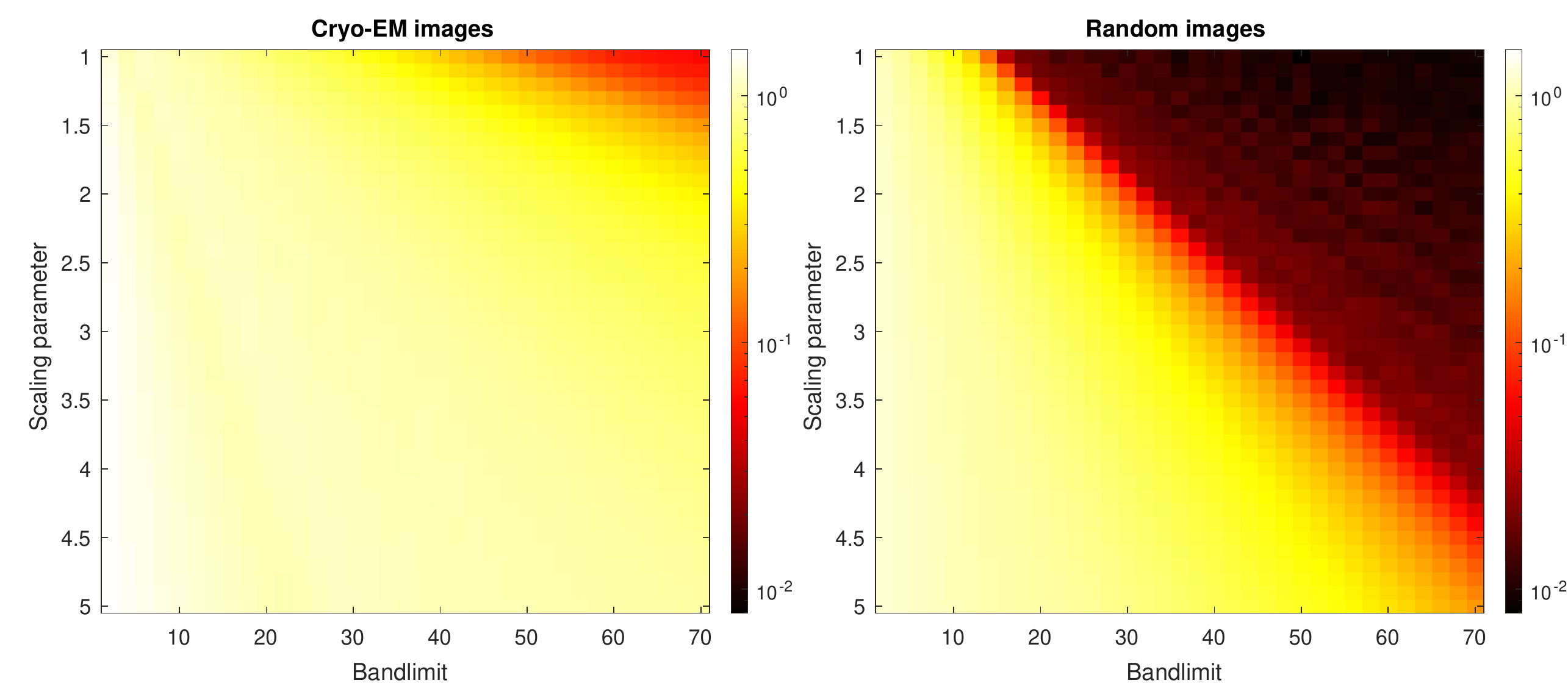}
		\end{center}
		\caption{\small
			\textbf{\textit{Trade-off between the scaling parameter and bandlimit.}} Error of back-projected image relative to the original image averaged over two sets of $ 50 $ images, one with the random images described in \Cref{BispectrumComputationSection} and one from a simulation of cryo-EM.
			As the scaling parameter increases, a larger bandlimit is required to maintain the same relative error. }
		\label{LambdaChoiceFigure}
	\end{figure}
	
	One has to balance between the need to accurately represent projected images on the sphere and the time-complexity of the calculation of the bispectrum. 
	As shown in \autoref{LambdaChoiceFigure}, for bandlimit $ 50 $ and $ 70 $ with scaling parameter of $ 1 $, the mean relative error of recovery for simulated cryo-EM images is about 13 and 6 percent, respectively. 
	However, despite the larger than 10 percent error, for most of our classification experiments on simulated cryo-EM images we used a bandlimit of $ 50 $ and achieved decent results (see \Cref{ClassificationExperimentSection}). 
	This difference is important computationally, since the bispectrum vector has about $ 68  $  thousand elements for bandlimit $ 50 $ and about 182 thousand for bandlimit $ 70 $.
	
	In this paper our concern is demonstrating in principle the viability of compactification in image processing, rather than optimizing it for real-world use. 
	In keeping with this motivation, we chose the bandlimit and the scaling parameter based on empirical criteria.
	For every type of simulated dataset, we sought a bandlimit and scaling parameter pair that fulfilled two requirements. 
	First, it had a decent relative error recovery rate in experiments like the ones showed in \autoref{LambdaChoiceFigure}. For MRA, we generally preferred those with less than 10 percent relative error, but as noted above this could be relaxed for classification.
	Second, it resulted in an approximately rotationally and translationally invariant bispectrum in experiments such as the ones shown in \autoref{BispectrumInvarianceFigureLabel}.
	
	To a large extent, the need to compromise for the sake of computational efficiency is the result of our use of spherical harmonics. 
	On the one hand, using spherical harmonics is convenient. Because of their role in harmonic analysis on the sphere and representation theory of $ \SO(3)  $, they yield a expression for the bispectrum that can be calculated in practice, as discussed in \Cref{BispectrumComputationSection}. This is a property other bases of functions on the sphere generally do not share.
	On the other hand, while our function is localized around the north pole of the sphere, spherical harmonics are generally not. Thus, in order to represent the fine details of an image, we have to take a relatively large bandlimit, resulting in a greater amount of spherical harmonics coefficients needed to represent an image and a greater inefficiency in calculating an invariant representation of them.
	We are currently seeking to address this issue by utilizing a locally supported basis on the sphere (for example, see \cite{Wang2017}).
	
	Finally, we note that in its current form our approach is probably inadvisable for image processing problems involving images with discontinuities or images with non-zero pixels at their boundary.
	Using spherical harmonics to represent the projection of such images onto the sphere is expected to make the trade-off mentioned above worse, as a larger bandlimit will be required to account for the discontinuities.
	A locally supported basis on the sphere might be useful to address this, if it can be used to represent a local feature of an image with higher resolution. Thus, it may allow a more accurate local representation of the discontinuities, without substantially increasing the number of basis elements required to represent the image as a whole.
	In \cite{Wang2017}, a basis capable of such local representations is elaborated upon, and perhaps their approach can be adapted for this puprose. However, we do not intend to pursue this direction here or in future. Our motivation is cryo-EM, where all images can be assumed to be smooth and our main interest here is in demonstration of the potential of our approach, rather than its optimization for real-world use.

	\section{Application I: Compactification in multi-reference alignment over $ \SE(2) $}
	\label{MRASection}
	
	We apply compactifiaction to a multi-reference alignment problem over $ \SE(2) $. Let $ \GrTr : \F^{2} \to \F $ be a smooth function compactly supported within $ \left[-\zeta, \zeta \right]^{2} $, where $ \zeta = \cos \frac{\pi}{4} = \sin \frac{\pi}{4} $. We refer to $ \GrTr $ as the ground truth function. We are given $ N $ discrete samples of $ \GrTr $ of the form:
	\begin{equation}
		\label{ImageFormationModelRecovery}
		\mathbf{I}_{j} 
		= \mathbf{D} \left( g_{j} \bullet \GrTr \right) + \bm{\varepsilon}_{j}, 
		\enskip
		j = 1,2,\dots,N.
	\end{equation}
	Here, $ g_{j} = \left(\mathbf{b}_{j}, R_{j} \right)\in \SE(2) $ is such that $ g_{j} \bullet \GrTr $ is also compactly supported within $ \left[-\zeta, \zeta \right]^{2} $. 
	Furthermore, $ \mathbf{b}_{j} = r_{j} \left(\cos \alpha_{j}, \sin \alpha_{j} \right)^{\top} $ where $ \left\{r_{j}\right\} $ are independently sampled from a uniform distribution on $ \left[0, \MaxTrans \right] $ and $ \alpha_{j} $ is likewise sampled from $ \left[0, 2\pi\right) $.
	Also, $ R_{j} $ is a rotation of the plane by $ \theta_{j} \in \left[0, 2\pi \right) $ radians counterclockwise and $ \left\{\theta_{j}\right\}  $ are independently sampled from a uniform distribution on $ \left[0, 2 \pi \right) $. 
	$ \mathbf{D} $ is a discretization operator sampling a function on a regular equidistant $ n \times n $ grid centered at the origin of $ \F^{2} $. 
	$ \mathbf{I}_{j}  $ and $ \bm{\varepsilon}_{j} $ are $ n \times n $ real matrices representing the samples of the rotated and translated $ \GrTr $ on said grid and noise, respectively. The elements of the noise matrices are Gaussian i.i.d., $ \bm{\varepsilon}_{j,m,k} \sim \mathcal{N} \left( 0, \sigma^{2} \right) $. 
	Stated roughly, our objective is to estimate $ \GrTr $ up to an $ \SE(2) $ action, from sampled images \eqref{ImageFormationModelRecovery}. 
	
	Our approach to the problem is a variant of the invariants approach to MRA described in the introduction. We compactify the samples. 
	By that we mean we project the images onto the sphere to produce bandlimited functions on the sphere, as detailed in \Cref{ProjectionSection}. 
	The spherical bispectrum of each projection is calculated as described in \Cref{BispectrumComputationSection}. 
	We use these to estimate the spherical bispectrum of the projection of $ \GrTr $ onto the sphere and then use it to estimate both the projection of $ \GrTr $ onto the sphere and $\mathbf{D} \GrTr $. 
	
	In the following sections we flesh out this approach.
	In \Cref{DenoisingSection}, we explain how we estimate the spherical bispectrum of the projection of $ \GrTr $ onto the sphere from samples \eqref{ImageFormationModelRecovery}.
	Then, in \Cref{InversionSection} and \Cref{AlignmentSection}, we explain how this estimate is used to estimate both the projection of $ \GrTr $ onto the sphere and $ \mathbf{D} \GrTr $.

	\subsection{Spherical bispectrum estimation}
	\label{DenoisingSection}
	We begin by estimating the spherical harmonics coefficients of the projection of the sampled images onto the sphere and calculating their corresponding bispectrum vectors, using the methods detailed in \Cref{CompactifiactionComputationSection}. We denote the estimate of the former by $ \mathbf{s}_{j} $ and the corresponding bispectrum vector by $ \mathbf{b}_{\mathbf{s}_{j}} $. 
	An unbiased estimator of the spherical bispectrum of the projection of $ \GrTr $ onto the sphere is
	\begin{equation}
		\label{UnbiasedEstimator}
		\widehat{\mathbf{b}} \left[\ell_{1}, \ell_{2}, \ell \right]
		= \frac{1}{N} \sum_{j=1}^{N} \left( \mathbf{b}_{\mathbf{s}_{j}} \left[\ell_{1}, \ell_{2}, \ell\right] - \sum_{k=1}^{3} K_{\mathbf{s}_{j}, k} \left[\ell_{1}, \ell_{2}, \ell  \right]\right),
	\end{equation}
	where 
	$ K_{1} $, $ K_{2} $ and $ K_{3} $ are sparse linear transformations acting on $ \mathbf{s}_{j} $, considered as a real vector. The derivation of \eqref{UnbiasedEstimator} and the full definition of $ K_{\mathbf{s}_{j}, 1} $, $ K_{\mathbf{s}_{j}, 2} $ and $ K_{\mathbf{s}_{j}, 3} $ can be found in \Cref{DebiasingAppendix}.
	Calculating this estimator requires only a single pass on the data. It can also be continuously updated when new data becomes available and therefore is memory-efficient. Its calculation can also be easily parallelized. 
	
	\subsection{Inversion of the spherical bispectrum}
	\label{InversionSection}
	Let $ \mathbf{b} $ be a bispectrum vector of a function on the sphere of bandlimit $ L $. 
	Our objective is to find a spherical function of bandlmit $ L $ which has bispectrum $\mathbf{b}$.
	We do that by solving the unconstrained non-linear least-squares problem
	\begin{equation}
		\label{BispectrumInversionLeastSquares}
		\min_{g \in L^{2}_{L} \left(S^{2} \right)}  \Ltwonorm{ \mathbf{b}_{g} - \mathbf{b}}^{2},
	\end{equation}
	where $ L^{2}_{L} \left(S^{2}\right) $  is the space of spherical functions of bandlimit $ L $ represented by their spherical harmonics coefficients, and $ \mathbf{b}_{g} $ for $ g \in L^{2}_{L} \left( S^{2}\right) $ is its bispectrum vector. In general, \eqref{BispectrumInversionLeastSquares} is a non-convex optimization problem. This is a recurring feature of bispectrum inversion by optimization in various contexts, e.g. \cite{Bendory2018,Ma2020}. Despite that, in the past this approach has worked well and yielded very good estimates. It was also found to be stable in the sense that inverting perturbed bispectrum vectors did not significantly magnify the errors.
	
	We solve \eqref{BispectrumInversionLeastSquares} for our estimator \eqref{UnbiasedEstimator} of the bispectrum of $ F $; that is, $ \mathbf{b} = \widehat{\mathbf{b}}$. We denote the solution by $ \widehat{\mathbf{s}} $ and refer to it as the estimator of $ \kappa_{\lambda} f $ on the sphere, up to 3D rotation. Though we defined the bispecturm vector as a vector of complex numbers, when solving this optimization problem we treat it as a vector of real numbers using the embedding of $ \F[C]^{n} $ into $ \F^{2n} $. Similarly, we solve for the real numbers $ \left\{ \re g_{\ell,m} \right\} \cup \left\{ \im g_{\ell,m}\right\} $, the real and imaginary parts of the spherical harmonics coefficients of $ g $. 
	
	In our implementation, we solve \eqref{BispectrumInversionLeastSquares} using MATLAB's implementation of a trust-regions algorithm in its built in lsqnonlin function. This is an iterative algorithm which requires initialization. 
	In \cite{Ma2020}, a similar approach to ours was used to estimate an image up to 2D rotations only. We project this estimate of the underlying image onto the sphere and use the resulting spherical harmonics coefficients to initialize our the trust-region iteration. 
	In addition, iterative optimization algorithms often perform better when an explicit derivative of the objective function is provided. We therefore provide an explicit expression for the derivative of $ g \mapsto \mathbf{b}_{g} - \widehat{\mathbf{b}} $:
	\begin{equation*}
		\frac{\partial b_{g} [\ell_{1}, \ell_{2}, \ell]}{\partial \re g_{\omega, s}}
		= E_{1} + E_{2} + E_{3}
		\enskip\mbox{and}\enskip
		\frac{\partial b_{g} [\ell_{1}, \ell_{2}, \ell]}{\partial \im g_{\omega,s}}
		= i \left( E_{1} + E_{2} + E_{3} \right),
	\end{equation*}
	where
	\begin{align*}
		E_{1} 
		&= \delta_{\omega, \ell} \delta_{s, m} 
		\sum_{m_{1} = \max\left\{ -\ell_{1}, s - \ell_{2} \right\}}^{\min\left\{ \ell_{1}, s + \ell_{2} \right\}} 
		C_{\ell_{1}, m_{1}, \ell_{2}, s - m_{1}}^{\ell, s} 
		g_{\ell_{1}, m_{1} }^{*} g_{\ell_{2}, s - m_{1} }^{*} \\
		E_{2} 
		&= \delta_{\omega, \ell_{1} } \delta_{s, m_{1}} 
		\sum_{m = \max\left\{ -\ell, s - \ell_{2} \right\}}^{\min\left\{ \ell, s + \ell_{2} \right\}} 
		C_	{\ell_{1}, s, \ell_{2}, m-s}^{\ell, m} 
		g_{\ell, m}  g_{\ell_{2}, s - m_{1} }^{*} \\
		E_{3} 
		&= \delta_{\omega, \ell_{2} } \delta_{s, m_{2}} 
		\sum_{m = \max\left\{ -\ell, s - \ell_{1} \right\}}^{\min\left\{ \ell, s + \ell_{1} \right\}} 
		C_	{\ell_{1}, m-s, \ell_{2}, s}^{\ell, m} 
		g_{\ell, m}  g_{\ell_{1}, m-s }^{*},
	\end{align*}
	where $ \delta_{k, m} $ is the Kronecker delta function defined as $ \delta_{k, m} = 1 $ if $ k = m $ and otherwise $ 0 $.
	These expressions show the derivative is sparse, which reduces both memory and computational requirements.
	
	\subsection{Alignment of two bandlimited functions on the sphere}
	\label{AlignmentSection}
	
	Recall the spherical bispectrum defines a function on the sphere up to 3D rotation. 
	We wish to use our explicit projection formula \eqref{ExplicitProjectionFormula} to project $ \widehat{\mathbf{s}} $ back onto the plane, onto the grid on which the original dataset was defined. However, $ \widehat{\mathbf{s}} $ might not be concentrated around the north pole $ \mathbf{n} = \left(0, 0,1\right)^{\top} $, where said grid is projected to by \eqref{ExplicitProjectionFormula}.
	
	To solve this problem, we allow ourselves to exploit the fact that in our simulations we know the ground truth and align  $ \widehat{\mathbf{s}} $ with the known $ \kappa_{\lambda} F $. 
	This is in keeping with our desire to demonstrate in principle the utility and viability of compactification in MRA problems, rather than optimize it for real-world use. 
	We align the two spherical functions by brute force. We use a deterministic sequence of rotations with ideal uniform distribution in $ \SO(3) $, that was developed in \cite{Yershova2009}. This sequence has the property that $ \left\{R_{i}\right\}_{i=1}^{M} \subseteq \SO(3)$ is uniformly "denser" in $ \SO(3) $ as $ M $ increases. 
	
	For each rotation  $R $ in this sequence, we rotate $ \widehat{\mathbf{s}} $ by $ R $. This is done by rotating a spherical design. We evaluate the spherical function represented by $ \widehat{\mathbf{s}} $ on a spherical $ t $-design with sufficiently large $ t $. The spherical design points are then rotated by $ R $. The resulting set of spherical points is also a spherical $ t $-design \cite{Womersley2017}. We use the values of $ \widehat{\mathbf{s}} $ as the values of the rotated function on the rotated design and estimate its spherical harmonics coefficients.
	We then measure the correlation between the resulting spherical function and the known ground truth spherical function, using the measure: 
	\begin{equation*}
		\mathrm{Corr} (f, g)
		= \frac{
			\re \sum_{\ell=0}^{L} \sum_{m=-\ell}^{\ell} f_{\ell,m} g_{\ell, m}
		}{
			\sqrt{\sum_{\ell=0}^{L} \sum_{m=-\ell}^{\ell} \left| f_{\ell,m} \right|^{2} }
			\sqrt{\sum_{\ell=0}^{L} \sum_{m=-\ell}^{\ell} \left| g_{\ell,m} \right|^{2} }
		}.
	\end{equation*}
	
	From among the first $ M $  rotations in the sequence, we choose the rotation $ R $ that maximizes the correlation and use it to rotate $ \mathbf{s} $ so as to align it to $ \kappa_{\lambda} \GrTr $. 
	In our experiments, we found that using the $ M = 72\cdot 2^{8} $ yields good alignment. 
	Once they are aligned, we evaluate its values on the spherical points corresponding by \eqref{ExplicitProjectionFormula} to the grid on the plane. We denote the resulting image by $ \widehat{\mathbf{I}} $. It is our estimate of the image $ \mathbf{D} \GrTr $. Since we align $ \widehat{\mathbf{s}} $ to the spherical harmonics coefficients of the ground truth image, $ \widehat{\mathbf{I}} $ and $ \mathbf{D} \GrTr $ are already aligned. We measure the error relative to $ \mathbf{D} \GrTr $ in the Frobenius norm.
	
	\subsection{Numerical experiments}
	\label{MRAExperimentSection}
	
	We tested our approach on a synthetic dataset of images. We first generated a ground truth $ 101\times 101 $ image using the procedure described in \Cref{BispectrumComputationSection}. 
	We generated a dataset of images by sampling rotations, translations and noise according to \eqref{ImageFormationModelRecovery} with $ \MaxTrans = 5 $ and applying them to this image. The noise variance $ \sigma^{2} $ was chosen to yield SNR of $ 0.5 $ and is assumed to be known.
	We projected these images onto the sphere with scaling parameter $ \lambda = 1 $ and bandlimit $ 16 $, estimated the bispectrum of the image, inverted it and aligned the resulting function with the projection of the original image onto the sphere. We then projected the aligned function back onto the $ 101 \times 101 $ equidistant grid on the plane and compared it with the original image. This was done for varying sample sizes, repeated $ 15 $ times for each with different noise realization.
	The estimation error was measure using the Frobenius norm for all our estimators.
	
	The results are shown in \autoref{MRASampleSizeFigure}. 
	We first draw the reader's attention to the bispectrum estimator. It was closer to the bispectrum of the original image as the sample size increased. Its slope in a logarithmic scale was $ -0.43 $, which comports decently with the expected $ -0.5 $ slope dictated by the law of large numbers. 
	We also draw the reader's attention to the behavior of our estimator of the image. 
	It improves with an increase in the sample size. 
	In contrast, the initial guess in the bispectrum inversion, which is based on an approach that takes only rotations into account, barely improves as the sample size increases. This is consistent with fact it only takes into account rotations. 
	
	Finally, we note the behavior of our image estimator near what we refer to as the back-projection bound.
	Projecting an image onto the sphere is not lossless. An empirical estimate of the lost information can be obtained by projecting the ground truth image $ \mathbf{D} F  $ from \eqref{ImageFormationModelRecovery} onto the sphere and then back-projecting it back onto the plane. The back-projected image error relative to the original image is a lower bound on the achievable relative error in our estimate of the image.  This lower bound, which we call the back-projection bound, came out in our experiment at about $ 0.02 $. This implies that our method is not expected to be able to estimate the image with less than about $ 2 $ percent relative error.
	Indeed, while our estimator of the image improves with an increase in the sample size, the improvement is approximately linear (in a log-log plot) away from the back-projection bound, but stagnates as the error of the estimator approaches the back-projection bound.
	
	We further tested how our approach responds to varying SNRs. We generated a dataset of images as described above with a fixed sample size $ N = 10^{4} $, and noise realization matching different SNRs. The noise variance $ \sigma^{2} $ is always assumed to be known. For every SNR, we performed $ 15 $ trials. The results appear in \autoref{MRASNRFigure}. All estimators, except the initial guess estimator, performed markedly worse in lower SNRs. While the initial guess is less sensitive to decreasing SNR, it also does not show much improvement when it is increased and is outperformed by our approach when the SNR increases. Combined with the results in \autoref{MRASampleSizeFigure}, it appears our approach may yield a better estimate of the image than the initial guess estimator for lower SNRs, when provided with a sufficiently large dataset. However, such a dataset may be prohibitively large. In order to reduce the required sample size, one might need to refine our approach, perhaps by adding denoising steps prior to projecting the image onto the sphere.

	\begin{figure}[t]
		\begin{subfigure}[t]{0.48\linewidth}
			\includegraphics[scale=0.42]{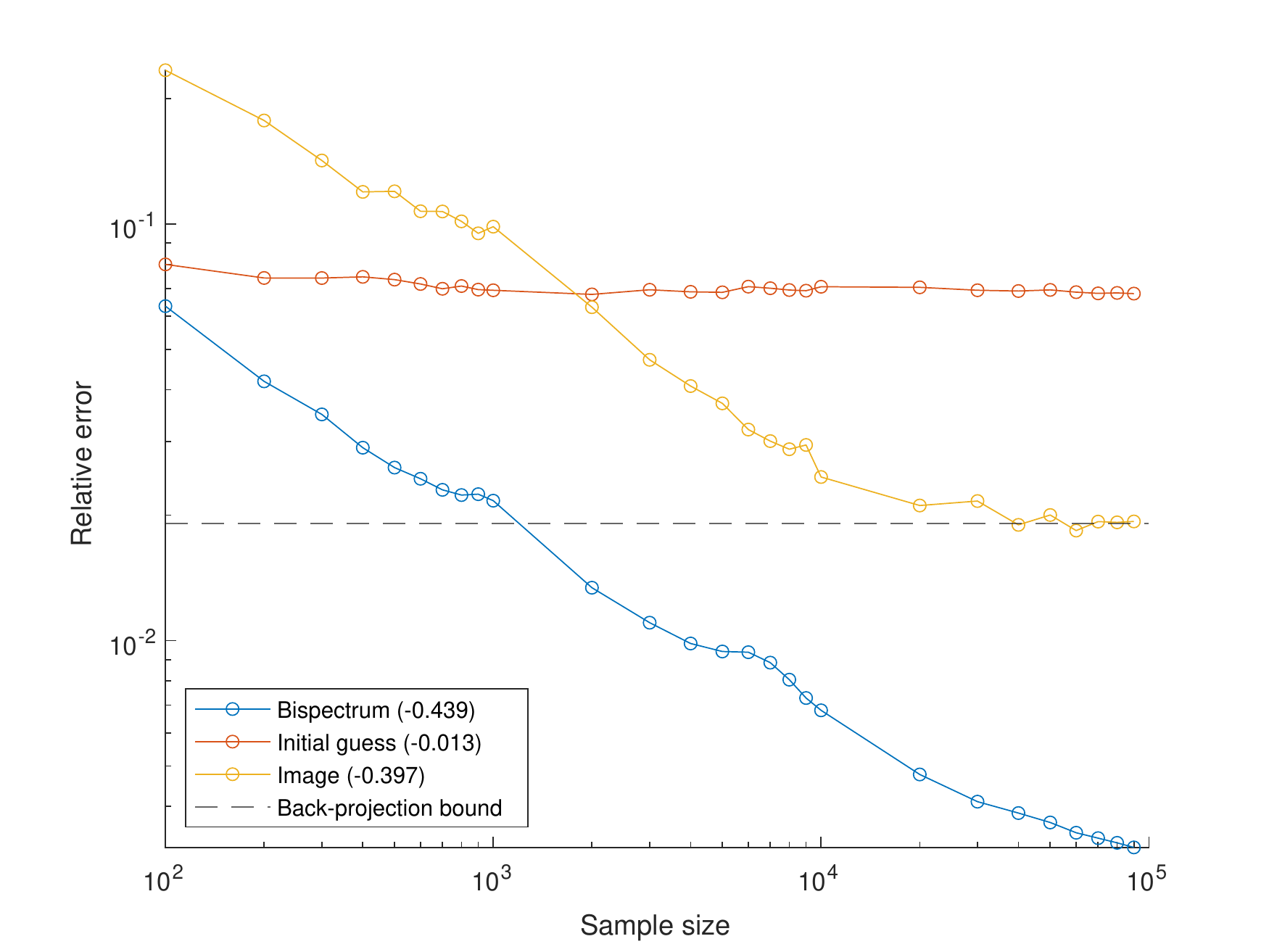}
			\caption{\small
				Relative error vs. sample size
			}
			\label{MRASampleSizeFigure}
		\end{subfigure}
		\begin{subfigure}[t]{0.48\linewidth}
			\includegraphics[scale=0.42]{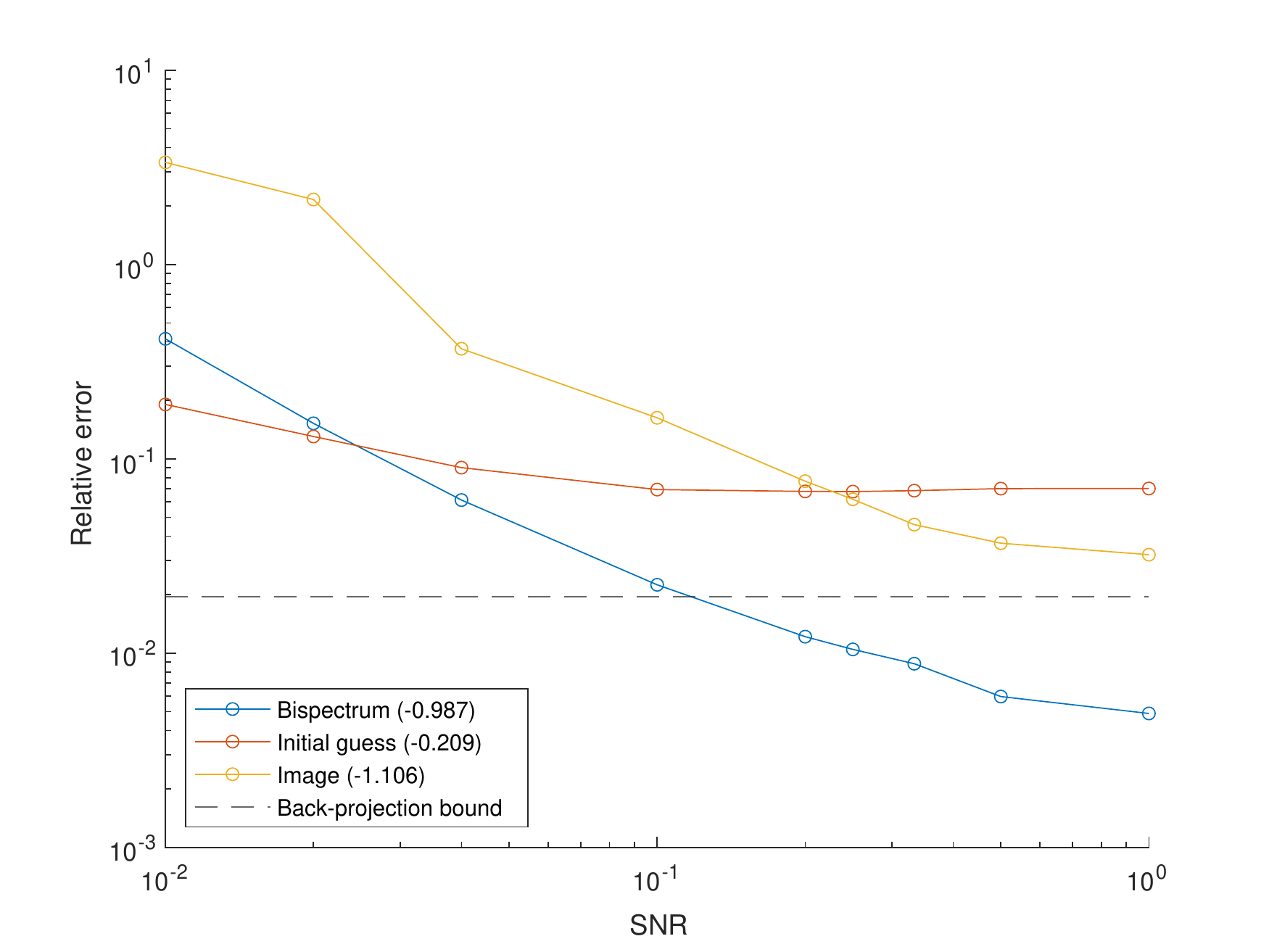}
			\caption{\small
				Relative error vs. SNR}
			\label{MRASNRFigure}
		\end{subfigure}
		
		\caption{\small
			Relative error of estimating the bispectrum and the image using a method of \cite{Ma2020}  which takes only rotations into account (\textbf{Initial guess}), and  using our approach (\textbf{Image}). The \textbf{back-projection bound} is an empirical lower bound on the relative error of our image estimator, which results from a loss of information when projecting an image onto the sphere. All results are averaged over $ 15 $ repeats with different noise realization.
			\textbf{(a)} $ \mathrm{SNR} = 0.5 $ and $ \lambda =1 $. The bispectrum estimator comports decently with the expected behavior of an unbiased estimator. 
			The initial guess was estimated by taking only rotations into account, and shows no scaling with the sample size.
			The image estimator approaches the back-projection bound.
			\textbf{(b)}  Sample size $ 10^{4} $, $ \lambda = 1 $. All estimators perform worse when the SNR is reduced, except the initial guess. It is less sensitive to decreasing SNR, but also does not show much improvement with as it is increased and is outperformed by our approach for large SNRs.}
	\end{figure}

	\section{Application II: Compactification in classification invariant to $ \SE(2) $}
	\label{ClassificationSection}
	
	We apply compactification to the following classification problem invariant to $ \SE(2) $. 
	Each sample belongs to one of $ m $ classes, generated from functions $ \GrTr_{1}, \dots, \GrTr_{m} : \F^{2} \to \F $ compactly supported within $ B_{\pi} $.
	Our samples are assumed to be of the form
	\begin{equation}
		\label{ImageFormationModelClassification}
		\mathbf{I}_{j} 
		= \mathbf{D} \left( g_{j} \bullet \GrTr_{k_{j}} \right) + \bm{\varepsilon}_{j}, 
		\enskip
		j = 1,2,\dots,N, 
		\enskip 
		k_{j} \sim \mathbf{p}.
	\end{equation}
	Here, $ \mathbf{D} $, $ g_{j} $ and $ \bm{\varepsilon}_{j} $ are as described in \eqref{ImageFormationModelRecovery}. 
	Also, $ k_{j} $ are drawn i.i.d.\ from  a distribution $ \mathbf{p} = \left(p_{1},\dots,p_{m} \right)^{\top}$. 
	Finally, given a dataset of the form \eqref{ImageFormationModelClassification}, let $ L(j) = \GrTr_{k_{j}} $ be the label of the $ j $th image. 
	Our objective is to identify for every image its $ K $-closest images up to rotation and translation.
	
	In Zhao and Singer \cite{Zhao2014}, a similar problem was dealt with in the context of cryo-EM. There a method that takes only rotations into account was used. In our approach and theirs, every image in a dataset is transformed into a rotation-invariant representation. We describe a similar representation to theirs in \Cref{RotationInvariantRepresentationSection} using the approximately $ \SE(2) $-invariant representation we developed in \Cref{CompactifiactionTheorySection,CompactifiactionComputationSection}. A $ K $-nearest neighbors graph is then constructed by measuring the distance between the invariant representation of the images, rather than the images themselves, as detailed in \Cref{GraphConstructionSection}. 
	
	In \Cref{ClassificationExperimentSection}, we numerically test how our approach, which takes into account both rotations and translations, performs compared to an approach that takes only rotations into account. We show that the similarity graphs match the classification function of the dataset better when translations are taken into account, even for relatively large maximal translation size $ \MaxTrans $ in \eqref{ImageFormationModelClassification}.
	
	\subsection{Steerable basis and a rotation-invariant representation of a function}
	\label{RotationInvariantRepresentationSection}
	
	We use the MATLAB package ASPIRE \cite{aspire} to estimate the expansion coefficients of $ g_{j} \bullet F_{k_{j}} $ in a truncated Fourier-Bessel basis \cite{Zhao2016, Zhao2013}. Every member of this basis has the form
	\begin{equation*}
		u_{k,q} (r,\phi)
		= f_{k,q} (r) e^{i k \phi},
	\end{equation*}
	where $ (r,\phi) $ are elements of $ \F^{2} $ in polar coordinates, $ f_{k,q} $ is called the radial part and $ \phi \mapsto e^{i k \phi} $ is called the angular part. 
	We assume all functions we expand in this basis are bandlimited in the sense they have a finite expansion. In particular, $ \left|k\right| \le k_{\mathrm{max}} $ and $ 0 \le q \le Q_{k}  $.
	
	A basis with this form is said to be steerable, meaning that it diagonalizes the action of $ \SO(2) $. In particular, if $ R = R(\varphi) $ is the rotation of $ \F^{2} $ by $ \varphi $ counter-clockwise and
	\begin{equation}
		\label{FourierBesselExpansion}
		F 
		= \sum_{k,q} F_{k,q} u_{k,q},
	\end{equation}
	then
	\begin{equation*}
		R \bullet F (r,\theta)
		= F(r, \theta + \varphi)
		= \sum_{k,q} F_{k,q} f_{k,q} (r) e^{i k (\theta+\varphi)}
		= \sum_{k,q} e^{ik\varphi} F_{k,q} f_{k,q} (r) e^{i k \theta}.
	\end{equation*}
	Thus, the expansion of $ R \bullet F $ in the Fourier-Bessel basis satisfies $ (R\bullet F)_{k,q} = e^{i k \varphi} F_{k,q} $.
	
	In \cite{Zhao2014}, this property was used to define a rotation-invariant representation of a function $ F $ of the form \eqref{FourierBesselExpansion}:
	\begin{equation}
		\label{RotationBispectrum}
		b_{F} [k_{1}, k_{2}, q_{1}, q_{2}, q_{3}]
		= F_{k_{1},q_{1}}F_{k_{2},q_{2}} F_{k_{1}+k_{2},q_{3}}^{*},
	\end{equation}
	for all tuples of radial and angular indices $ (k_{1}, k_{2}, q_{1}, q_{2}, q_{3}) $ satisfying
	\begin{equation}
		\label{RotationBispectrumIndices}
		\left|k_{1} + k_{2}\right| \le k_{\mathrm{max}}, \ 
		0 \le q_{1} \le Q_{k_{1}}, \ 
		0 \le q_{2} \le Q_{k_{2}}
		\mbox{ and }
		0 \le q_{3} \le Q_{k_{1} + k_{2}}.
	\end{equation}
	In keeping with the terminology of \cite{Zhao2014}, we refer to \eqref{RotationBispectrum} as the rotational bispectrum. 
	When $ F $ is a Fourier-Bessel expansion estimated from an image, we refer to \eqref{RotationBispectrum} as the rotational bispectrum of that image.
	
	We wish to represent every image in our dataset using its rotational bispectrum. Unfortunately, even when $ k_{\mathrm{max}} = 5 $, the number of tuples satisfying \eqref{RotationBispectrumIndices} is extremely large, on the order of tens of thousands at least and grows to millions even for modest $ k_{\mathrm{max}} $.
	Therefore, even for modestly sized datasets of images, we cannot store in memory the rotational bispectrum of all images. On the other hand, the size of the rotational bispectrum suggests it contains a lot of redundancy. Thus, like in \cite{Zhao2014}, we apply dimensionality reduction to the collection of rotational bispectra of all images in a dataset.
	
	In particular, let $ \mathbf{b}_{j} $ denote the rotational bispectrum of the $ j $th image. Let $ \mathbf{B} $ be a matrix with $ N $ columns, the $ j $th column of which is the rotational bispectrum of $ \mathbf{I}_{j} $ in lexicographical order of the indices \eqref{RotationBispectrumIndices}. We apply to $ \mathbf{B} $ a randomized algorithm, designed to approximate the best rank $ m $ approximation of a large matrix without forming it in memory \cite{Halko2011}. 
	We use it to approximate $ \mathbf{B} $ by a $ m \times N $ dimensional matrix.
	We refer to the $ j $th column of the resulting matrix either as the reduced rotational bispectrum of $ \mathbf{I}_{j} $ or as its reduced rotation-invariant representation.
	
	\subsection{Constructing a similarity graph from invariant representations}
	\label{GraphConstructionSection}
	
	Given a dataset of images of the form \eqref{ImageFormationModelClassification}, assume we represent it either by the reduced rotation-invariant representation introduced in \Cref{RotationInvariantRepresentationSection} or by the $ \SE(2) $-invariant representation introduced in \Cref{CompactifiactionComputationSection}, possibly after going through denoising by a linear filter. Denote by $ \widetilde{\mathbf{I}}_{j} $ the invariant representation of $ \mathbf{I}_{j} $. We define two invariant distance metric on our dataset by 
	\begin{equation}
		\label{InvariantMetric}
		d \left(\mathbf{I}_{j_{1}}, \mathbf{I}_{j_{2}} \right)
		= \Ltwonorm{\widetilde{\mathbf{I}}_{j_{1}} - \widetilde{\mathbf{I}}_{j_{2}} }_{F}.
	\end{equation}
	When $ \widetilde{\mathbf{I}}_{j_{1}} $ and $ \widetilde{\mathbf{I}}_{j_{2}} $ are the reduced rotation-invariant representations of the corresponding images, we refer to this metric as the rotation-invariant metric. Similarly, when these are our approximately $ \SE(2) $-invariant representations, we refer to it as the $ \SE(2) $-invariant metric.
	
	We use an invariant distance metric to find the $ K $-nearest images of every image in our dataset. Denote by $ G_{j} $ the set of indices of the $ K $ images closest to $ \mathbf{I}_{j} $. For every node, we then measure the proportion of its $ K $ neighbors belonging to its class. That is, we calculate:
	\begin{equation}
		\label{NodeScore}
		s_{j}
		= \frac{\left|\left\{ j_{1} \in G_{j}  \setsep j_{1} \ne j,\ L (j_{1}) = L(j)   \right\}\right|}{K}.
	\end{equation}
	We refer to \eqref{NodeScore} as the node score of the $ K $-nearest neighbors graph we construct using the metric \eqref{InvariantMetric}. We consider its distribution as a measure of the quality of the classification. The classification is considered better the more concentrated the distribution of node score is close to $ 1 $.

	\subsection{Numerical experiments}
	\label{ClassificationExperimentSection}
	
	We tested our approach numerically as follows. We first generated seven class representatives using the procedure described in \Cref{BispectrumComputationSection}. We generated a dataset of $ 5\cdot 10^3 $ $ 101\times 101 $ images following \eqref{ImageFormationModelClassification} with $ m = 7 $ and $ p_{1} = p_{2} = \dots = p_{m} = \frac{1}{7} $. 
	We then generated two $ 50 $-nearest neighbors graphs, one using the rotation-invariant metric described in \Cref{RotationInvariantRepresentationSection} and another using our $ \SE(2) $-invariant metric. For each graph we measured the node score \eqref{NodeScore} of every node. 
	We performed this procedure for  maximal translation sizes $ \MaxTrans = 0, 2.5, 5, 7.5, 10 $. 
	The labels, the rotation angle, the translation directions and the translation sizes of the images in our dataset were sampled in advanced, the latter was sampled uniformly from $ [0,1] $ and then rescaled by multiplying it by the appropriate $ \MaxTrans $. 
	For every $ \MaxTrans $, we repeated this experiment $ 10 $ times, each with different noise realization, all with $ \mathrm{SNR} = 1$. The noise variance $ \sigma^{2} $ is always assumed to be known.
	For every $ \MaxTrans $, we averaged the node score over the $ 10 $ noise realizations and plotted the histogram of the average.
	
	\autoref{ClassificationFigure} shows the histogram of the average node score for varying maximum translation sizes $ \MaxTrans $, when using a rotation-invariant metric and our $ \SE(2) $-invariant metric. The more the histogram is concentrated close to $ 1 $, the better the distribution of node score is. Both metrics yield similar results in the absence of translations. However, our metric is far less sensitive to the translation size and even yields excellent node score distribution for $ \MaxTrans  =10 $.
	
	We repeated this experiment with simulated cryo-EM projections of the type shown in \autoref{RandomImageExampleCryo}. 
	Each class representative was generated using the built in simulation module of the MATLAB package ASPIRE \cite{aspire}.
	In accordance with the results of \Cref{ParameterChoiceSection}, a bandlimit of $ 50 $ was chosen.
	Because of the computational load of using such a large bandlimit, the experiment was done with  a single noise realization.
	For comparison, we present the node score obtained by the algorithm of \cite{Zhao2014}, which takes only rotations into account. 
	Specifically, we used an optimized implementation of it which is included in ASPIRE \cite{aspire}, which also includes a denoising stage via steerable PCA \cite{Zhao2016}.
	As the results in \autoref{ClassificationCryoFigure} indicate, classification using our approximate invariant was more accurate for simulated cryo-EM images.
	
	Finally, we repeated this experiment once on a dataset that resembles more the kind of datasets encountered in cryo-EM applications. 
	We used a dataset of $ 10^{4} $ images of size $ 101 \times 101 $. The maximal translation size was set to $ 10 $ pixels, there were $ 100 $ class representatives, rather than seven, and the bandlimit used in our algorithm was $ 70 $.
	We again compared the distribution of node score achieved by our approach with that obtained by the same optimized implementation of the algorithm of \cite{Zhao2014} in ASPIRE \cite{aspire}.
	The results are shown in \autoref{fig:ClassificationRealistic}.
	Our algorithm clearly had a better distribution of node score. Consistent with that, our approach yielded a median node score of $ 1 $, compared with $ 0.22 $ using the rotations-only invariant. 
	
	\begin{figure}
		
		\begin{center}
			\begin{subfigure}{\textwidth}
				\begin{center}
					\includegraphics[scale=0.6]{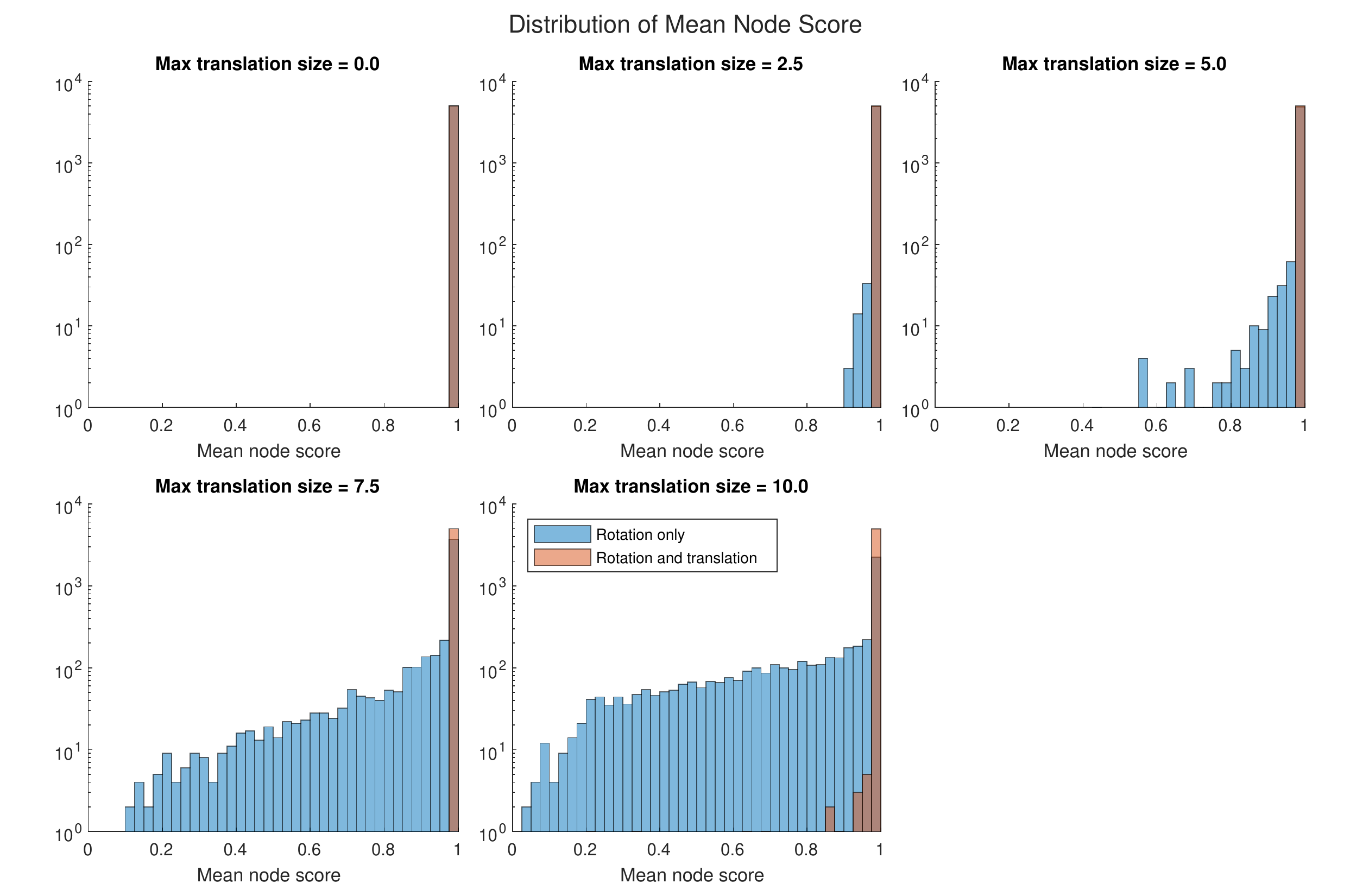}
					
					\caption{Random images from \Cref{BispectrumComputationSection}}
					\label{ClassificationFigure}
				\end{center}
			\end{subfigure}
			
			\begin{subfigure}{\textwidth}
				\begin{center}
					\includegraphics[scale=0.6]{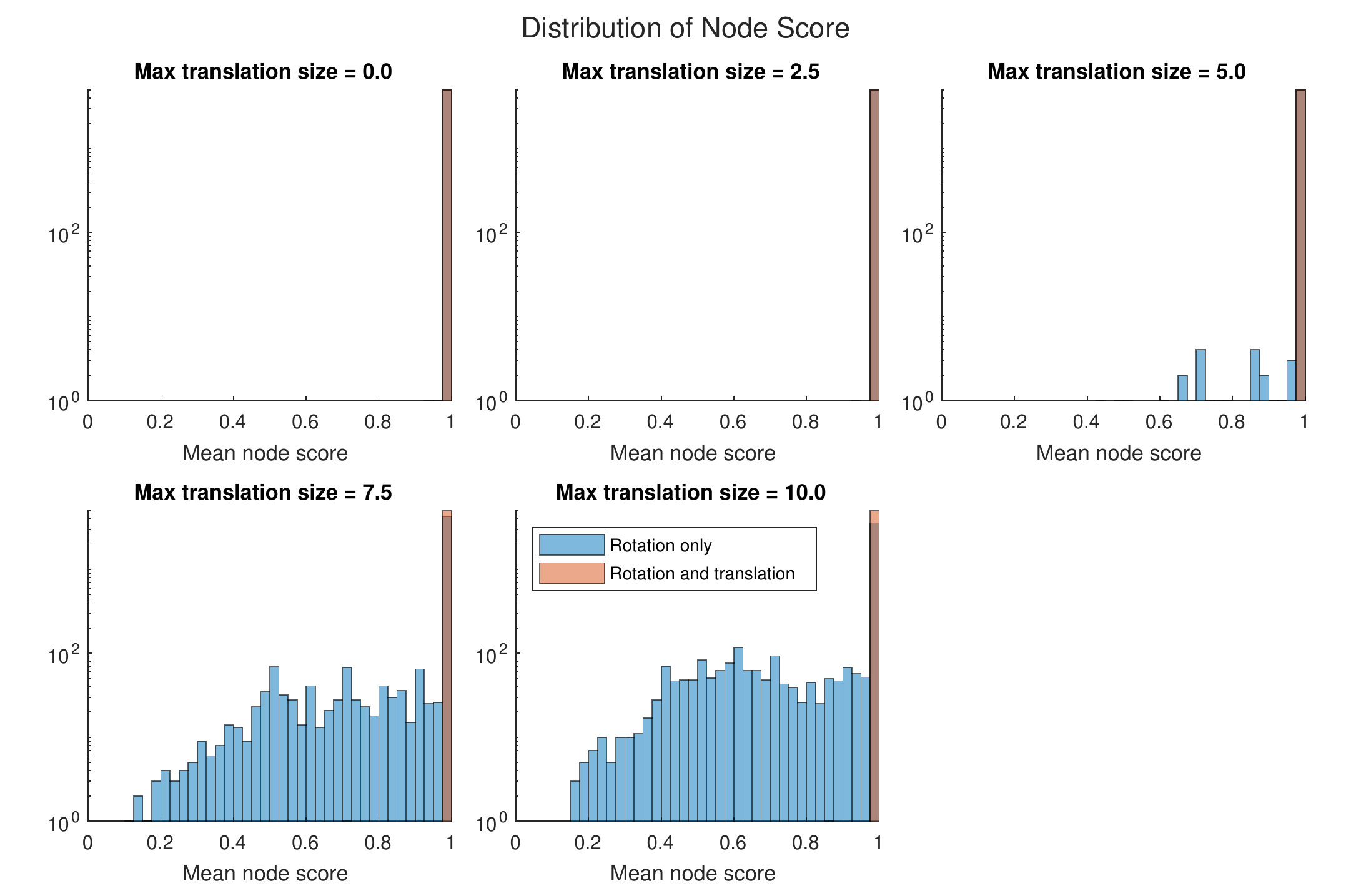}
					
					\caption{Simulated cryo-EM images}
					\label{ClassificationCryoFigure}
				\end{center}
				
			\end{subfigure}
		\end{center}
		
		\caption{\small
			\textbf{\textit{Node score for rotationally invariant and approximately rotationally and translationally invariant metrics.}} 
			A histogram of the node score for maximal translation sizes $ \MaxTrans = 0, 2.5, 5, 7.5, 10 $ on a dataset generated from seven class representatives. For (a), the node score is averaged over $ 10 $ different noise realization. The data in (b) is based on a single noise realization.
			Both metrics yield similar results in the absence of translations. Taking into account translations significantly reduces the sensitivity of the distribution of node score to translations.}
	\end{figure}
	
	\begin{figure}
		
		\begin{center}
			\includegraphics[scale=0.6]{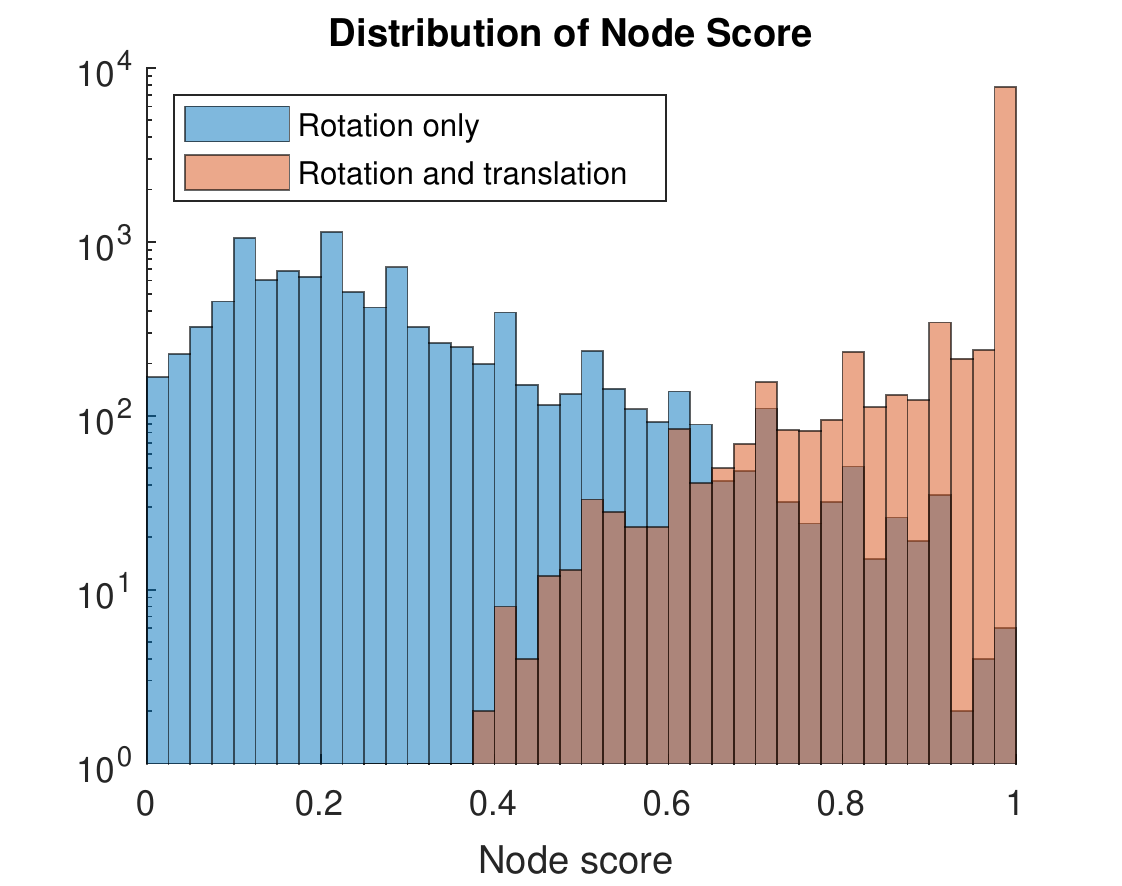}
		\end{center}
		
		\caption{\textbf{\textit{Node score for rotationally invariant and approximately rotationally and translationally invariant metrics on a dataset of simulated cryo-EM images containing a hundred class representatives.}} A histogram of the node score for maximal translation size $ 10 $ on a dataset generated from a hundred class representatives. Taking into account translations yielded a median node score of $ 1 $, whereas using a rotations-only invariant yielded a median node score of $ 0.22 $.}
		\label{fig:ClassificationRealistic}
	\end{figure}
	
	\appendix
	
	\section{Proofs for \Cref{CompactifiactionTheorySection}}
	The various lemmas were not stated in the main text in the order in which they're proved. 
	In particular, the proofs make it clear that 
	\Cref{SmoothSubmersionLemma} and \Cref{SmoothSubmersionLemmaFibers}
	depend on \Cref{ExplicitSubmersionLemma}.
	We altered the order in the main text to improve the exposition.
	
	\label{ProofsAppendix}
	\begin{proof}[Proof of \Cref{ExplicitSubmersionLemma}]
		Recall the identity \eqref{EtaSOPsiLambdaFormula}.
		Since $ \so(3) $ is a Lie subalgebra of $ \mathfrak{gl} (3) $, one can write for every $ \mathbf{x} \in \F^{2} $:
		\begin{equation}
			\label{expb}
			\exp_{\SO(3)} (\mathbf{x})
			= \sum_{n=0}^{\infty} \frac{1}{n!} \mathbf{X}^{n}, 
			\qquad \mbox{where } 
			\mathbf{X}
			= \left[\begin{matrix}
				\mathbf{0}_{2 \times 2} & \mathbf{x} \\
				- \mathbf{x}^{\top} & 0
			\end{matrix}\right].
		\end{equation}
		From this point, the proof relies primarily on the Taylor series of sine and cosine.
		Note that
		\begin{align*}
			\mathbf{X} \mathbf{n} 
			= \left(\begin{matrix}
				\mathbf{x} \\ 0
			\end{matrix}\right) 
			\quad\mbox{and}\quad
			\mathbf{X}^{2} \mathbf{n}
			= \mathbf{X} 
			\left(\begin{matrix}
				\mathbf{x} \\ 0
			\end{matrix}\right)
			= - \Ltwonorm{\mathbf{x}}^{2} \mathbf{n} .
		\end{align*}
		Therefore,
		\begin{equation*}
			\mathbf{X}^{2n} \mathbf{n} 
			= (- 1)^{n} \Ltwonorm{\mathbf{x}}^{2n} \mathbf{n}
			\quad \mbox{and} \quad
			\mathbf{X}^{2n+1} \mathbf{n} 
			= \left( - \Ltwonorm{\mathbf{x}}^{2}\right)^{n} \mathbf{X} \mathbf{n}
			= (-1)^{n} \Ltwonorm{\mathbf{x}}^{2n}
			\left(\begin{matrix}
				\mathbf{x} \\ 0
			\end{matrix}\right).
		\end{equation*}
		This implies the $ z $-coordinate of \eqref{expb} is
		\begin{equation}
			\label{expb1}
			\sum_{n=0}^{\infty} \frac{ (-1)^{n} \Ltwonorm{\mathbf{x}}^{2n} }{(2n)!}
			= \cos \Ltwonorm{\mathbf{x}},
		\end{equation}
		and the $ x $- and $ y $-coordiantes are of the form
		\begin{equation}
			\label{expb2}
			x \sum_{n=0}^{\infty} \frac{(-1)^{n} \Ltwonorm{\mathbf{x}}^{2n}}{(2n+1)!}
			= \frac{x}{\Ltwonorm{\mathbf{x}}} \sum_{n=0}^{\infty} \frac{(-1)^{n} \Ltwonorm{\mathbf{x}}^{2n+1}}{(2n+1)!}
			= \frac{x}{\Ltwonorm{\mathbf{x}}} \sin \Ltwonorm{\mathbf{x}},
		\end{equation}
		where $ x $ is the $ x $-coordinate or the $ y $-coordinate of $ \mathbf{x} $, respectively.
		To conclude the proof, substitute $ \mathbf{x} $ with $ \mathbf{x}/\lambda $ in \eqref{expb1}  and \eqref{expb2}.
	\end{proof}
	
	\begin{proof}[Proof of \Cref{SmoothSubmersionLemma} and \Cref{SmoothSubmersionLemmaFibers}]
		$ \SubmerSE $ is a surjective map and it is an orbit map of $ \mathbf{0}_{2} $ of the group action of $ \SE(2) $ on $ \F^{2} $; that is, $ \SubmerSE(\mathbf{x}, R) = (\mathbf{x}, R) \bullet \mathbf{0}_{2} $.
		It is therefore a smooth submersion (cf. \cite[Prop. 7.26, p. 83]{Lee2013} and \cite[Thm. 4.14, p. 166]{Lee2013}). 
		In order to prove $ \SubmerSO \circ \Psi_{\lambda} $ is a smooth submersion, define $ P (\zeta, \theta, \phi) = \left( \zeta \cos \theta, \zeta \sin \theta, R(\phi) \right)^{\top} $ as a mapping $ P : \F^{3} \to \SE(2) $, where $ R(\phi) $ is a rotation of the plane by $ \phi $ radians counterclockwise. Also define the standard spherical coordinates transformation $ S (\varphi, \delta) = \left(\cos \varphi \sin \delta, \sin \varphi \sin \delta, \cos \delta \right)^{\top} $ as a mapping $ S : \F^{2} \to S^{2} $. $ P $ and $ S $ are surjective local diffeomorphisms.
		
		By \Cref{ExplicitSubmersionLemma}, 
		\begin{equation*}
			\SubmerSO \circ \Psi_{\lambda} \circ P ( \zeta, \theta, \phi ) 
			= \left(\cos \theta \sin \left( \frac{\zeta}{\lambda} \right), 
			\sin \theta \sin \left( \frac{\zeta}{\lambda} \right), 
			\cos \left( \frac{\zeta}{\lambda} \right) \right)^{\top} .
		\end{equation*}
		Since $ S $ is a local diffeomorphism about $ S^{2} $, it is locally invertible around every point of $ S^{2} $. This implies
		\begin{equation}
			\label{LinearExplicitSmoothSubmersion}
			S^{-1 } \circ \SubmerSO \circ \Psi_{\lambda} \circ P (\zeta, \theta, \phi )
			= \left( \theta, \frac{\zeta}{\lambda} \right)^{\top}.
		\end{equation}
		Thus, $ S^{-1} \circ \SubmerSO \circ \Psi_{\lambda} \circ P (\zeta, \theta, \phi) $ is locally defined and locally linear, implying $ \SubmerSO \circ \Psi_{\lambda} $ is a smooth submersion. 
		
		Finally, \eqref{LinearExplicitSmoothSubmersion}, the properties of the spherical coordinates and polar coordinates easily yield the characterization of the fibers of $ \SubmerSO \circ \Psi_{\lambda} $ in \eqref{Fiber1}, \eqref{Fiber2} and \eqref{Fiber3}.
	\end{proof}
	
	\begin{proof}[Proof of \Cref{ExtensionLemma}]
		Since $ f $ is a smooth function on a homogeneous space of $ \SE(2) $, we can lift it to a function on $ \SE(2) $ that is constant on left cosets of $ \SO(2) $. Thus, without loss of generality, assume $ f : \SE(2) \to \F $.  Now note that the fibers of $ \SubmerSO \circ \Psi_{\lambda} $ of forms \eqref{Fiber1} and \eqref{Fiber3} intersect $ B_{ \lambda \pi } \times \SO(2) $ only once and that the fiber of the form \eqref{Fiber2} only intersects $ \partial B_{ \lambda \pi } \times \SO(2) $. Define $ \widetilde{f} $ on fibers of the form \eqref{Fiber1} and \eqref{Fiber3} to be the value of $ f $ on their intersection with $ B_{ \lambda \pi } \times \SO(2) $ and zero on fibers \eqref{Fiber2}. 
		
		The uniqueness of $ \widetilde{f} $ is immediate from our construction. It remains to prove $ \widetilde{f} $ is smooth. 
		Let $ B^{(n)} \coloneqq \left(B_{\lambda \pi n}\setminus \{0\}\right) \setminus \overline{B}_{\lambda \pi (n-1)}  $. 
		Define a map $ p : B_{\lambda \pi }\setminus \{0\} \to B^{(n)} $ as $ p(\zeta, \theta) = \left( \zeta + \lambda \pi  (n-1), \theta\right) $. This is obviously a (local) diffeomorphism. Note that for $ (\mathbf{x}, R) \in B^{(n)} \times \SO(2) \subset \SE(2) $ we have $ \widetilde{f} (\mathbf{x}, R) = f \circ p^{-1} (\mathbf{x}) $. Since both $ f $ and $ p $ are smooth, it follows that $ \widetilde{f} $ is smooth in a neighborhood of $ \mathbf{x} $. It remains to prove $ \widetilde{f} $ is smooth for all $ (\mathbf{x}, R) \in \partial B_{\lambda \pi n} \times \SO(2) $. Since $ f $ is compactly supported in $ B_{\lambda \pi } $, it follows that it is identically zero in a neighborhood $ V $ of $ \mathbf{x} $. Therefore, making $ V $ smaller if necessary, we conclude that $ \widetilde{f} $ is identically zero on $ V \times \SO(2) $ and so it is smooth there as well.
	\end{proof}

	\begin{proof}[Proof of \Cref{LipschitzStuff}]
		Recall the identity $ 2 \sin^{2} \left( \frac{x}{2} \right) = 1 - \cos x $ and the inequality $ \sin x \ge \frac{2}{\pi} x $ for all $ x \in [0,\pi/2] $. Combining them it follows that
		\begin{equation}
			\label{UtilityIneq}
			2 (1 - \cos x)
			= 4 \sin^{2} \left( \frac{x}{2} \right) 
			\ge 4 \cdot \left( \frac{2}{\pi} x \right)^{2}
			= \frac{16}{\pi^2} x^{2}
			\ge \frac{x^{2}}{\pi}.
		\end{equation}
		Now, let $ \mathbf{x}, \mathbf{y} \in S^{2} $ and denote $ \theta = d_{S^{2}} (\mathbf{x}, \mathbf{y}) $. Thus:
		\begin{align*}
			\Ltwonorm{\mathbf{x} - \mathbf{y}}^{2} 
			&= \Ltwonorm{\mathbf{x}}^{2} + \Ltwonorm{\mathbf{y}}^{2} - 2 \mathbf{y}^{\top} \mathbf{x} \\
			&= 2 (1 - \cos \theta) && \mbox{Because $ x, y\in S^{2} $} \\
			&\ge \frac{\theta^{2}}{\pi}  && \mbox{From \eqref{UtilityIneq}.}
		\end{align*}
		The result easily follows.
	\end{proof}
	
	\begin{proof}[Proof of \Cref{ApproximateCumm1}]
		Let $ A $ and $ B $ be real $ n\times n $ matrices. From the definition of the exponential map on matrix groups we obtain:
		\begin{align*}
			T 
			\coloneqq \exp (A + B) - \exp (A) \exp (B)
			&= \sum_{n=0}^{\infty} \frac{(A + B)^{n}}{n!} - \left( \sum_{n=0}^{\infty} \frac{A^{n}}{n!}\right) \left( \sum_{n=0}^{\infty} \frac{B^{n}}{n!}\right)  \\
			&= \sum_{n=0}^{\infty} \frac{(A + B)^{n}}{n!} - \sum_{k=0}^{\infty} \sum_{m=0}^{\infty} \frac{A^{k} B^{m}}{k! m!} \\
			&= \sum_{n=2}^{\infty} \frac{(A + B)^{n}}{n!} - \sum_{k=1}^{\infty} \sum_{m=1}^{\infty} \frac{A^{k} B^{m}}{k! m!} \\
			&= \sum_{n=2}^{\infty} \frac{(A + B)^{n}}{n!} - \left( \sum_{k=1}^{\infty} \frac{A^{k}}{k!} \right) \left( \sum_{m=1}^{\infty} \frac{B^{m}}{m!} \right),
		\end{align*}
		and so
		\begin{equation*}
			\Ltwonorm{T}
			\le \sum_{n=2}^{\infty} \frac{(\Ltwonorm{A} + \Ltwonorm{B})^{n}}{n!} 
			+ \left( \sum_{k=1}^{\infty} \frac{\Ltwonorm{A}^{k}}{k!} \right) \left( \sum_{m=1}^{\infty} \frac{\Ltwonorm{B}^{m}}{m!} \right).
		\end{equation*}
		Substituting $ A = \frac{X}{\lambda} $ and $ B = \frac{Y}{\lambda} $ and using the fact $ \lambda \ge 1 $,
		\begin{align*}
			&\Ltwonorm{\exp \left( \frac{X}{\lambda} + \frac{Y}{\lambda} \right) - \exp \left( \frac{X}{\lambda} \right) \exp \left(\frac{Y}{\lambda}  \right)} \\
			&\phantom{==}\le \frac{1}{\lambda^{2}} \sum_{n=2}^{\infty} \frac{(\Ltwonorm{X} + \Ltwonorm{Y})^{n}}{n!} 
			+ \frac{1}{\lambda^{2}} \left( \sum_{k=1}^{\infty} \frac{\Ltwonorm{X}^{k}}{k!} \right) \left( \sum_{m=1}^{\infty} \frac{\Ltwonorm{X}^{m}}{m!} \right) \\
			&\phantom{==}= \frac{1}{\lambda^{2}}  \left( e^{\Ltwonorm{X} + \Ltwonorm{Y}} - \Ltwonorm{X} - \Ltwonorm{Y} - 1 + \left( e^{\Ltwonorm{X}} - 1 \right) \left( e^{\Ltwonorm{Y}} - 1 \right) \right) \\
			&\phantom{==}= \frac{1}{\lambda^{2}} \left( 2 e^{\Ltwonorm{X} + \Ltwonorm{Y}} - e^{\Ltwonorm{X}} - e^{\Ltwonorm{Y}} - \Ltwonorm{X} - \Ltwonorm{Y} \right).
		\end{align*}
	\end{proof}

	\section{Debiasing the spherical bispectrum estimator}
	\label{DebiasingAppendix}
	Given a sample of $ N $ images as in \eqref{ImageFormationModelRecovery} and their projections $ \left\{ \mathbf{s}_{j}\right\}_{j=1}^{N} $ onto the sphere, our objective is to derive an unbiased estimator $ \widehat{\mathbf{b}} $ of the bispectrum of $ \GrTr $.
	We begin by establishing some notation, making the derivation easier.
	Denote by $ \mathbf{P} = \mathbf{P} \left(L, \mathcal{S}_{t}, \lambda \right) $ the interpolation operator used when projecting an image onto the space of functions on the sphere of bandlimit $ L $. This is a linear operator. 
	Also, recall the definition of $\mathbf{Y} =  \mathbf{Y} \left(L, \mathcal{S}_{t} \right)  $ in \eqref{SphericalHarmonicsMatrix}.

	Taking into account \eqref{ImageFormationModelRecovery}, the spherical harmonics coefficients we estimate for the $ j $th image are of the form 
	\begin{equation}
		\label{SimplerImageFormationModel}
		\mathbf{s}_{j} = \mathbf{U} \mathbf{J}_{j} + \mathbf{U} \varepsilon_{j},\enskip
		\mbox{where }
		\mathbf{U} = \mathbf{Y}^{\dag} \mathbf{P} \mbox{ is a complex matrix},\
		\mathbf{J}_{j} = \mathbf{D} \left(g_{j} \bullet \GrTr \right)
	\end{equation}
	and $\mathbf{s}_{j} $ is the concatenation of all spherical harmonics coefficients of the projection onto the sphere of $ \mathbf{I}_{j} $ arranged in lexicographical order. 
	Note that due to the indexing convention we used for $ \mathbf{Y} $, $ \mathbf{U} $ has rows indexed by $ (\ell,m) $. Since $ \mathbf{P} $ operates on images of $ n \times n $ pixels, the columns are indexed by $ 1,\dots,n^{2} $.
	
	In order to obtain an unbiased estimator, we need to calculate the expected value of $ b_{\mathbf{s}_{j}} \left[\ell_{1}, \ell_{2}, \ell\right] $. 
	As we noted in \Cref{ProjectionSection}, our use of interpolation when projecting an image onto the sphere introduces correlations in the noise.  
	In order to deal with them, we first observe that the spherical bispectrum \eqref{SphericalBispectrum} is multilinear as a function of $ \mathbf{f}_{\ell} $, $ \mathbf{f}_{\ell_{1}}^{*} $ and $ \mathbf{f}_{\ell_{2}}^{*} $, where $ \mathbf{f}_{k} = \left( f_{k,-k}, f_{k,-k+1},\dots,f_{k,k} \right)^{\top} $. 
	In particular, for a fixed $ (\ell_{1}, \ell_{2}, \ell) $ triplet satisfying \eqref{SphericalBispectrumIndices}, we can write $ \mathbf{b}_{\mathbf{s}_{j}} \left[\ell_{1}, \ell_{2}, \ell \right] =  B \left[\mathbf{s}_{j}^{*}, \mathbf{s}_{j}^{*}, \mathbf{s}_{j} \right] $ for some multilinear transformation $ B $. 
	This enables us to write:
	\begin{equation}
		\label{DebiasingStage1}
		\begin{aligned}
			\F[E] \left[ \mathbf{b}_{\mathbf{s}_{j}} [\ell_{1}, \ell_{2}, \ell] \right] 
			&= \F[E] \left[ B\left[ \mathbf{s}_{j }^{*}, \mathbf{s}_{j}^{*}, \mathbf{s}_{j}\right] \right] \\
			&= \F[E] \left[ B \left[ \left(\mathbf{U} \mathbf{J}_{j}\right)^{*}, \left(\mathbf{U} \mathbf{J}_{j}\right)^{*}, \mathbf{U} \mathbf{J}_{j}  \right] \right]
			+ \F[E] \left[ B \left[ \left(\mathbf{U} \bm{\varepsilon}_{j}\right)^{*}, \left(\mathbf{U} \bm{\varepsilon}_{j}\right)^{*}, \mathbf{U} \mathbf{J}_{j} \right] \right] \\
			&\phantom{===} + \F[E] \left[ B \left[ \left(\mathbf{U} \bm{\varepsilon}_{j}\right)^{*}, \left(\mathbf{U} \mathbf{J}_{j}\right)^{*}, \mathbf{U} \bm{\varepsilon}_{j} \right] \right] 
			+ \F[E] \left[ B \left[ \left(\mathbf{U} \mathbf{J}_{j}\right)^{*}, \left(\mathbf{U} \bm{\varepsilon}_{j}\right)^{*}, \mathbf{U} \bm{\varepsilon}_{j} \right] \right]
			\\
			&\phantom{===}+ \F[E] \left[ B \left[ \left(\mathbf{U} \bm{\varepsilon}_{j}\right)^{*}, \left(\mathbf{U} \mathbf{J}_{j}\right)^{*}, \mathbf{U} \mathbf{J}_{j}  \right] \right]
			+ \F[E] \left[ B \left[ \left(\mathbf{U} \mathbf{J}_{j}\right)^{*}, \left(\mathbf{U} \bm{\varepsilon}_{j}\right)^{*}, \mathbf{U} \mathbf{J}_{j}  \right] \right] \\
			&\phantom{===}+ \F[E] \left[ B \left[ \left(\mathbf{U} \mathbf{J}_{j}\right)^{*}, \left(\mathbf{U} \mathbf{J}_{j}\right)^{*}, \mathbf{U} \bm{\varepsilon}_{j}  \right] \right]
			+ \F[E] \left[ B \left[ \left(\mathbf{U} \bm{\varepsilon}_{j}\right)^{*}, \left(\mathbf{U} \bm{\varepsilon}_{j}\right)^{*}, \mathbf{U} \bm{\varepsilon}_{j}  \right] \right]
		\end{aligned}
		%
	\end{equation}
	Consider the form of $ B $ (see \eqref{SphericalBispectrum}) and elements in the sum on the right hand side of \eqref{DebiasingStage1} that depend on either exactly one or exactly three $ \mathbf{U} \bm{\varepsilon}_{j} $ (the last four terms in the sum). 
	Since every element of $ \bm{\varepsilon}_{j}  $ is Gaussian i.i.d. with zero mean, we have
	\begin{equation*}
		\F[E] \left[ \bm{\varepsilon}_{j, i_{1}, i_{2}}  \right] 
		= \F[E] \left[ \bm{\varepsilon}_{j, i_{1}, i_{2}} \bm{\varepsilon}_{j, i_{3}, i_{4}} \bm{\varepsilon}_{j, i_{5}, i_{6}}  \right] 
		= 0.
	\end{equation*}
	Therefore, said elements in the sum in \eqref{DebiasingStage1} are zero and we obtain the following:
	\begin{equation}
		\label{DebiasingStage2}
		\begin{aligned}
			\F[E] \left[ b_{\mathbf{s}_{j}} [\ell_{1}, \ell_{2}, \ell] \right]
			&= \F[E] \left[ B \left[ \left(\mathbf{U} \mathbf{J}_{j}\right)^{*}, \left(\mathbf{U} \mathbf{J}_{j}\right)^{*}, \mathbf{U} \mathbf{J}_{j} \right] \right] 
			+ \F[E] \left[ B \left[ \left(\mathbf{U} \bm{\varepsilon}_{j}\right)^{*}, \left(\mathbf{U} \bm{\varepsilon}_{j}\right)^{*}, \mathbf{U} \mathbf{J}_{j} \right] \right] \\
			&\phantom{===} + \F[E] \left[ B \left[ \left(\mathbf{U} \bm{\varepsilon}_{j}\right)^{*}, \left(\mathbf{U} \mathbf{J}_{j}\right)^{*}, \mathbf{U} \bm{\varepsilon}_{j} \right] \right] 
			+ \F[E] \left[ B \left[ \left(\mathbf{U} \mathbf{J}_{j}\right)^{*}, \left(\mathbf{U} \bm{\varepsilon}_{j}\right)^{*}, \mathbf{U} \bm{\varepsilon}_{j} \right] \right].
		\end{aligned}
	\end{equation}
	Now, denote the elements on the right hand side of \eqref{DebiasingStage2} by $ K_{0} = K_{\mathbf{s}_{j}, 0} \left[\ell_{1}, \ell_{2}, \ell \right]$, $ K_{1} = K_{\mathbf{s}_{j}, 1} \left[\ell_{1}, \ell_{2}, \ell \right]  $, $ K_{2} = K_{\mathbf{s}_{j}, 2} \left[\ell_{1}, \ell_{2}, \ell \right] $ and $ K_{3} = K_{\mathbf{s}_{j}, 3} \left[\ell_{1}, \ell_{2}, \ell \right] $ by order of their appearance. 
	$ K_{0} $  does not depend on the noise. It depends only on $ \mathbf{U} \mathbf{J}_{j} $ the projection onto the sphere of $ \mathbf{J}_{j} $, a discretization of the rotated and translated $ \GrTr $, the ground truth function. We assume the translation is small enough that $ \mathbf{U} \mathbf{J}_{j} $ is approximately $ \kappa_{\lambda} \GrTr $ rotated on the sphere. Thus, 
	\begin{equation}
		\label{UnbiasedTerm}
		K_{0} \approx b_{\kappa_{\lambda} \GrTr} \left[\ell_{1}, \ell_{2} , \ell \right].
	\end{equation} 
	
	In order to handle $ K_{1} $, note that by \eqref{SphericalBispectrum} it has the form 
	\begin{equation}
		\label{K1stage1}
		K_{\mathbf{s}_{j}, 1}
		= \sum_{m=-\ell}^{\ell}  \sum_{m_{1} = -\ell_{1}}^{\ell_{1}} C_{\ell_{1},m_{1}, \ell_{2}, m-m_{1} }^{\ell, m} f_{\ell,m} \F[E] \left[  \left(\mathbf{U} \bm{\varepsilon}_{j}\right)^{*}_{\ell_{1}, m_{1}} \left(\mathbf{U} \bm{\varepsilon}_{j}\right)^{*}_{\ell_{2}, m - m_{1}} \right].
	\end{equation}
	Using the fact $ \bm{\varepsilon}_{j, k, m} \sim \mathcal{N} (0,\sigma^{2}) $ i.i.d., we get
	\begin{align*}
		\F[E] \left[  \left(\mathbf{U} \bm{\varepsilon}_{j}\right)^{*}_{\ell_{1}, m_{1}} \left(\mathbf{U} \bm{\varepsilon}_{j}\right)^{*}_{\ell_{2}, m - m_{1}} \right]
		&= \F[E] \left[  
		\left( \sum_{\mathbf{k}_{1}} \mathbf{U}_{(\ell_{1}, m_{1}), \mathbf{k}_{1}}^{*} \bm{\varepsilon}_{j,\mathbf{k}_{1}} \right) 
		\left(\sum_{\mathbf{k}_{2}} \mathbf{U}_{(\ell_{2}, m-m_{1}), \mathbf{k}_{2}}^{*} \bm{\varepsilon}_{j,\mathbf{k}_{2}}   \right)
		\right] \\
		&= \sum_{\mathbf{k}_{1}} \sum_{\mathbf{k}_{2}} 
		\mathbf{U}_{(\ell_{1}, m_{1}), \mathbf{k}_{1}}^{*}  \mathbf{U}_{(\ell_{2}, m-m_{1}), \mathbf{k}_{2}}^{*} 
		\F[E] \left[ \bm{\varepsilon}_{j,\mathbf{k}_{1}} \bm{\varepsilon}_{j,\mathbf{k}_{2}}  \right] \\
		&= \sum_{\mathbf{k}_{1}} \sum_{\mathbf{k}_{2}} 
		\mathbf{U}_{(\ell_{1}, m_{1}), \mathbf{k}_{1}}^{*}  \mathbf{U}_{(\ell_{2}, m-m_{1}), \mathbf{k}_{2}}^{*} 
		\cdot \sigma^{2} \delta_{\mathbf{k}_{1}, \mathbf{k}_{2}} \\
		&= \sigma^{2}\sum_{\mathbf{k}_{2}} 
		\mathbf{U}_{(\ell_{1}, m_{1}), \mathbf{k}}^{*}  \mathbf{U}_{(\ell_{2}, m-m_{1}), \mathbf{k}}^{*}  \\
		&= \sigma^{2} \left( \mathbf{U} \mathbf{U}^{\top} \right)_{(\ell_{1}, m_{1}),(\ell_{2}, m-m_{1})}^{*}.
	\end{align*}
	Substituting that into \eqref{K1stage1}, it follows that 
	\begin{equation}
		\label{K1explicit}
		K_{1} 
		=\sigma^{2} 
		\sum_{m=-\ell}^{\ell} \mathbf{s}_{j, (\ell,m)} 
		\sum_{m_{1} = \max\left\{ -\ell_{1}, m - \ell\right\}}^{\min\left\{ \ell_{1}, m+\ell_{2} \right\}}
		C_{\ell_{1}, m_{1}, \ell_{2}, m-m_{1} }^{\ell,m}
		\left( \mathbf{U} \mathbf{U}^{\top} \right)_{(\ell_{1}, m_{1}),(\ell_{2}, m-m_{1})}^{*} .
	\end{equation}
	Using a similar approach and remembering that the Clebsch-Gordan coefficients are non-zero only if $ m_{1} + m_{2} = m $, one can show that 
	\begin{align}
		\label{K2explicit}
		K_{2} 
		&= \sigma^{2} 
		\sum_{m_{2}= -\ell_{2}}^{\ell_{2}} \mathbf{s}_{j, (\ell_{2}, m_{2})}^{*}  
		\sum_{m=\max\left\{ -\ell, m_{1} - \ell_{1} \right\}}^{\min\left\{ \ell, m+\ell_{2} \right\}}
		C_{\ell_{1}, m-m_{2}, \ell_{2}, m_{2} }^{\ell,m}
		\left( \mathbf{U} \mathbf{U}^{\dag} \right)_{(\ell,m), (\ell_{1}, m_{1})} \\
		\label{K3explicit}
		K_{3} 
		&= \sigma^{2} 
		\sum_{m_{1}= -\ell_{1}}^{\ell_{1}} \mathbf{s}_{j, (\ell_{1}, m_{1})}^{*} 
		\sum_{m=\max\left\{ -\ell, m_{1} - \ell_{1} \right\}}^{\min\left\{ \ell, m+\ell_{2} \right\}}
		C_{\ell_{1}, m_{1}, \ell_{2}, m-m_{1} }^{\ell,m}
		\left( \mathbf{U} \mathbf{U}^{\dag} \right)_{(\ell,m), (\ell_{2}, m_{1})}.
	\end{align}
	
	Combining \eqref{DebiasingStage2} and \eqref{UnbiasedTerm} and using the explicit expressions of $ K_{1} $, $ K_{2} $ and $ K_{3} $ in \eqref{K1explicit}, \eqref{K2explicit} and \eqref{K3explicit}, we obtain an unbiased estimator of $ \widehat{\mathbf{b}} \left[\ell_{1}, \ell_{2}, \ell\right] $:
	\begin{equation*}
		\widehat{\mathbf{b}} \left[\ell_{1}, \ell_{2}, \ell \right]
		= \frac{1}{N} \sum_{j=1}^{N} \left( \mathbf{b}_{\mathbf{s}_{j}} \left[\ell_{1}, \ell_{2}, \ell\right] - \sum_{k=1}^{3} K_{\mathbf{s}_{j}, k} \left[\ell_{1}, \ell_{2}, \ell  \right]\right) .
	\end{equation*}
	
	When implementing this estimator, it is important to note that $ K_{\mathbf{s}_{j}, k} $  is a linear transformation, when applied to the realized version of $ \mathbf{s}_{j} $. It is also sparse. Implementing it as such enables one to easily implement this estimator as a simple matrix multiplication and can considerably speed up its calculation.
	
	\section{Noise statistics of projected images}
	\label{NoiseStatisticsAppendix}
	We wished to study the effects of the projection onto the sphere on noise. 
	Specifically, given an image with additive white noise, $ \mathrm{Y} = \mathrm{I} + \bm{\varepsilon} $, we expect the spherical harmonics coefficients to have additive noise, because the projection onto the sphere is linear. 
	We sought to test empirically whether the noise on the sphere can still be regarded as white noise.
	
	To that end, we generated $ 10^{4} $ white noise images of size $ 101 \times 101 $. Specifically, each pixel was Gaussian i.i.d. with zero mean and variance $ 1 $.
	The power spectrum of every image was calculated and averaged over the corresponding pixels. 
	We then projected every image onto the sphere as described in \Cref{ProjectionSection}, estimating the spherical harmonics coefficients of the projection up to bandlimit $ 50 $. 
	Then, we calculated the normalized power spectrum of every projected image and averaged them.
	Note that the normalized power spectrum $ P $ of a function $ f $ on the sphere for the $ \ell $th frequency is given by 
	\begin{equation}
		\label{eq:PowerSpectrum}
		P \left[ f\right]_{\ell}
		= \frac{1}{2\ell + 1} \sum_{m = -\ell}^{\ell} \left|f_{\ell, m}\right|^{2}.
	\end{equation}
	
	The results of this experiment are shown in \autoref{fig:PowerSpectrumTest}. 
	The average power spectrum of the original images and the normalized power spectrum on the sphere behaved as expected from the power spectra of white noise. In both cases, the power spectrum was approximately constant over all frequencies.
	
	\begin{figure}
		\begin{center}
			\includegraphics[scale=0.6]{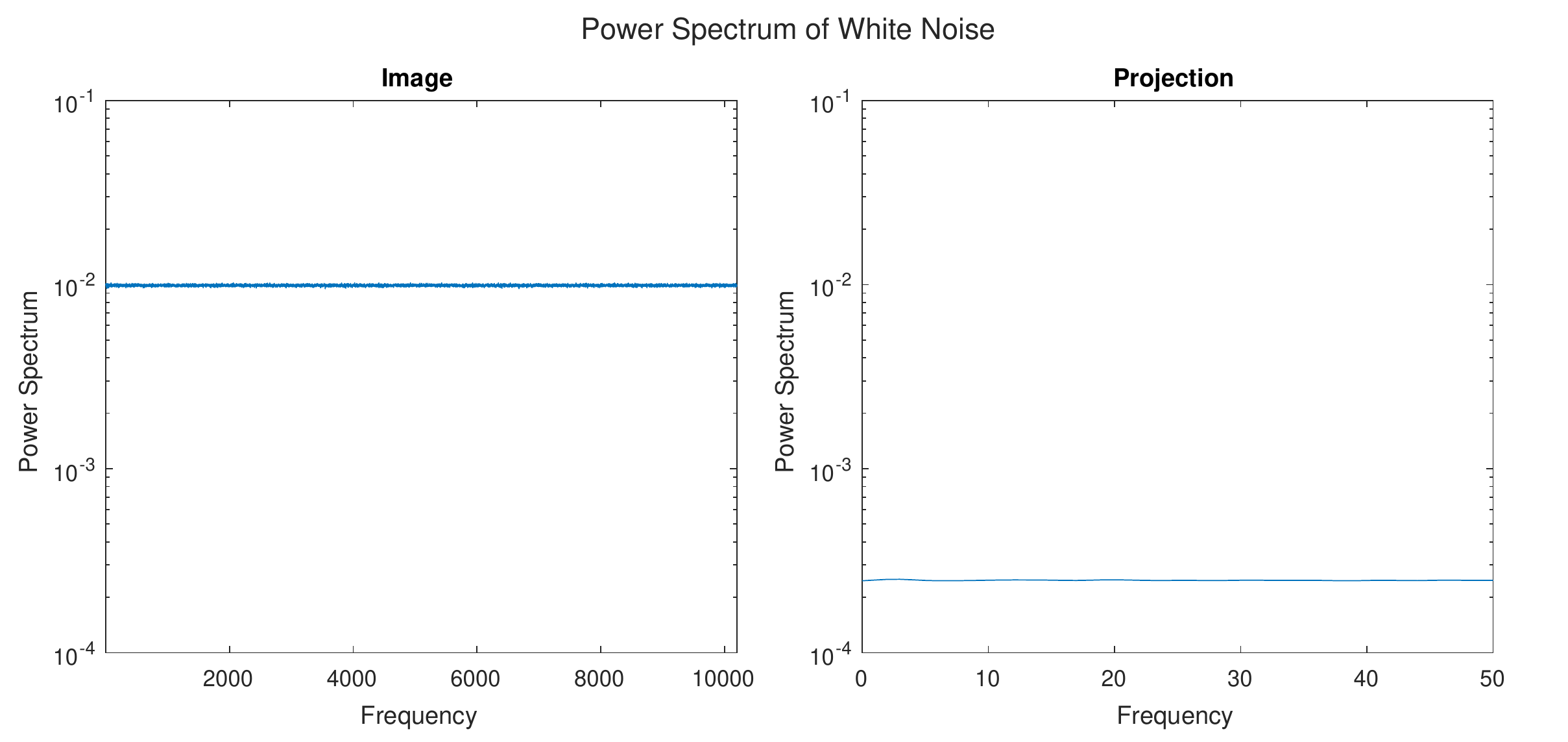}
		\end{center}
		\caption{LEFT: the power spectrum of a white noise image, averaged over $ 10^4 $ noise realizations. The frequencies are ordered lexicographically.
			RIGHT: the power spectrum \eqref{eq:PowerSpectrum} of the corresponding projection of the image up to bandlimit $ 50 $, averaged over the same $ 10^{4} $ noise realizations projected onto the sphere.}
		\label{fig:PowerSpectrumTest}
	\end{figure}

	\section*{Acknowledgments}
	We thank Nicolas Boumal for his major contribution to this project's conception and first steps; this project is also his brainchild.
	This work was funded by NSF-BSF award 2019752. 
	N.S is also partially supported by the BSF grant no. 2018230. 
	T.B. is also partially supported by the Zimin Institute for Engineering Solutions Advancing Better Lives, the BSF grant no. 2020159, and the ISF grant no. 1924/21.
	
	\printbibliography
	
\end{document}